\numberwithin{equation}{section}
\newtheorem{thm}[equation]{Theorem}
\newtheorem{lem}[equation]{Lemma}
\newtheorem{defn}[equation]{Definition}
\newcounter{mycount}
\newenvironment{romlist}{\begin{list}{\rm(\roman{mycount})}%
   {\usecounter{mycount}\labelwidth=1cm\itemsep 0pt}}{\end{list}}
\newcommand{\tr}{\operatorname{tr}}
\newcommand{\uinvnorm}{|\kern-2pt|\kern-2pt|}
\def\1{\mbox{1\hskip-.25em l}}
\newcommand{\beq}{\begin{equation}}
\newcommand{\eeq}{\end{equation}}
\newcommand{\ZZ}{\mathbb{Z}}
\theoremstyle{plain}
\theoremstyle{definition}
\theoremstyle{remark}
\def\esssup{\mathrm{ess\,sup}\,}
\def\anb{a_{n,\b}}
\def\amb{a_{m,\b}}
\def\sF{\mathcal F}
\def\sG{\mathcal G}
\def\sC{\mathcal C}
\def\sH{\mathcal H}
\def\sT{\mathcal T}
\def\qq{\qquad}
\def\q{\quad}
\def\a{\alpha}
\def\b{\beta}
\def\de{\delta}
\def\lam{\lambda}
\def\th{\theta}
\def\s{\sigma}
\def\psidm{\psi_m}
\def\psidz{\psi_0}
\def\rc{random-cluster}
\def\eps{\epsilon}
\def\g{\gamma}
\def\Ga{\Gamma}
\def\Si{\Sigma}
\def\rc{random-cluster}
\def\ZZ{{\mathbb Z}}
\def\RR{{\mathbb R}}
\def\PP{{\mathbb P}}
\def\II{{\mathbb I}}
\def\CC{{\mathbb C}}
\def\PPP{\ol\PP} %^{\mathrm{P}}}
\def\ZR{\ZZ\times\RR}
\def\Plb{\PP_{\lam,\de}}
\def\PLlb{\PP_{\La,\lam,\de}}
\def\PLlbq{\PP_{\La,\lam,\de,q}}
\def\PLmblb{\PP_{\Lamb,\lam,\de}}
\def\Plbq{\PP_{\lam,\de,q}}
\def\Om{\Omega}
\def\OmL{\Omega_\La}
\def\Ommb{\Om_{m,\b}}
\def\Omnb{\Om_{n,\b}}
\def\om{\omega}
\def\De{\Delta}
\def\Smb{\Sigma_{m,\b}}
\def\th{\theta}
\def\eps{\epsilon}
\def\La{\Lambda}
\def\Lamb{\La_{m,\b}}
\def\Lanb{\La_{n,\b}}
\def\oo{\infty}
\def\lest{\le_{\mathrm {st}}}
\def\thetac{\theta_{\mathrm{c}}}
\def\kp{k^{\mathrm p}}
\def\kw{k^{\mathrm w}}
\def\kpp{k^{\mathrm {pp}}}
\def\pp{{\mathrm {pp}}}
\def\kpw{k^{\mathrm {pw}}}
\def\gest{\ge_{\mathrm {st}}}
\def\be{\begin{equation}}
\def\ee{\end{equation}}
\def\sm{\setminus}
\def\resp{respectively}
\def\pd{\partial}
\def\pdh{\pd^{\mathrm h}}
\def\lra{\leftrightarrow}
\def\nlra{\nleftrightarrow}
\def\es{\varnothing}
\def\phm{\phi_m}
\def\phL{\phi_\La}
\def\phmb{\phi_{m,\b}}
\def\pn{\phi_n}
\def\phnb{\phi_{n,\b}}
\def\ol#1{\overline{#1}}
\def\wh#1{\widehat{#1}}
\def\ophi{\ol\phi}
\def\O{{\mathrm O}}
\def\o{{\mathrm o}}
\def\Dt{\Delta t}
\begin{document}
\title{Entanglement in the quantum Ising model}
\author{Geoffrey R.\ Grimmett\footnote{Centre for Mathematical Sciences,
University of Cambridge, Wilberforce Road, Cambridge CB3 0WB, UK},
Tobias J.\ Osborne\footnote{Department of Mathematics, Royal
Holloway, University of London, Egham, Surrey TW20 0EX, UK},\\
 Petra F.\ Scudo\footnote{Scuola Internazionale Superiore di Studi Avanzati,
via Beirut 2--4, 34014 Trieste, Italy; INFN, Sezione di Trieste, 
Trieste, Italy}}
   \maketitle

\begin{abstract}
We study the asymptotic scaling of the entanglement of a block of
spins for the ground state of the one-dimensional quantum Ising
model with transverse field. When the field is
sufficiently strong, the entanglement grows at most logarithmically
in the number of spins. The proof utilises a transformation to a
model of classical probability called the continuum \rc\ model, and
is based on a property of the latter model termed ratio weak-mixing.
Our proof applies equally to a large class of disordered
interactions.
\end{abstract}

\section{The quantum Ising model}

The quantum Ising model in a transverse magnetic field is one of the
most famous examples of exactly solvable one-dimensional quantum
models. The solution was first given by Pfeuty in \cite{pfeuty},
based on earlier works by Lieb, Schultz, and Mattis \cite{lieb} and
by McCoy \cite{mccoy}. The diagonalisation of the Hamiltonian and
the determination of the energy eigenstates is based on methods
developed by Jordan and Wigner \cite{jw} in the theory of second
quantisation of fermion fields, and by Bogoliubov \cite{bog} in the
theory of superconductivity. This model exhibits a second-order
phase transition in the ground state when the temperature of the
system is zero. The existence of the phase transition and the
computation of the spin--spin correlation functions were studied in
\cite{pfeuty}; rigorous results for the correlation functions in the
presence of disorder are provided in \cite{AKN,CKP}.

Quantum systems, unlike classical systems, can support composite
pure states for which it is impossible to assign a definite state to
two or more subsystems. States with this property are known as
entangled states and have attracted a great deal of interest
recently due to their resource-like properties. The investigation of
the entanglement properties of strongly interacting quantum spin
systems, with a view toward quantum phase transitions, was initiated
by Osterloh et al.\ \cite{fazio} and by Osborne and Nielsen
\cite{on} (see, for example, \cite{amico} and the references therein for
further studies). It is now understood that the strength of quantum
entanglement is related to the number of parameters required to
describe a quantum state classically. Thus, for 1D systems, the
scaling of the \emph{geometric entropy} --- the degree of
entanglement of a distinguished subsystem with respect to the rest
---  has emerged as the crucial parameter which quantifies whether
the state is hard or easy to simulate \cite{verstraete}. It has been
conjectured that the entropy of entanglement obeys an area law,
scaling as the boundary area in the subcritical phase, with a
possible logarithmic correction for the critical phase. There is a
paucity of rigorous results concerning the scaling of the
entanglement of a block for the quantum Ising model; the above
results are typically obtained by numerical calculations, or
conformal field theory methods \cite{amico}. There are some
rigorous derivations of the scaling of the entropy function for
certain 1D spin models (specialised essentially to the $XY$ model),
see \cite{amico} for further references.

In this paper, we utilise a new method for studying the
entanglement properties of the quantum Ising model. This is
based on a representation formulated by
Aizenman, Klein, and Newman \cite{AKN} 
of the model in terms of a continuum random-cluster
model on a certain space--time graph. (See also
the earlier paper \cite{CKP}.) Using a technique termed
ratio weak-mixing, developed by Alexander \cite{Al1,Al2} for
random-cluster and Potts models on discrete lattices, we prove a bound on the
entanglement entropy in the subcritical regime, when the magnetic
field intensity is strong compared to the spin coupling.

The quantum Ising model is defined as follows.
 Let $L \ge 0$. For $m \ge 0$, let
$\Delta_m= \{-m,-m+1,\dots, m+L\}$ be a subset of the
one-dimensional lattice $\ZZ$, and attach to each vertex $x\in
\Delta_m$ a quantum spin-$\frac12$ with local Hilbert space
$\CC^2$. The Hilbert space $\mathcal{H}$ for the system is
$\mathcal{H} = \bigotimes_{x=-m}^{m+L} \CC^2$. A convenient basis
for each spin is provided by the two eigenstates
$|+\rangle=\left(\begin{matrix} 1\\0\end{matrix}\right)$,
$|-\rangle=\left(\begin{matrix}0\\1\end{matrix}\right)$, of the Pauli operator
$$
\sigma^{(3)}_x = \left(
\begin{array}{cc} 1 & 0
\\ 0 & -1\end{array} \right),
$$
at the site $x$, corresponding to the eigenvalues $\pm 1$.
The other two Pauli operators with respect
to this basis are represented by the matrices
\begin{equation}
\sigma^{(1)}_x= \left( \begin{array}{cc} 0 & 1 \\ 1 & 0\end{array}
\right), \qquad \sigma^{(2)}_x= \left ( \begin{array}{cc} 0& -i \\
i & 0\end{array}\right).
\end{equation}
A complete basis for $\mathcal{H}$ is given by the tensor products
(over $x$) of the eigenstates of $\sigma^{(3)}_x$. In
the following, $|\phi\rangle$ denotes a vector and $\langle \phi|$
its adjoint. As a notational convenience in this paper, we shall represent sub-intervals
of $\ZZ$ as real intervals, writing for example $\De_m=[-m, m+L]$.

The spins in $\Delta_m$ interact via the quantum Ising Hamiltonian
\begin{equation}\label{ham}
H_{m} = -\frac{1}{2} \sum_{\langle x,
y\rangle}\lambda_{x,y}\sigma^{(3)}_x
 \sigma^{(3)}_y -  \sum_{x} \delta_x\sigma^{(1)}_x,
\end{equation}
generating the operator $e^{-\b H_m}$ where $\b$ denotes inverse
temperature. Here,  $\lambda_{x,y}\geq 0$ and $\delta_x\geq 0$ are
the spin-coupling and external-field intensities, respectively, and
$\sum_{\langle x, y\rangle}$ denotes a sum over all (distinct)
unordered pairs of spins. We concentrate here on the
case of interactions between neighbouring spins: $\lambda_{x,y} = 0$
for $|x-y|\ge 2$. While we shall phrase our results for the
translation-invariant case $\lambda_{x,x+1} = \lambda$ and $\delta_x
= \delta$, our approach can be extended to random couplings
satisfying the condition
\begin{equation}
\mathbb{P}(\lambda_{x,y} < \lambda) = \mathbb{P}(\delta_x > \delta)
= 1,
\end{equation}
with $\theta\equiv \lambda/\delta$ a sufficiently small constant
(see Section \ref{sec:disorder}).
The ensuing Hamiltonian has a unique pure ground state $|\psidm
\rangle$ defined at $T=0$ ($\b\to\oo$) as the eigenvector
corresponding to the lowest eigenvalue of $H_m$. In the
translation-invariant case the ground state $|\psidm\rangle$ depends
only on the ratio $\theta$.

For definiteness, we shall work here with a free boundary condition
on $\De_m$, but we note that the same methods are valid with a
periodic (or wired) boundary condition, in which $\De_m$ is embedded
on a circle. One difference worthy of note is that the correlation
functions of the \emph{critical} model are expected to depend on the choice
of boundary conditions, see \cite{pfeuty}.

We write $\rho_m(\b)=e^{-\b H_m}/\tr(e^{-\b H_m})$, and
$$
\rho_m=\lim_{\b\to\oo}\rho_m(\b) =|\psidm \rangle\langle\psidm|
$$
for the density operator corresponding to the ground state of the
system. The existence of the limit follows by \rc\ methods, see
\cite{AKN}, and we return to this in Section \ref{sec:rcrep}. The
ground-state entanglement of $|\psidm\rangle$ is quantified by
partitioning the spin chain $\Delta_m$ into two disjoint sets $[0,
L]$ and $\Delta_m\setminus [0, L]$ and by considering the entropy of
the \emph{reduced density operator} \be\label{reddo} \rho_m^L =
\tr_{\Delta_m\setminus [0, L]}(|\psidm\rangle\langle \psidm|). \ee
One may similarly define, for finite $\b$, the reduced operator
$\rho_m^L(\b)$. In both cases, the trace is performed over the
Hilbert space $(\bigotimes_{x=-m}^{-1}\mathbb{C}^2)\otimes(
\bigotimes_{x=L+1}^{m+L} \CC^2)$ of the spins belonging to
$\De_m\sm[0,L]$. Note that $\rho_m^L$ is a positive semi-definite
operator on the Hilbert space $\sH_L$ of dimension $d= 2^{L+1}$ of
spins indexed by the interval $[0, L]$. By the spectral theorem for
normal matrices \cite{bhatia}, this operator may be diagonalised and
has real, non-negative eigenvalues, which we denote
$\lambda_j^{\downarrow}(\rho_m^L)$. The arrow indicates that the
eigenvalues are arranged in decreasing order.

\begin{defn}\label{ent}
The \emph{entanglement} of the interval
$[0,L]$ relative to its complement $\Delta_m \setminus [0, L]$ is given by
\begin{equation}\label{entdef}
S(\rho^L_m) = -\tr(\rho_m^L \log_2 \rho_m^L) .
\end{equation}
\end{defn}

This quantity may be
expressed thus in terms of the eigenvalues of $\rho_m^L$:

\begin{equation}\label{diagent}
S(\rho^L_m) =
-\sum_{j=1}^{2^{L+1}} \lambda_j^{\downarrow}(\rho_m^L)\log_2
\lambda_j^{\downarrow}(\rho_m^L),
\end{equation}
where $0 \log_2 0$ is interpreted as $0$.

In Section \ref{sec:entrs}, we prove our main theorem: the order of the entanglement
scaling is at most $\log_2 L$ for the ground state in the subcritical regime. This result
follows as a corollary of the main estimate, given by Theorem
\ref{mainest}, in Section \ref{be}. In Sections \ref{sec:rcrep}--\ref{rcred}, we describe the
mapping of the density operator of the quantum Ising model to a
stochastic integral in terms of a Poisson measure (as in \cite{AKN}). The mapping
begins by considering states with $\b < \oo$ and deriving the
ground state in the limit $\b\to\oo$. This allows us to express the
matrix elements of the ground state in terms of a continuous percolation model on
a two-dimensional space--time graph, with one continuous axis
describing time. In this setting, the elements of the reduced
state are related to a random-cluster model on the same graph, but
with the addition of a `slit' along the interval $[0, L]$ at time $0$. The continuum
random-cluster model is presented in detail in Section \ref{rc}.
Section \ref{be} contains the main result, which allows us to establish the
scaling of the entanglement entropy, while in Section \ref{rwm} we explain
the technique of ratio-weak mixing on which the proof is based.
The extension of our results to disordered systems is discussed in
Section \ref{sec:disorder}.

\section{Entropy of the reduced state}\label{sec:entrs}

In this section, we study the behaviour of the entropy of the
reduced state $\rho_m^L$ in the subcritical regime (with $\theta=\lam/\de$
small). In order to derive an adequate upper bound on the entropy,
we shall analyze the influence on the spectrum of the reduced
density operator produced by imposing a change in the boundary
conditions of the spin chain. Specifically, we consider the
distance between the largest eigenvalues of two states defined on
$[0, L]$ with respect to two different lattices, $\Delta_m$,
$\Delta_n$, with $m\leq n$. The entropy will be estimated by
studying the operator norm \beq\label{sup} \| \rho^L_m- \rho^L_n
\|\equiv \sup_{\|\psi\|=1} \big| \langle \psi |\rho^L_m-
\rho^L_n|\psi\rangle\big|, \eeq where the supremum is taken over
all vectors $|\psi\rangle \in \sH_L$ with unit $L^2$-norm
belonging to the Hilbert space $\sH_L$ of spins in $[0, L]$. We
shall see in Section \ref{rcred} that $\|\rho_m^L - \rho_n^L\|$
may be expressed in terms of a certain \rc\ representation of the
quantum Ising model. In Sections \ref{be} and \ref{rwm} we shall use
a coupling of random-cluster measures and the method of `ratio
weak-mixing' to prove the following.

\begin{thm}\label{mainest2}
Let $\lam,\de\in(0,\oo)$ and write $\th = \lam/\de$.
There
exist constants $\a,C \in(0,\oo)$ depending on $\th$ only, 
and a constant $\g=\g(\th)$ satisfying
$0<\g<\oo$ if $\th<1$, such that, for all $L\ge 1$,
\begin{equation}\label{eq:rcbound}
\|\rho_m^L-\rho_{n}^L\| \le \min\{2, C L^\alpha e^{-\gamma m}\},
\qquad 2\le m\le n.
\end{equation}
Furthermore, we may find such $\g$ satisfying $\g\to\oo$ as $\th\downarrow 0$.
\end{thm}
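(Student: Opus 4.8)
The plan is to reduce the operator-norm bound \eqref{eq:rcbound} to a statement purely about the continuum random-cluster model on the space--time graph, and then to extract the exponential-in-$m$ decay from the ratio weak-mixing property of that model. Concretely, I would first invoke the representation developed in Sections~\ref{sec:rcrep}--\ref{rcred} to write the matrix elements of $\rho_m^L$ (and of $\rho_m^L(\b)$ for finite $\b$, passing to the limit $\b\to\oo$ at the end) in terms of partition-function ratios of the continuum random-cluster measure on $\De_m\times\RR$ with a slit along $[0,L]\times\{0\}$. The key point is that $\De_m$ and $\De_n$ differ only outside a fixed window, so $\rho_m^L-\rho_n^L$ measures the effect on the region $[0,L]\times\{0\}$ of changing the configuration of the graph at spatial distance at least $m$ away.

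**Next I would** set up a coupling of the two random-cluster measures $\phi_{\De_m}$ and $\phi_{\De_n}$ (with $q=2$, the Ising value) that agrees on the slab near $[0,L]$ with high probability; the discrepancy between the coupled configurations in the window is controlled by the probability that an open path, or more precisely a cluster relevant to the ratio of partition functions, reaches from the neighbourhood of $[0,L]$ out to spatial distance $m$. In the subcritical regime $\th<1$ the continuum random-cluster model has exponentially decaying connectivities (this is where the smallness of $\th$, equivalently the strength of the transverse field, enters), so such a long connection has probability at most $C'e^{-\g m}$ for some $\g=\g(\th)>0$, with $\g\to\oo$ as $\th\downarrow 0$ since the effective edge-intensity of the model tends to zero. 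Ratio weak-mixing, in the form established by Alexander \cite{Al1,Al2} and adapted to the continuum setting, is precisely the tool that converts this connectivity decay into a bound on the ratio of constrained-to-unconstrained partition functions, uniformly over the boundary data being varied.

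**The factor $L^\a$** arises because the observable lives on the interval $[0,L]\times\{0\}$, whose `size' is of order $L$: the weak-mixing estimate produces an error of the form (boundary size of the observed region) $\times$ (decay over distance $m$), and one must sum the single-point influence bounds over the $O(L)$ sites (or over a mesh of the continuum interval) of the slit. Summing $O(L)$ terms each of order $e^{-\g m}$, possibly with a polynomial loss from the weak-mixing constants, yields the prefactor $CL^\a$; alternatively one controls $\|\rho_m^L-\rho_n^L\|$ by the trace norm, hence by $d=2^{L+1}$ times the maximal matrix-element discrepancy, but that is too lossy, so the more careful route through the operator norm and a telescoping over intermediate lattices $\De_m=\De_{k_0}\subseteq\cdots\subseteq\De_n$ is preferable. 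The trivial bound $\|\rho_m^L-\rho_n^L\|\le\|\rho_m^L\|+\|\rho_n^L\|\le 2$ gives the $\min\{2,\cdot\}$.

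**The main obstacle** I anticipate is transferring Alexander's ratio weak-mixing machinery, which was developed for random-cluster and Potts models on discrete lattices, to the continuum random-cluster model on $\ZZ\times\RR$: one must check that the requisite finite-energy, monotonicity, and exponential-decay inputs hold in the continuum (with Poisson-process `deaths' on the time axis and `bridges' between neighbouring lines), and that the combinatorial estimates bounding ratios of partition functions by sums of connectivity probabilities go through with the continuous time direction handled correctly. A secondary technical point is the interchange of the $\b\to\oo$ limit with the random-cluster estimates, i.e.\ verifying that the bounds are uniform in $\b$ so that they survive in the ground state; this should follow from monotonicity in $\b$ of the relevant measures, as in \cite{AKN}.
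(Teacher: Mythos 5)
Your plan matches the paper's proof: Theorem~\ref{mainest2} is reduced to the random-cluster estimate of Theorem~\ref{mainest}, which is proved exactly by coupling $\phi_{m,\beta}$ and $\phi_{n,\beta}$, controlling the discrepancy by the probability that a cluster reaches out to distance $m$ (Theorem~\ref{contperc}), and using ratio weak-mixing (Lemmas~\ref{lem1}--\ref{lem2}) to compare constrained and unconstrained measures, with the $\beta\to\infty$ limit handled as in \cite{AKN}. The only small inaccuracy is your account of the $L^\alpha$ prefactor: in the paper it comes partly from the perimeter of the separating box and partly from a finite-energy peeling of $O(\log L)$ spins near the ends of the slit in Lemma~\ref{lem1}, rather than from summing single-site influences over $O(L)$ sites, and the telescoping over intermediate lattices you mention belongs to the proof of Theorem~\ref{entest}, not to this bound.
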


\begin{proof}
That $\|\rho_m^L-\rho_{n}^L\| \le 2$ is a consequence of the fact that the $\rho_m^L$ are
density operators. An upper bound of the form $C' L^\alpha e^{-\gamma m}$
holds by Theorem \ref{mainest} and the preceding discussion whenever
$m \ge M$ for suitable $M=M(\th)$. Inequality \eqref{eq:rcbound}
follows on replacing $C'$ by $C=e^{\g M}\max\{C',2\}$.
\end{proof}

We shall apply \eqref{eq:rcbound} iteratively in order to obtain
an upper bound for the decay of the vector of eigenvalues
$\{\lambda_j^{\downarrow}(\rho_m^L): j=1,2,\dots\}$, valid for all large $m$.
The proof makes use of the following decomposition property, valid
for any pure state of a bipartite system, see \cite{nc}.

\begin{thm}[Schmidt decomposition]\label{thm:schmidt}
Let $|\psidm\rangle$ be the pure ground state of the composite
system $[0, L] \cup (\Delta_m\setminus [0, L])$. There exist
orthonormal bases $\{|u_j\rangle_{[0, L]},
|v_k\rangle_{\Delta_m\setminus [0, L]}\}$ for the states of $[0, L],
\Delta_m\setminus [0, L]$ respectively, such that
\begin{equation}
|\psidm\rangle = \sum_{j=1}^{s}
\sqrt{\lambda_j^{\downarrow}(\rho_m^L)}\,|u_j\rangle_{[0,
L]}|v_j\rangle_{\Delta_m\setminus [0, L]},
\end{equation}
where $s$, the \emph{Schmidt rank}, is given by $s = \min\{2^{L+1},
2^{2m}\}$.
\end{thm}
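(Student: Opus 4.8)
The plan is to reduce the statement to the singular value decomposition of a rectangular matrix, which is the standard route (as cited to \cite{nc}). First I would fix orthonormal bases $\{|a_i\rangle : 1\le i\le 2^{L+1}\}$ of $\sH_L$ and $\{|b_k\rangle : 1\le k\le 2^{2m}\}$ of the Hilbert space of the spins in $\Delta_m\setminus[0,L]$, and expand the ground state as $|\psidm\rangle = \sum_{i,k} M_{ik}\,|a_i\rangle\,|b_k\rangle$, thereby encoding $|\psidm\rangle$ as a $2^{L+1}\times 2^{2m}$ complex matrix $M=(M_{ik})$. Normalisation of $|\psidm\rangle$ is the statement $\tr(MM^\dagger)=1$.

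Next I would apply the singular value decomposition to $M$, writing $M = U\Sigma W^\dagger$ with $U$, $W$ unitary of sizes $2^{L+1}$ and $2^{2m}$ respectively, and $\Sigma$ a $2^{L+1}\times 2^{2m}$ matrix whose only nonzero entries are the singular values $\sigma_1\ge\sigma_2\ge\cdots\ge 0$ on the main diagonal. Setting $|u_j\rangle = \sum_i U_{ij}|a_i\rangle$ and $|v_j\rangle = \sum_k \overline{W_{kj}}\,|b_k\rangle$ produces orthonormal families, orthonormality being inherited from unitarity of $U$ and $W$; substituting back and using $M_{ik}=\sum_j U_{ij}\sigma_j \overline{W_{kj}}$ gives $|\psidm\rangle = \sum_j \sigma_j|u_j\rangle|v_j\rangle$, where $j$ ranges over $1\le j\le \min\{2^{L+1},2^{2m}\}$ (adjoining vanishing singular values if the rank of $M$ is strictly smaller, which is what lets us take $s$ equal to the stated value rather than merely at most it).

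It then remains to identify $\sigma_j^2$ with $\lambda_j^{\downarrow}(\rho_m^L)$. Tracing out $\Delta_m\setminus[0,L]$ in $|\psidm\rangle\langle\psidm|$ and using $\langle v_j|v_k\rangle=\delta_{jk}$ gives $\rho_m^L = \sum_j \sigma_j^2\,|u_j\rangle\langle u_j|$, which, since the $|u_j\rangle$ are orthonormal, is a spectral decomposition of $\rho_m^L$. Hence the numbers $\sigma_j^2$ are exactly the eigenvalues of $\rho_m^L$; relabelling in decreasing order yields $\sigma_j = \sqrt{\lambda_j^{\downarrow}(\rho_m^L)}$. For the Schmidt rank, one uses that $\Delta_m\setminus[0,L]$ has $2m$ sites while $[0,L]$ has $L+1$ sites, so the two factor spaces have dimensions $2^{L+1}$ and $2^{2m}$, and $M$ has at most $\min\{2^{L+1},2^{2m}\}$ nonzero singular values.

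There is no genuine obstacle: the result is entirely standard. The only points needing care are the placement of the complex conjugate in the definition of $|v_j\rangle$ so that the expansion of $|\psidm\rangle$ emerges with non-negative real coefficients, and the remark that one may always pad the sum with zero Schmidt coefficients up to the index $s=\min\{2^{L+1},2^{2m}\}$.
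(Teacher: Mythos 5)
Your proposal is correct and follows essentially the same route as the paper: expand the ground state in a product basis, regard the coefficients as a $2^{L+1}\times 2^{2m}$ matrix, apply the singular value decomposition, read off the orthonormal Schmidt bases from the (isometry/unitary) factors, and identify the squared singular values with the eigenvalues of $\rho_m^L$ by tracing out the complement. The only cosmetic difference is that the paper writes the SVD in "thin" form with rectangular isometries $U_{\alpha j}$, $V_{j\beta}$ rather than with square unitaries and an explicit conjugate, which sidesteps the conjugation bookkeeping you flag but is otherwise the same argument.
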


\begin{proof}
We begin by writing $|\psi_m\rangle$ in terms of an orthonormal
basis $|\alpha\rangle_{[0, L]}|\beta\rangle_{\Delta_m\setminus [0,
L]}$ where $|\alpha\rangle_{[0, L]}$ (respectively,
$|\beta\rangle_{\Delta_m\setminus [0, L]}$) is an orthonormal basis
for the spins in $[0, L]$ (respectively, $\Delta_m\setminus [0,
L]$):
$$
|\psi_m\rangle = \sum_{\alpha=1}^{2^{L+1}}\sum_{\beta=1}^{2^{2m}}
\psi_{\alpha\beta}^{[m]}\, |\alpha\rangle_{[0, L]}
|\beta\rangle_{\Delta_m\setminus [0,L]},$$ where
$$
\sum_{\alpha=1}^{2^{L+1}}\sum_{\beta=1}^{2^{2m}}|\psi_{\alpha\beta}^{[m]}|^2
= 1.
$$

The coefficients $\psi_{\alpha\beta}^{[m]}$ constitute a
$2^{L+1}\times 2^{2m}$ matrix and, as such, we can apply the
singular-value decomposition \cite{bhatia} to write
\begin{equation*}
\psi_{\alpha\beta}^{[m]} = \sum_{j = 1}^{s} U_{\alpha j} d_{j}
V_{j\beta},
\end{equation*}
where $s = \min\{2^{L+1}, 2^{2m}\}$, $U_{\alpha j}$ is a
$2^{L+1}\times s$-sized isometry, $d_j \ge 0$ for all $j = 1, 2,
\ldots, 2^{L+1}$, and $V_{j\beta}$ is an $s\times 2^{2m}$-sized
isometry. Defining
\begin{equation*}
|u_j\rangle_{[0, L]} = \sum_{\alpha = 1}^{s} U_{\alpha j}
|\alpha\rangle_{[0, L]},
\quad
|v_j\rangle_{\Delta_m\setminus [0, L]} = \sum_{\beta = 1}^{s}
V_{j\beta } |\beta\rangle_{\Delta_m\setminus [0, L]},
\end{equation*}
we see, because $U$ and $V$ are isometries, that $\{|u_j\rangle_{[0,
L]}\}$ and $\{|v_j\rangle_{\Delta_m\setminus [0, L]}\}$ are
orthonormal sets of vectors for the spins in $[0, L]$ and
$\Delta_m\setminus [0, L]$, respectively.

A simple computation shows that the reduced density operator
$\rho_m^L$ for the spins in $[0, L]$ is given by
\begin{equation}
\rho_m^L = \sum_{j=1}^{s} d_j^2 |u_j\rangle_{[0, L]}\langle u_j|
\end{equation}
and so we identify $d_j=\sqrt{\lambda_j^{\downarrow}(\rho_m^L)}$,
after re-ordering the index $j$ if necessary. Note that the rank of
$\rho_m^L$ is less than or equal to the Schmidt rank of
$|\psi_m\rangle$.
\end{proof}

We compute the entanglement of $[0, L]$ with respect to the rest
of the system as in (\ref{diagent}),
\begin{equation}
S(\rho_m^L) = -\sum_{j=1}^{s} \lambda_j^{\downarrow}(\rho_m^L)\log_2
\lambda_j^{\downarrow}(\rho_m^L).
\end{equation}
Here is our main theorem. 
With the exception of the natural logarithm function
$\ln$, all logarithms in the remainder of this section are taken to base 2.

\begin{thm}\label{entest}
Consider the quantum Ising model \eqref{ham} on $n = 2m+L+1$ spins, with
parameters $\lam$, $\de$, and let $\g$, $\a$, $C$ be as in Theorem \ref{mainest2}.
If
$\gamma > 4\ln 2 $, there exist constants $c_1$ and
$c_2$ depending on $\g$ only such that
\begin{equation}\label{eq:entropybound}
S(\rho_m^L) \le c_1 \log_2 L + c_2,\qquad
m\ge 0.
\end{equation}
\end{thm}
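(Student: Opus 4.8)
The plan is to derive the entropy bound by controlling the eigenvalue vector of $\rho_m^L$ through the telescoping sum $\rho_m^L = \rho_0^L + \sum_{k=1}^{m}(\rho_k^L - \rho_{k-1}^L)$ and the operator-norm estimate of Theorem~\ref{mainest2}. First I would recall that $\|\rho_k^L - \rho_{k-1}^L\| \le \min\{2, C L^\a e^{-\g(k-1)}\}$ from \eqref{eq:rcbound}, so summing the tail from some index onward gives $\|\rho_m^L - \rho_K^L\| \le C' L^\a e^{-\g K}$ for a suitable constant, uniformly in $m \ge K$. By Weyl's perturbation inequality for eigenvalues of Hermitian matrices (or the Lidskii/Mirsky inequality), this forces $|\lambda_j^{\downarrow}(\rho_m^L) - \lambda_j^{\downarrow}(\rho_K^L)| \le \|\rho_m^L - \rho_K^L\|$ for every $j$. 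The key point is to choose $K = K(j)$ depending on the eigenvalue index: since $\rho_K^L$ has rank at most $\min\{2^{L+1}, 2^{2K}\}$ by the Schmidt decomposition (Theorem~\ref{thm:schmidt}), the eigenvalue $\lambda_j^{\downarrow}(\rho_K^L)$ vanishes once $2^{2K} < j$, i.e. once $K < \tfrac12\log_2 j$. Hence, taking $K \approx \tfrac12\log_2 j$, we obtain $\lambda_j^{\downarrow}(\rho_m^L) \le C' L^\a e^{-\g K} \approx C' L^\a j^{-\g/(2\ln 2)}$ for all $m$.

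Next I would feed this pointwise bound on the eigenvalues into the entropy formula $S(\rho_m^L) = -\sum_{j=1}^{s}\lambda_j^{\downarrow}\log_2\lambda_j^{\downarrow}$. Splitting the sum at a threshold $j_0$, the ``small-eigenvalue'' part $\sum_{j > j_0}$ is controlled using $-x\log_2 x \le x \cdot \log_2(1/x)$ together with the decay $\lambda_j^{\downarrow} \le C' L^\a j^{-\g/(2\ln 2)}$: because the exponent $\g/(2\ln 2) > 2$ under the hypothesis $\g > 4\ln 2$, the series $\sum_j j^{-\g/(2\ln 2)}(\log_2 L + \log_2 j)$ converges, contributing $O(\log_2 L)$. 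For the ``large-eigenvalue'' part $\sum_{j \le j_0}$, I would use the crude bound $-x\log_2 x \le $ (constant) valid for $x \in [0,1]$, together with the fact that only $O(\log_2 L)$ indices $j$ can have $\lambda_j^{\downarrow}$ bounded below by a fixed positive constant — again because $\lambda_j^{\downarrow} \le C' L^\a j^{-\g/(2\ln 2)}$, so $\lambda_j^{\downarrow} \ge \eps$ forces $j \le (C' L^\a/\eps)^{2\ln 2/\g}$, which is polynomial in $L$ and hence contributes $O(\log_2 L)$ after taking logs. Collecting the two contributions yields $S(\rho_m^L) \le c_1\log_2 L + c_2$ with constants depending only on $\g$ (through $C'$, $\a$, and the convergence rate).

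The main obstacle I anticipate is the bookkeeping in the ``index-dependent $K$'' step: one must verify that the choice $K = \lceil \tfrac12\log_2 j\rceil$ is legitimate (in particular that $K \le m$, or else handle $j > 2^{2m}$ trivially since $\lambda_j^{\downarrow}(\rho_m^L) = 0$ there by the Schmidt rank bound $s = \min\{2^{L+1},2^{2m}\}$), and that the resulting bound $\lambda_j^{\downarrow}(\rho_m^L) \le C' L^\a e^{-\g K}$ is applied only in the regime where \eqref{eq:rcbound} is the active branch of the minimum rather than the trivial bound $2$. A secondary technical point is the requirement $\g > 4\ln 2$: this is precisely the condition making $\g/(2\ln 2) > 2$, which is what guarantees absolute convergence of $\sum_j j^{-\g/(2\ln 2)}\log_2 j$; if $\g$ were smaller the tail sum would diverge and the argument would produce a worse (e.g. power-law) entropy bound. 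Everything else is a routine estimate once these bounds are in place, and the constants $c_1, c_2$ can be read off as explicit functions of $\g$ alone since $\a$ and $C$ are themselves functions of $\th$ and enter only through finitely many logarithmic terms.
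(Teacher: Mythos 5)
Your proposal is essentially the same approach as the paper's: fix a cutoff scale, use the Weyl perturbation inequality together with the Schmidt rank bound $\operatorname{rank}\rho_K^L \le 2^{2K}$ to derive a power-law tail $\lambda_j^\downarrow(\rho_m^L) \lesssim L^\alpha j^{-\gamma/(2\ln 2)}$, and then split the entropy sum at a threshold. The paper fixes $K=\lceil\gamma^{-1}\ln(CL^\alpha)\rceil$ globally and tracks each eigenvalue index through a one-step-at-a-time iteration; you choose $K=K(j)$ index by index. These are interchangeable, and the bookkeeping issues you flag (ensuring $K(j)\le m$, the trivial case $j>2^{2m}$) are exactly what the paper's case split $m\le K$ versus $m>K$ is doing.

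There is, however, a genuine gap in your treatment of the large-eigenvalue sum $S_1=\sum_{j\le j_0}(-\lambda_j\log_2\lambda_j)$. You propose to ``use the crude bound $-x\log_2 x\le$ constant together with the fact that only $O(\log_2 L)$ indices $j$ can have $\lambda_j^\downarrow$ bounded below by a fixed positive constant.'' Neither half of this is right: the number of indices with $\lambda_j\ge\eps$ is at most $1/\eps$ (a constant, not $O(\log L)$) because $\sum_j\lambda_j\le 1$, while the number of indices $j\le j_0$ allowed by the tail bound is polynomial in $L$, not $O(\log L)$. If you apply a per-term constant bound to all $j\le j_0$ with $j_0\sim L^{\alpha}$ you get $S_1=O(L^\alpha)$, which is useless. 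The correct and clean way --- and the one the paper uses --- is the maximum-entropy inequality: if $(\lambda_j)_{j\le j_0}$ are non-negative with $\sum_{j\le j_0}\lambda_j\le 1$, then $-\sum_{j\le j_0}\lambda_j\log_2\lambda_j\le\log_2 j_0$, so $S_1\le\log_2 j_0=O(\log_2 L)$ directly. A secondary point: the hypothesis $\gamma>4\ln 2$ (i.e.\ $\xi=\gamma/(2\ln 2)>2$) is not there to make $\sum_j j^{-\xi}\log_2 j$ converge --- that already holds for $\xi>1$. It is needed so that the $L$-dependent prefactor $c_0'\sim L^\alpha$ in your tail bound is overpowered in $S_2\sim c_0'\,\nu^{1-\xi}\log\nu$ once $\nu\sim L^\alpha$, which requires $\xi\ge 2$. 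With the $S_1$ step replaced by the maximum-entropy inequality and this clarification about $\xi>2$, your argument matches the paper's.
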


In summary, the entanglement entropy $S(\rho_m^L)$
is at most logarithmic in $L$ if the field strength $\de$ is
sufficiently large. The bound $4\ln 2$ is sufficient but not necessary,
and may be improved with more care in the proof. We do not know how to replace this
condition by $\g>0$.

We believe that the upper bound (\ref{eq:entropybound}) is, in many
cases, not tight. For the translation-invariant subcritical case
$\theta =\lam/\de < 2$ it is expected, on physical grounds,
that the upper bound can be improved to a
constant. (See \cite{amico} and the
references therein for an extensive
review of the physical arguments for entanglement scaling in
non-critical and critical quantum spin models.) 
Renormalisation group arguments and conformal field theory
methods suggest that, at a critical point, the upper bound should
scale with $\log L$. For $\theta>2$ the system enters the
supercritical regime where the system has two degenerate ground
states, and the ground state is no longer a pure state. Nonetheless,
it is expected that the entropy of a block is again bounded by a
constant. For higher dimensions $d\ge 2$ our argument breaks down
because the number of non-zero Schmidt coefficients for a
distinguished region grows too quickly for our perturbation argument.

The proof of Theorem~\ref{entest} follows an iterative inductive
procedure, where at each step the distance $k$ from the boundary of
$[0, L]$ is increased and the spectrum of the relative density
operator $\rho_k^L$ is estimated. We illustrate the procedure by the
following simple case: consider the ground state $|\psidz \rangle$
for the Ising model defined on only $L+1$ spins. In this case the
reduced density operator $\rho_0^L$ for $[0, L]$ is exactly
$\rho_0^L = |\psidz \rangle\langle \psidz|$, i.e., a pure state,
with entropy $S(\rho_0^L) = 0$. When $m=1$, the reduced density
operator $\rho_1^L$ for the region $[0,L]$ is mixed, but it has at
most $2^2$ non-zero eigenvalues. This follows from the Schmidt
decomposition applied to the ground state $|\psi_1\rangle$ across
the bipartition $[0,L]\cup(\Delta_1\setminus [0, L])$. Thus, the
entropy of the block $[0, L]$ is bounded above by $S(\rho_1^L) \le
2$. Consider now the reduced density operator $\rho_k^L$. By the Schmidt
decomposition, the operator $\rho_k^L$ has at most $2^{2k}$ non-zero
eigenvalues. Assume that $2k < L+1$, and consider the addition of
a single spin at either boundary. The new reduced density operator $\rho_{k+1}^L$
has at most four times
as many non-zero eigenvalues as $\rho_{k}^L$. However, by \eqref{eq:rcbound},
\begin{equation}
\|\rho_k^L-\rho_{k+1}^L\| \le \min\{2, C L^\alpha e^{-\gamma k}\},
\end{equation}
so that the eigenvalues of $\rho_k^L$ remain close to those of
$\rho_{k+1}^L$.

\begin{proof}[Proof of Theorem~\ref{entest}]
Let $K=\lceil \gamma^{-1}\ln (CL^\alpha)\rceil$, with $C$, $\a$, $\g$
as in Theorem \ref{mainest2}. We shall assume that 
$m, K \ge 2$, $\g > 4\ln 2$. There are two cases, depending on
whether $m\le K$ or $m > K$. Assume first that $2\le m\le K$. The 
rank of $\rho_m^L$ equals the Schmidt rank $2^{2m}$ of $|\psi_m\rangle$.
Therefore,
$$
S(\rho_m^L) \le \sup_\rho \left\{-\sum_{j=1}^{2^{2m}} \rho_j \log \rho_j\right\},
$$
where the supremum is over all non-negative sequences 
$\rho = (\rho_j: 1\le j\le 2^{2m})$
with sum 1. Hence,
\be\label{eq:1002}
S(\rho_m^L) \le \log 2^{2m} = 2m \le 2K, \qquad m \le K.
\ee

Assume next that $m \ge K$. We shall apply the following theorem, 
see \cite{bhatia}.

\begin{thm}[Weyl perturbation theorem]\label{thm:weyl}
For Hermitian operators $A$ and $B$ on a Hilbert space
of dimension $n$,
\begin{equation}
\max_j \bigl|\lambda_j^{\downarrow}(A)-\lambda_j^{\downarrow}(B)\bigr| \le
\|A-B\|.
\end{equation}
\end{thm}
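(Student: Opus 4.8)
The plan is to derive Theorem~\ref{thm:weyl} from the Courant--Fischer min--max description of the ordered eigenvalues. Recall that for a Hermitian operator $H$ on an $n$-dimensional Hilbert space $\sH$ and $1\le j\le n$,
\[
\lambda_j^{\downarrow}(H)=\max_{\substack{V\le\sH\\ \dim V=j}}\ \min_{\substack{x\in V\\ \|x\|=1}}\ \langle x,Hx\rangle ;
\]
I would take this variational formula as known (it is itself a short orthogonality/compactness argument, see \cite{bhatia}). I would also note that, since $A-B$ is Hermitian, $\langle x,(A-B)x\rangle$ is real for every $x$, and by the definition \eqref{sup} of the norm, $|\langle x,(A-B)x\rangle|\le\|A-B\|$ whenever $\|x\|=1$.

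The core step is then essentially immediate. Put $\eps:=\|A-B\|$. For any subspace $V\le\sH$ with $\dim V=j$, writing $A=B+(A-B)$ and using $\langle x,(A-B)x\rangle\le\eps$ on unit vectors,
\[
\min_{x\in V,\,\|x\|=1}\langle x,Ax\rangle\ \le\ \min_{x\in V,\,\|x\|=1}\bigl(\langle x,Bx\rangle+\eps\bigr)\ =\ \eps+\min_{x\in V,\,\|x\|=1}\langle x,Bx\rangle .
\]
Taking the maximum over all such $V$ and invoking the min--max formula for both $A$ and $B$ gives $\lambda_j^{\downarrow}(A)\le\lambda_j^{\downarrow}(B)+\eps$. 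Interchanging the roles of $A$ and $B$ yields $\lambda_j^{\downarrow}(B)\le\lambda_j^{\downarrow}(A)+\eps$, so $|\lambda_j^{\downarrow}(A)-\lambda_j^{\downarrow}(B)|\le\eps$ for every $j$, and hence $\max_j|\lambda_j^{\downarrow}(A)-\lambda_j^{\downarrow}(B)|\le\|A-B\|$.

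There is no real obstacle here: the whole content sits in the min--max identity, and the only point needing a word of care is the elementary inequality $\min_x\bigl(f(x)+g(x)\bigr)\le\min_x f(x)+\sup_x g(x)$, applied with $f(x)=\langle x,Bx\rangle$ and $g(x)=\langle x,(A-B)x\rangle$ on the unit sphere of $V$, together with the Hermiticity of $A-B$, which makes $g$ real and bounded in modulus by $\|A-B\|$. An equivalent packaging, if one prefers to avoid the explicit min--max manipulation, is to observe that $A\preceq B+\eps I$ in the Loewner order (because $\eps I-(A-B)\succeq0$), and that $X\preceq Y$ for Hermitian $X,Y$ implies $\lambda_j^{\downarrow}(X)\le\lambda_j^{\downarrow}(Y)$ for all $j$ --- again a one-line consequence of the same min--max formula --- combined with $\lambda_j^{\downarrow}(B+\eps I)=\lambda_j^{\downarrow}(B)+\eps$.
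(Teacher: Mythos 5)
Your proof is correct. Note, though, that the paper does not actually supply a proof of Theorem~\ref{thm:weyl}: it is quoted as a known fact with a pointer to \cite{bhatia}, so there is no in-paper argument to compare against. Your derivation via the Courant--Fischer min--max characterisation is the standard textbook proof (and is essentially the one given in Bhatia), and the elementary inequality $\min_x(f+g)\le \min_x f+\sup_x g$ together with Hermiticity of $A-B$ is exactly the right justification for the key step. The alternative packaging you mention --- $A\preceq B+\|A-B\|\,I$ in the Loewner order and monotonicity of ordered eigenvalues --- is an equivalent and equally clean route; there is nothing to add or repair.
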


Let $\eps(r)= C L^\alpha e^{-\gamma (K+r)}$, and note by the definition
of $K$ that
\be\label{eq:eps}
\eps(r) \le e^{-\g r},\qquad r \ge 0.
\ee
Setting $A=\rho_K^L$, $B=\rho_{K+1}^L$ in Theorem \ref{thm:weyl}, we deduce by
\eqref{eq:rcbound} that
\begin{equation}
\max_j \bigl|\lambda_j^{\downarrow}(\rho_K^L)-\lambda_j^{\downarrow}(\rho_{K+1}^L)\bigr|
\le \eps(0).
\end{equation}
Therefore,
\begin{alignat}{2}\label{eq:rhokbound}
|\lambda_j^{\downarrow}(\rho_{K+1}^L)| &\le
\lambda_j^{\downarrow}(\rho_{K}^L)+ \epsilon(0), \qquad
&&j=1,2, \dots, 2^{2K}, \nonumber\\
|\lambda_j^{\downarrow}(\rho_{K+1}^L)| &\le \epsilon(0),
&&j=2^{2K}+1,2^{2K}+2, \dots, 2^{2(K+1)}.
\end{alignat}

We shall now iterate this process in order to obtain a bound on the eigenvalues of
$\rho_{K+r}^L$, for $r\ge 1$.
There are three cases:

\begin{romlist}
\item $j\le 2^{2K}$, in which case
\begin{equation}
\lambda_j^{\downarrow}(\rho_{K+r}^L) \le
\lambda_j^{\downarrow}(\rho_{K}^L) + \sum_{l=0}^{r-1} \epsilon(l);
\end{equation}

\item $2^{2K} \le 2^{2(K+s)} < j \le 2^{2(K+s+1)} \le 2^{2(K+r)}$, in
which case
\begin{equation}
\lambda_j^{\downarrow}(\rho_{K+r}^L) \le \sum_{l=s}^{r-1}
\epsilon(l);
\end{equation}

\item $2^{2(K+r)} < j$, in which case
\begin{equation}
\lambda_j^{\downarrow}(\rho_{K+r}^L) = 0.
\end{equation}
\end{romlist}

Let $s = \lfloor\frac{1}{2}\log j\rfloor - K$, so that, by \eqref{eq:eps},
\begin{alignat}{2}
\lambda_j^{\downarrow}(\rho_{m}^L) &\le
\lambda_j^{\downarrow}(\rho_{K}^L) +  \sum_{l=0}^{\infty}
e^{-\gamma l}, \qquad &&j \le 2^{2K},\nonumber\\
\lambda_j^{\downarrow}(\rho_{m}^L) &\le  \sum_{l=s}^{\infty}
e^{-\gamma l}, &&2^{2K} < j,\nonumber
\end{alignat}
which is to say that
\begin{alignat}{2}
\lambda_j^{\downarrow}(\rho_{m}^L) &\le
\lambda_j^{\downarrow}(\rho_{K}^L) + c_0
, \qquad &&j \le 2^{2K},\nonumber\\
\lambda_j^{\downarrow}(\rho_{m}^L) &\le c_0e^{-\gamma s},
&&2^{2K} < j,\label{eq:1001}
\end{alignat}
where 
\be\label{eq:1000}
c_0= \frac 1{1-e^{-\gamma}} \le \frac43.
\ee
By \eqref{eq:1001},
\be\label{eq:eigbounds}
\lambda_j^{\downarrow}(\rho_{m}^L) \le c_0' j^{-\xi},
\qquad 2^{2K} < j,
\ee
where $\xi = {\gamma}/(2\ln 2) > 2$ and $c_0'=c_0'(L)=c_0 e^{\gamma(K+1)}$.

By \eqref{diagent},
\be\label{eq:1003}
S(\rho_m^L) =  S_1 + S_2,
\ee
where 
$$
S_1 = -\sum_{j=1}^{\nu} \lambda_j^{\downarrow}(\rho_m^L)
\log \lambda_j^{\downarrow}(\rho_m^L), \quad
S_2 =
-\sum_{j=\nu+1}^{2^{L+1}} \lambda_j^{\downarrow}(\rho_m^L)
\log \lambda_j^{\downarrow}(\rho_m^L),
$$ 
where $\nu$ ($\ge 2^{2(K+2)}$) is an integer to be chosen later.
We shall bound $S_1$ and $S_2$ separately. 
Since the $\lambda_j^{\downarrow}(\rho_m^L)$, $1\le j\le \nu$, are non-negative 
with sum $Q$ satisfying $Q\le 1$, 
\begin{equation}\label{eq:103}
S_1 \le \log \nu.
\end{equation}

We shall use the tail estimate \eqref{eq:eigbounds}
to bound $S_2$, making use of the fact that the function $f(x) =
-x\log x$ satisfies:  $f(0) =
0$, and $f(x) < f(y)$ whenever $0<x<y< e^{-1}$. 

By \eqref{eq:1000}, \eqref{eq:eigbounds}, and the definition of $\xi$,
$$
\lambda_j^\downarrow (\rho_m^L)
\le \frac{c_0'}{ j^{\xi}} < e^{-1}, \qquad j \ge 2^{2(K +2)},
$$ 
and so, recalling that $\nu \ge 2^{2(K+2)}$ and $\xi>2$,
\begin{equation*}
\begin{split}
S_2 &\le -\sum_{j=\nu+1}^{2^{L+1}}\frac{ c_0'}
{j^{\xi}}\log \left(\frac{c_0'}{j^{\xi}}\right) 
\le -\sum_{j=\nu+1}^{\infty} \frac{c_0'}{ j^{\xi}}\log \left(\frac{c_0'}{j^{\xi}}\right) \\
&\le -[c_0'\log c_0'] \sum_{j=\nu+1}^\infty \frac{1}{j^{\xi}} +
\frac{\xi c_0'}{\ln 2}\sum_{j=\nu+1}^{\infty} \frac{\ln j}{j^\xi} \\
&\le |c_0'\log c_0'| \int_{\nu}^\infty \frac1{x^{\xi}} \,dx +
\frac{\xi c_0'}{\ln 2}\int_{\nu}^\infty \frac1{x^{\xi}}\ln x \,dx \\
&\le \frac{c_0'\nu^{1-\xi}}{\xi-1}\left(|\log c_0'| +
\xi \log \nu + \frac{\xi}{\xi-1}\right). \\
\end{split}
\end{equation*}
We now set $\nu = \lceil e^{\gamma (K + 1)}\rceil$
to obtain
\begin{equation}\label{eq:s2bound}
S_2 \le  c_1 K + c_2,
\end{equation}
for suitable constants $c_1$, $c_2$ depending on $\g$ only. 
By \eqref{eq:1003}--\eqref{eq:s2bound},
\be\label{eq:1004}
S(\rho_m^L) \le c_1' K + c_2', \qquad m \ge K,
\ee
which may be combined with \eqref{eq:1002} to obtain \eqref{eq:entropybound}
with adjusted constants.
\end{proof}

\section{Percolation representation of the ground state}\label{sec:rcrep}

Aizenman, Klein, and Newman \cite{AKN} derived a random-cluster
representation for the thermal state of the quantum Ising
Hamiltonian \eqref{ham}, thereby relating spin-correlation
properties to graph-connectivity properties. In this representation,
the thermal density operator, defined as
\begin{equation}\label{eq:trace}
\rho_m(\beta) = \frac{e^{-\beta H_{m}}}{\tr(e^{-\beta H_{m}})},
\qquad \b = T^{-1}>0,
\end{equation}
is described by a
stochastic integral with respect to a Poisson measure. This Poisson measure is
defined on the space--time graph $\Lambda_{m,
\beta}=\Delta_m\times[0, \beta]$, generated by associating a
continuous (imaginary) time variable $t\in [0, \beta]$ to each
site $x\in \Delta_m$. We refer to a line of the form $\{x\}\times[0,\b]$
as the \emph{time-line} at the site $x$.

For completeness, we reproduce here the derivation of the
random-cluster representation of the ground state, and we derive
the corresponding representation for the reduced state on $[0,
L]$. Note that the derivations are valid with the line $\Delta_m$
replaced by any finite graph $G$. By \eqref{ham} with $\nu =
\frac12 \sum_{\langle x, y\rangle} \lambda\II$ and $\II$ the
identity operator,
\begin{equation}\label{expon}
e^{-\beta (H_m+\nu)}= e^{-\beta(U+V)},
\end{equation}
where
$$
U=-\delta \sum_{x} \sigma^{(1)}_x,\quad
V=- \frac12
\sum_{\langle x, y\rangle} \lambda (\sigma^{(3)}_x
\sigma^{(3)}_y - \II),
$$
and the second summation is
over all neighbouring pairs in $\De_m$. Although these two terms do not
commute, we may use the so-called Lie--Trotter formula (see, for
example, \cite{schulman}) to factorize the exponential in \eqref{ham} into
\emph{single-site} and \emph{two-site} contributions due to $U$
and $V$, respectively. By the Lie--Trotter formula,
$$
e^{-(U+V)\Delta t} =
e^{-U \Delta t} e^{-V \Delta t} + \O(\Delta t^2).
$$
We divide the interval $[0, \beta]$ into $N$
parts each of length $\Delta t=1/N$, and deduce that
\begin{equation}
e^{-\beta(U+V)} = \lim_{\Delta t \rightarrow 0} \left(e^{-U \Delta
t} e^{-V \Delta t}\right)^{\beta/\Delta t}.
\end{equation}
We then expand the exponential, neglecting all terms of order
$\o(\Delta t)$, to obtain
\begin{multline}
e^{-\beta (H_m+\nu)} =\\
\lim_{\Delta t \rightarrow 0} \left( \prod_{x} \left[(1- \delta
\Delta t) \II+ \delta \Delta t P^1_x\right] \prod_{\langle
x, y\rangle}\left[(1-\lambda \Delta t) \II+ \lambda \Delta
t P^{3}_{x,y} \right]\right)^{\beta/\Delta t},\label{eq:dec}
\end{multline}
where $P^1_x = \sigma^{1}_{(x)} + \II$ and $P^{3}_{x,y} =
\frac12(\sigma^{(3)}_x \sigma^{(3)}_y+ \II)$.

Let $B$ be the set of basis vectors $|\s\rangle$ for $\sH$ of the form $|\sigma\rangle =
\bigotimes_x|\pm\rangle_x$. There is a natural one--one correspondence between $B$
and the space $P=\prod_{x=-m}^{m+L}\{-1,+1\}$.
We shall sometimes speak of members of $P$ as basis vectors,
and of $\sH$ as the Hilbert space
generated by $P$. Similarly, the space $\sH_L$ of spins indexed by the interval
$[0,L]$ may be viewed as being generated by $P_L=\prod_{x=0}^L\{-1,+1\}$.

The stochastic-integral representation may be obtained from
(\ref{eq:dec}) by inserting the resolution of the identity
\be\label{eq:resolution}
\sum_{\sigma\in P} |\sigma\rangle\langle \sigma|=\II
\ee
between any two factors of the products.
The product \eqref{eq:dec} contains a collection of operators
acting on sites $x$ and on neighbouring pairs $\langle x,y\rangle$.
By labelling the
time-segments as $\Delta t_1,\Delta t_2, \dots, \Delta t_N$ in $[0, \beta]$,
and neglecting terms of order $\o(\Dt)$, we
may see that each given time-segment arising in \eqref{eq:dec}
contains
one of: the identity $\II$; an operator of the
form $P_x^{1}$; an operator of the form $P_{x,y}^{3}$. Each
such operator occurs in the time-segment with a certain weight.

Let us consider the action of these operators on the
states $|\sigma\rangle$ for each infinitesimal time interval
$\Delta t_i$, $i \in \{1, 2,\dots, N\}$. The matrix elements of each
of the single-site operators are given by
\begin{equation}\label{eq:death}
\langle \sigma '| \sigma^{(1)}_x+\II| \sigma\rangle=
\delta_{\sigma_x', \sigma_x}+ \delta_{\sigma_x',
\overline{\sigma}_x} =1,
\end{equation}
where $\sigma_x$ is the value of the spin at $x$ in the (product)
basis vector $|\sigma\rangle$, and $\overline{\sigma}_x$ is the
opposite spin to $\sigma_x$. When it occurs in some time-segment
$\Delta t_i$, we place a mark in the interval $\{x\}\times \Delta
t_i$, and we call this mark a \emph{death}. Such a death has a
corresponding weight $\delta \Dt + \o(\Dt)$.

The matrix elements involving neighbouring pairs $\langle x,y\rangle$ yield
\begin{equation}\label{eq:bridge}
\tfrac12\langle \sigma'_x \sigma'_y| \sigma^{(3)}_x
\sigma^{(3)}_y+ \II |\sigma_x \sigma_y\rangle =
\delta_{\sigma_x,\sigma'_{x}}\delta_{\sigma_y,\sigma'_{y}}
\delta_{\sigma_x,\sigma_y}.
\end{equation}
When this occurs in some time-segment $\Delta t_i$, we place a
connection, called a \emph{bridge}, between the intervals
$\{x\}\times \De t_i$ and $\{y\}\times \De t_i$. Such a bridge has
a corresponding weight $\lambda \Dt +\o(\Dt)$.

In the limit $\Dt \rightarrow 0$, the spin operators
generate thus a Poisson process with intensity $\de$ of deaths in each
time-line $\{x\}\times[0, \beta]$,
and a Poisson process with intensity $\lam$ of bridges between each
pair $\{x\}\times [0,\b]$, $\{y\}\times [0,\b]$ of time-lines,
for neighbouring $x$ and $y$. This is an independent family
of Poisson processes. We write $D_x$ for the set of deaths at the site $x$,
and $B_{x,y}$ for the set of bridges between neighbouring sites $x$ and $y$.
The configuration space is the set $\Om_{m,\b}$ containing all finite sets of
deaths and bridges, and we may assume without loss of generality that
no death is the endpoint of any bridge.

For two point $(x,s), (y,t) \in \La_{m,\b}$, we write
$(x,s) \lra (y,t)$ if there exists a path from the first to the
second that traverses time-lines and bridges but crosses no death.
A \emph{cluster} is a maximal subset $C$ of $\La_{m,\b}$ such that
$(x,s) \lra (y,t)$ for all $(x,s), (y,t) \in C$. Thus the connection relation
$\lra$ generates a percolation process on $\La = \La_{m,\b}$, and we write
$\PLlb$ for the probability measure corresponding
to the weight function on the configuration
space $\Om_{m,\b}$. That is, $\PLlb$ is the measure governing a family
of independent Poisson processes of deaths (with intensity $\de$)
and of bridges (with intensity $\lam$). The ensuing percolation
process has been studied in \cite{BG}.

We shall later need to count the number of clusters of a
configuration $\om\in\Om_{m,\b}$ subject to any of four possible
boundary conditions, of which we specify two next (the other two
appear in the next section). The meaning of
\emph{periodic boundary condition}
is that any clusters containing two points of
the form $(x,0)$ and $(x,\b)$, for some $x\in\De_m$,
are deemed to be the same cluster, and they
contribute only 1 to the total cluster count. The meaning of
\emph{wired boundary condition}
is that any clusters containing two points of
the form $(x,0)$ and $(y,\b)$, for $x,y\in\De_m$,
are deemed to be the same cluster and
contribute only 1 to the total count.  We write $\kp(\om)$
(\resp, $\kw(\om)$) for the number of clusters of $\om$
subject to the periodic (\resp, wired) boundary condition.
Note that $\kw(\om)-1$ is the number of clusters of $\om$
(with free boundary conditions) that do not intersect
$[-m,m+L]\times \{0,\b\}$.

Equations \eqref{eq:death}--\eqref{eq:bridge} are
to be interpreted as saying the following. In calculating the
operator $e^{-\b (H_m+\nu)}$, one averages over contributions from
realizations of the Poisson processes, on the basis
that the quantum spins are constant on every cluster of the
corresponding percolation process, and each such spin-function
is equiprobable.

More explicitly,
\begin{equation}\label{eq:int}
e^{- \beta (H_m+\nu)}= \int d\PLlb(\omega)
\left(\sT\prod_{(x,t)\in D}\,\prod_{(\langle
x,y\rangle,t')\in B} P_x^1(t) P_{x,y}^{3}(t')\right),
\end{equation}
where $\sT$ denotes the time-ordering of the terms in
the products, and $B$ (\resp, $D$) is the set of all bridges (\resp, deaths)
of the configuration $\om\in\Om_{m,\b}$. The $P_x^1(t)$ and $P_{x,y}^3(t)$
are to be interpreted as the relevant operators encountered
at the deaths and bridges of $\om$.

Let $\om\in\Om_{m,\b}$. Let $\Si(\om)=\Sigma_{m,L}(\om)$ be the
space of all functions $s: \Delta_m\times[0,
\beta] \rightarrow \{-1,+1\}$ that are constant on the clusters of $\om$,
and let $\mu_\om$ be the
counting measure on $\Si(\om)$. Let $K(\omega)$ be the time-ordered
product of operators in \eqref{eq:int}. We may evaluate the matrix
elements of $K(\omega)$ by inserting the resolution of the
identity between any two factors in the product, obtaining thus that
\begin{equation}\label{eq:k}
\langle\sigma'|K(\omega)|\sigma \rangle = \sum_{s\in\Si(\om)}
1\{s(\cdot,0)=\sigma \}  1\{s(\cdot,\beta)=\sigma'\},
\qquad \s,\s'\in P,
\end{equation}
where $1\{A\}$, and later $1_A$, denotes the indicator function of $A$.
This is the number of spin-allocations to the clusters of $\om$
with given spin-vectors at times 0 and $\b$.

The matrix elements of the
density operator $\rho_m(\b)$ are therefore given by
\begin{equation}\label{eq:thing0}
\langle \sigma'|\rho_m(\b)|\sigma \rangle =
\frac1{Z_m}\int
1\{s(\cdot, 0)=\sigma \} 1\{s(\cdot,\beta)=\sigma'\}\, d\mu_\om(s)\,d\PLlb(\om),
\end{equation}
for $\s,\s'\in P$,
where
\be\label{eq:pf}
Z_m = Z_m(\b) = \tr(e^{-\b (H_m+\nu)})
\ee
is the \emph{partition function}.
Thus,
\begin{align}
\langle \sigma'|\rho_m(\b)|\sigma \rangle &= \frac{1}{Z_m}\int
d\PLlb(\omega) \sum_{s\in\Si(\om)} 1\{s(\cdot, 0)=\sigma \} 1\{\sigma(\cdot,\beta)=\sigma'\}
\nonumber\\
&= \frac{1}{Z_m}\int
d\PLlb(\om)\, 2^{\kw(\om)-1} 1_{E(\s,\s')}(\om),
\qquad \s,\s'\in P,
\label{eq:thing}
\end{align}
where the final term in the integrand is the indicator function
of the event $E(\s,\s')$ containing all $\om\in\Om_{m,\b}$ such that:
for all $x,y\in[-m,m+L]$:
\begin{alignat}{2}
(x,0) \nlra (y,0)\quad&\mbox{whenever}\quad  &\s_x\ne \s_y,\nonumber\\
(x,\b)\nlra (y,\b)\quad &\mbox{whenever}  &\s_x'\ne \s_y',\nonumber\\
(x,0)\nlra (y,\b)\quad &\mbox{whenever}  &\s_x \ne \s_y'.\nonumber
\end{alignat}
See Figure~\ref{fig:randomcluster} for an illustration of the
space--time configurations contributing to the Poisson integral
(\ref{eq:thing}) for the matrix elements of $\rho_m(\beta)$.

\begin{figure}[ht]
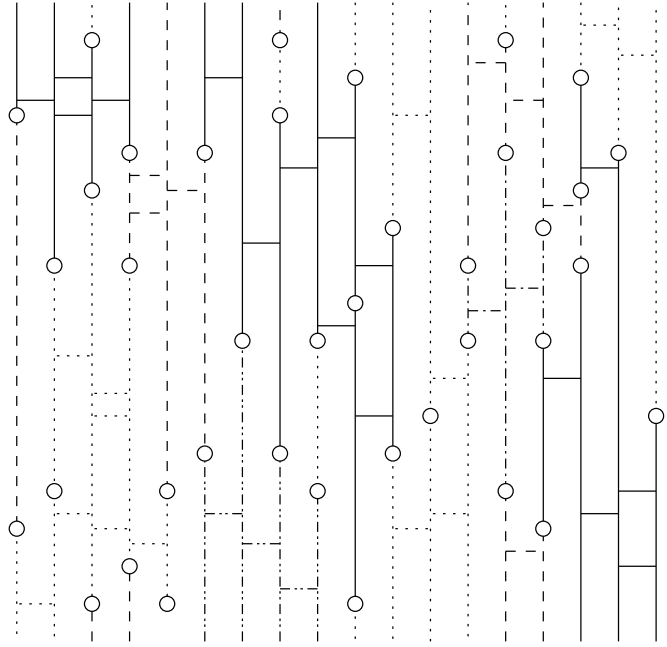

\begin{center}
\input gosfig3.pstex_t
\caption{An example of a space--time
configuration contributing to the Poisson integral (\ref{eq:thing}).
The cuts are shown as circles and the distinct connected clusters
(each of which contributes a factor $2$ to the term $2^{\kw(\om)}$) are
indicated with different line-types.}\label{fig:randomcluster}
\end{center}
\end{figure}

On setting $\s=\s'$ in \eqref{eq:thing} and summing over $\s\in P$, we find that
\begin{equation}
Z_m = \tr(e^{- \beta (H_m+\nu)}) = \int
2^{\kp(\omega)}\, d\PLlb(\omega).
\end{equation}

A method was developed in \cite{AKN} (as amplified in the next
section) to represent $\langle \sigma'|\rho_m(\b)|\sigma \rangle $
as a certain probability, and to prove that it converges as
$\b\to\oo$. In particular, it was shown in \cite{AKN} that the
ground state $\rho_m = |\psidm\rangle \langle \psidm|$ satisfies
\begin{equation}\label{eq:last}
\rho_m = \lim_{\beta\rightarrow\infty} \frac1{Z_m} e^{-\beta (H_m+\nu)}.
\end{equation}

\section{Percolation representation of the reduced
state}\label{rcred}

The analysis of the last section may be repeated for the
reduced density operator $\rho^L_m(\b)$ by tracing \eqref{eq:int} over a complete set
of states of the spins indexed by $\Delta_m \setminus [0, L]$. The corresponding
boundary condition for the configuration $\om\in\Om_{m,\b}$
turns out to be \emph{partially periodic}, in that any two clusters
of $\om$ containing points of the form $(x,0)$ and $(x,\b)$,
for some $x\in [-m,-1] \cup [m+1,m+L]$, are deemed to be the same
cluster and contribute only 1 to the total cluster count. No such
assumption is made for sites $x\in[0,L]$, and we refer to the boundary
condition on $[0,L]$ as \emph{free}. Let $\kpp(\om)$
be the number of clusters of $\om$ subject to the partially periodic
boundary condition. We shall need a fourth way to count clusters also, as follows.
The \emph{periodic/wired} boundary condition is that derived from
the partially periodic condition by the additional assumption
of a wired condition on $[0,L]$: any two clusters
of $\om$ containing points of the form $(x,0)$ and $(y,\b)$,
for some $x,y\in[0,L]$, are deemed to be the same
cluster and contribute only 1 to the total cluster count.
We write $\kpw(\om)$ for the number
of clusters with the periodic/wired boundary condition.
Note that $\kpw(\om)-1$ is the number
of clusters of $\om$ (with the partially periodic
boundary condition) that do not intersect $[0,L]\times \{0,\b\}$.

\begin{figure}[ht]
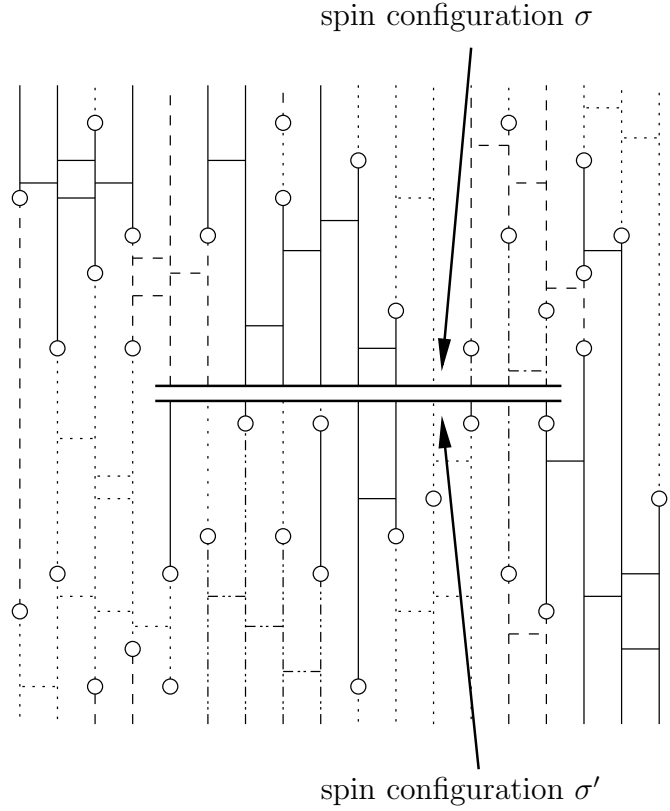

\begin{center}
\input gosfig4.pstex_t
\caption{An example of a space--time configuration
contributing to the matrix elements for the reduced density operator
$\rho_m^L(\beta)$. The box has partially periodic boundary conditions
and is drawn in such a way that the slit $S_L$ is at the
centre. The spin configurations on the top and the bottom
of the cut, and the connected clusters for this new cut geometry are
indicated.}\label{fig:rdo}
\end{center}
\end{figure}

As in \eqref{eq:thing0}--\eqref{eq:thing}, the matrix
elements of the reduced state $\rho_m^L(\b)$ are given by
\begin{equation}\label{redstate}
\langle \sigma_L'|\rho_m^L(\b)|\sigma_L
\rangle =
\frac{1}{Z_m}\int d\PLlb(\om)\,
2^{\kpw(\om)-1} 1_{E(\s_L,\s_L')}(\om),
\quad \s_L,\s_L'\in P_L,
\end{equation}
where $E(\s_L,\s_L')$ is the event that: if $x,y\in [0,L]$ are such
that $\s_{L,x} \ne \s_{L,y}'$ then $(x,0) \nlra^\pp (y,\b)$. Here,
$\lra^\pp$ denotes the connectivity relation subject to the
partially periodic boundary condition. See Figure~\ref{fig:rdo} for
an illustration of the slit space--time, and of the connected
clusters contributing to the matrix elements of $\rho_m^L(\beta)$.

We shall study the entropy of the reduced state via the operator
norm of \eqref{sup}. Let $|\psi\rangle \in \sH_L$ have unit $L^2$-norm,
so that
$$
|\psi\rangle = \sum_{\s_L\in P_L}c(\s_L) \s_L
$$
for some function $c:P_L \to \CC$ with
$\sum_{\s_L\in P_L} c(\s_L) \ol{c(\s_L)} = 1$.
Then
\be\label{eq:0}
\langle \psi| \rho_m^L(\b) |\psi\rangle =
\frac 1{\amb} \sum_{\s_L,\s_L'\in P_L} c(\s_L)\ol{c(\s_L')} \phmb(\s_L, \s_L')
\ee
where
\begin{align}
\phmb(\s_L, \s_L') &= \frac 1{N_m} \langle \s_L'| \rho_m^L(\b) |\s_L\rangle,
\qquad \s_L,\s_L'\in P_L,\label{eq:1}\\
N_m&= \sum_{\s_L,\s_L'\in P_L} \langle \s_L'| \rho_m^L(\b) |\s_L\rangle,
\label{eq:2}
\end{align}
and
\be\label{eq:amdef}
\amb=Z_m/N_m.
\ee
We shall see in the next sections
that \eqref{eq:0}--\eqref{eq:2} may be written in terms of a certain probability
measure on $\Om_{m,\b}$ called the \emph{\rc\ measure}.

\section{The continuum \rc\ model}\label{rc}
Perhaps the best way to express the percolation representations of the 
ground and reduced states is in terms of the so-called \rc\ model
on $\ZR$. We summarise the definition and basic properties of this
model in this section, using the language of probability theory.
The remaining part of the paper is a self-contained account of the
model, and includes the proof of Theorem \ref{mainest2}, see Theorem \ref{mainest}.
Of special interest will be the property of
so-called ratio weak-mixing, studied earlier for the lattice case in \cite{Al1,Al2}.

We shall consider the (two-dimensional) continuum \rc\ model on the `space--time'
subset $\ZR$ of the plane. The underlying space is $\{(x,t): x\in
\ZZ,\ t\in \RR\}$, and we refer to $\ZZ$ as the space-line and
$\RR$ as the time-line. Everything proved here has a counterpart,
subject to minor changes, in the more general setting of
$\ZZ^d\times\RR$ with $d \ge 2$, but we shall restrict ourselves
to the case $d=1$.

We shall construct a family of probabilistic models on $\ZR$.
Let $\lam,\de \in (0,\oo)$. In the simplest such model, we construct
`deaths' and `bridges' as follows. For each $x\in\ZZ$, we select a Poisson
process $D_x$ of points in $\{x\}\times\RR$ with intensity $\de$; the
processes $\{D_x: x\in \ZZ\}$ are independent, and the points in the $D_x$ are termed
`deaths'. For each $x\in\ZZ$, we select a Poisson process $B_x$ of points
in $\{x+\frac12\}\times\RR$ with intensity $\lam$;
the processes $\{B_x: x\in\ZZ\}$ are independent of each other and of the
$D_y$. For each $x\in\ZZ$ and each $(x+\frac12,t)\in B_x$, we
draw a unit line-segment in $\RR^2$ with endpoints $(x,t)$
and $(x+1,t)$, and we refer to this as a `bridge' joining
its two endpoints.  For $(x,s), (y,t) \in \ZZ\times\RR$, we write
$(x,s)\lra (y,t)$ if there exists a path $\pi$ in $\RR^2$ with
endpoints $(x,s)$, $(y,t)$ such that: $\pi$ comprises sub-intervals
of $\ZR$ containing no deaths, together possibly with bridges.
For $\La,\De \subseteq \ZR$, we write $\La\lra \De$ if there exist
$a\in \La$ and $b\in \De$ such that $a\lra b$.

For $(x,s)\in \ZR$, let $C_{x,s}$ be the set of all points
$(y,t)$ such that $(x,s)\lra(y,t)$. The clusters $C_{x,s}$ have been studied in
\cite{BG}, where it was shown in particular that
\be
\Plb(|C_0|<\oo) \,\begin{cases}
=1 &\text{ if } \theta\le 1,\\
<1 &\text{ if } \theta > 1,
\end{cases}
\ee where $0=(0,0)$ is the origin of $\ZR$, $\theta=\lam/\de$, and
$|C|$ denotes the (one-dimensional) Lebesgue measure of the cluster $C$. The process
thus constructed is a continuum percolation model in two
dimensions. As noted in \cite{BG}, it differs from the contact
model on $\ZZ$ only in that two points may be joined in the
direction of either increasing or decreasing time. See
\cite{Lig1,Lig2} for details of the contact model.

Just as the percolation model on a lattice may be generalised to the
so-called \rc\ model (see \cite{G-RC}), so may the continuum percolation model
be extended to a continuum \rc\ model. We shall work here
mostly on a bounded box rather than the whole space $\ZR$. 
Let $a,b\in\ZZ$, $s,t\in\RR$ satisfy $a \le b$, $s \le t$, 
and write $\La=[a,b]\times[s,t]$ for the box
$\{a,a+1,\dots,b\}\times [s,t]$ of $\ZR$. Its boundary $\pd\La$
is the set of all points $(x,y)\in\La$
such that: either $x\in\{a,b\}$, or $y\in \{s,t\}$, or both.
As sample space we take the
set $\Om_\La$ comprising all finite subsets (of $\La$) of deaths and bridges,
and we assume that no death is the endpoint of any bridge.
For $\om\in\Om_\La$, we write $B(\om)$ and $D(\om)$ for the sets
of bridges and deaths, respectively, of $\om$. We take as $\s$-field 
$\sF_\La$ that generated by the open sets in the associated Skorohod topology,
see \cite{BG,EK}. 

In order to maintain the link to the quantum Ising model,
we choose to impose a top/bottom periodic boundary condition on $\La$;
that is, for every $x\in [a,b]$, we identify the two points $(x,s)$
and $(x,t)$.  The remaining boundary of $\La$,
denoted $\pdh\La$, is the set of all
points of the form $(x,u)\in \La$ with $x\in\{a,b\}$. The theory developed
here is valid for more general boundary conditions.
 
Let $\PLlb$ denote the probability
measure associated with the above continuum percolation model on $\La$.
For a given configuration $\om$ of deaths and bridges on $\La$, let $k(\om)$
be the number of its clusters (subject to the top/bottom
periodic boundary condition). Let $q\in(0,\oo)$, and define the
`continuum \rc' probability measure
$\PLlbq$ by
\be\label{rcPo}
d\PLlbq(\om) = \frac 1Z q^{k(\om)}d\PLlb(\om),
\qq \om\in\Om_\La,
\ee
for an appropriate `partition function' $Z$.

The theory of the continuum \rc\ model may be developed in very much
the same way as that for the \rc\ model on a discrete lattice, see \cite{G-RC}.
We shall assume the basic theory
without labouring the calculations necessary for full rigorous proof. The
details may be obtained by following minor variants of 
the relevant strategy for the discrete case.

If $\mu$ is a probability measure and $f$ a function on some measurable space,
we denote by $\mu(f)$ the expectation of $f$ under $\mu$.

The space $\Om_\La$ is a partially ordered space with 
order relation given by: $\om_1\le\om_2$ if $B(\om_1)\subseteq B(\om_2)$ and
$D(\om_1) \supseteq D(\om_2)$. A random variable $X:\Om_\La\to\RR$
is said to be \emph{increasing}
if $X(\om)\le X(\om')$ whenever $\om\le\om'$. An event $A\in\sF_\La$ is
said to be \emph{increasing} if its indicator function $1_A$ is increasing.
Given two probability measures $\mu_1$, $\mu_2$ on
the measurable pair $(\Om_\La,\sF_\La)$, we write
$\mu_1\lest \mu_2$ if $\mu_1(X) \le \mu_2(X)$
for all bounded increasing continuous random variables $X:\Om_\La\to\RR$.

The measures $\PLlbq$ have certain
properties of stochastic ordering as the parameters $\La$, $\lam$,
$\de$, $q$ vary. There are two approaches to such stochastic inequalities,
either by working on discrete graphs and passing to a spatial limit to
obtain the continuum measures, or by working directly in the continuum.
We shall not pursue this here, but refer the reader to
\cite{BG} for a discussion of the case $q=1$.
The following two facts will be useful later. First,
$\PLlbq$ satisfies a positive-association (FKG) inequality when $q \ge 1$,
and secondly,
\be\label{stochcomp}
\PLlbq \lest \PLlb\qq \text{when } q \ge 1.
\ee

In the current paper we shall work mostly with finite-volume
measures, that is, with measures defined on boxes
of the form of $\La=[a,b]\times[s,t]$.  We assume henceforth that $q \ge 1$.
Having established the necessary estimates on
such boxes, we will pass to the \emph{vertical}
infinite-volume limit as $s\to-\oo$, $t\to\oo$.
The existence of such a limit is not explored in 
detail here, but we note the following (see \cite{AKN}).
If we work on $\La$ with top/bottom \emph{wired} or \emph{free}
boundary conditions, then the limit measures exist as
a consequence of positive 
association (very much as in the lattice case, see \cite{G-RC}).
Furthermore, the  weak limit with top/bottom periodic boundary conditions
exists and agrees with the first two limit measures whenever the latter
are equal. A sufficient condition for this is that the wired limit measure
does not percolate. Since the limit of $\La$ as $t-s\to\oo$ is
a strip of bounded width, this condition is satisfied for all
$\lam, \de\in(0,\oo)$, and therefore the limit measures exist
and do not depend on the choice of boundary condition.

The situation is slightly less clear in the doubly-infinite-volume limit,
as $\La\uparrow\ZR$. The self-dual point for the continuum \rc\ measure on
$\ZR$ is given by $\lam/\de=q$, and thus one expects the free and wired
limit measures to be equal at least whenever $\lam/\de \ne q$. 
It may be shown using duality that there is no percolation when $\lam/\de < q$,
and it follows that the weak limits 
$$
\Plbq=\lim_{\La\uparrow\ZR} \PLlbq,
\qquad q \ge 1,
$$
exist if $\lam/\de< q$. We shall make no reference to this later.

Just as the $q$-state Potts model may be coupled with a \rc\ model
on a given graph, so may we consider a continuum Potts model on 
a box $\La=[a,b]\times[s,t]$.
Let $q\in\{2,3,\dots\}$. We sample $\om$ according to $\PLlbq$, and we allocate
a randomly chosen spin from the set $\{1,2,\dots,q\}$ to each cluster
of $\om$; the points of each cluster receive a given 
spin-state chosen uniformly
at random from the $q$ possible local states, 
and different clusters receive independent spin-states.
We call the ensuing spin-configuration a $q$-state {\em continuum Potts model},
and a {\em continuum Ising model} when $q=2$. When $q=2$,
by convention we take the local spin-space to be $\{-1,+1\}$
rather than $\{1,2\}$, and this is the case of interest in
the current paper.  We note in passing that the $q$-state continuum
\rc\ model corresponds to a certain $q$-state quantum Potts model 
constructed in a manner similar to that of the quantum Ising model.

The set of spin-configurations of the continuum $q$-state
Potts model is the space $\Si_\La$ given as follows. Let
$\sF$ be the set of finite subsets of $\La$. For $D\in\sF$, let $J(D)$ be the
set of maximal intervals of the time-lines that contain no point
in $D$ (subject to the top/bottom boundary condition on $\La$). 
The space $\Si_\La$ is defined as the union over $D$ of the set of functions
$\s:J(D) \to\{1,2,\dots,q\}$ with the property that
$\s_{(x,u-)} \ne \s_{(x,u+)}$ for all $(x,u)\in D$. 
The corresponding probability measure on $\Si_\La$
is found by integrating over $\om$ in the above recipe, as in the following summary.
For $\s\in \Si_\La$, write $D_\s$ for the set
of points $(x,u)\in \La$ such that $\s_{(x,u-) } \ne \s_{(x,u+)}$.
The probability measure $\PPP$ associated with the continuum $q$-state Potts model
on $\La$ is given by
$$
d\PPP(\s) = \frac 1{Z'} e^{\lam L(\s)} d\PP_\de(D_\s),\qquad \s\in\Si_\La,
$$
where $\PP_\de$ is the law of an independent family of
Poisson processes with intensity $\de$ on
the time-lines indexed by $[a,b]$, and
\be\label{eq:new}
L(\s) = \sum_{x\sim y} \int_s^t \de_{\s_{(x,u)}, \s_{(y,u)}}\,du
\ee
is the total length of neighbouring time-lines where the spins are equal.
Here, the summation is over all unordered pairs $x$, $y$
of neighbours. We shall not develop the theory of such measures here, save
for noting that $\PPP$ has the spatial Markov property (see
\cite{bremaud,Grim72} for accounts of the spatial Markov property
for a lattice model).
For $\s\in\Si_\La$ and  a measurable subset $S$ of $\La$, we write $\s_S$ for the
value of $\s$ restricted to $S$, and $\sG_S$ for the $\sigma$-field generated
by $\s_S$. 

The above definition of the continuum \rc\ model is based on an assumption
of free boundary conditions on left/right sides of the region $\La$
(we shall always assume top/bottom periodic conditions in 
this paper). More general boundary conditions 
may be introduced as follows.  Let $\tau$ be an admissible 
configuration of deaths and bridges off the box $\La$. That is, 
$\tau$ comprises a set $D(\tau)$ of deaths and a set $B(\tau)$
of bridges of $(\ZZ\times\RR)\sm\La$ such that: 
the intersection of $D(\tau)$ and $B(\tau)$ with
any bounded sub-interval of $\ZR$ is finite, and no
death is the endpoint of any bridge. For $\om\in\Om_\La$,
we denote by $(\om,\tau)$ the composite configuration comprising $\om$ on $\La$
and $\tau$ on its complement. We write
$\PLlbq^\tau$ for the continuum \rc\ measure
on $\Om_\La$ with the difference that the
number $k(\om)$ of clusters in \eqref{rcPo} is replaced by the number
$k(\om,\tau)$  of clusters of 
$(\om,\tau)$ that intersect $\La$ (subject, as usual, to
the top/bottom periodic boundary condition). As in the lattice case,
$\PLlbq^\tau$ is stochastically increasing in $\tau$.
One may consider also periodic 
boundary conditions.

We extend this discussion now to boundary conditions defined in
terms of spins rather than deaths/bridges. 
Let $q\ge 2$ be an integer. Let $\tau$ be a boundary 
condition as above, and let $\eta$ be a mapping from its clusters to
the set $\{1,2,\dots,q\}$; that is, $\eta$ allocates a spin to each cluster of
$\tau$, viewed as a configuration
on $(\ZR)\sm\La$. Let the measure $\PLlbq^\eta$ be given as $\PLlbq$,
conditioned on the event that no two points $x,y\in\pdh\La$ with
$\eta(x) \ne \eta(y)$ are connected. 
We now allocate spins to the clusters
of the composite configuration $(\om,\tau)$ by: if a cluster $C$
contains a vertex $y$ that is already labelled, the entire cluster 
of $y$ takes that label,
and if no such vertex exists, the spin of $C$ is chosen
uniformly at random from $\{1,2,\dots,q\}$, 
independently of the spins on other clusters.

\section{Basic estimate for the slit box}\label{be}
We consider next a variant of the above model in which the box
$\La$ possesses a `slit' at its centre. Let $L\ge 0$ and
$S_L=[0,L]\times\{0\}$. We think of $S_L$ as a collection
of $L+1$ vertices labelled in the obvious way as
$x=0,1,2,\dots,L$. For $m\ge 2$, $\b>0$, let $\Lamb$ be the box
$[-m,m+L]\times[-\frac12\b,\frac12\b]$ subject to a `slit' along $S_L$. That is,
$\Lamb$ is the usual box except in that each vertex  $x\in S_L$ is
replaced by two distinct vertices $x^+$ and $x^-$. The vertex
$x^+$ (\resp, $x^-$) is attached to the half-line
$\{x\}\times(0,\oo)$ (\resp, the half-line $\{x\}\times(-\oo,0)$);
there is no direct connection between $x^+$ and $x^-$. Write
$S_L^\pm=\{x^\pm: x\in S_L\}$ for the upper and lower sections of
the slit $S_L$. We now construct the continuum \rc\ measure $\phmb$
on $\Lamb$ with top/bottom periodic boundary condition and 
parameters $\lam$, $\de$, $q=2$. We shall abuse
notation by using $\phmb$ to denote also the coupling of the
continuum \rc\ measure and the spin-configuration on $\Lamb$
obtained as above. An illustration of the slit box
is presented in Figure \ref{fig:rdo}.

Let $\Ommb$ be the sample space of the continuum \rc\ model
on $\Lamb$, and $\Smb$ the set of all possible
spin-configurations. That is, $\Smb$ comprises
all admissible allocations of spins to the clusters of
configurations in $\Ommb$. For $\s\in\Smb$ and $x\in S_L$,
write $\s_x^\pm$ for the spin-state of $x^\pm$.
Let $\Si_L=\{-1,+1\}^{L+1}$ be the set of
spin-configurations of the vectors $\{x^+: x\in S_L\}$
and $\{x^-: x\in S_L\}$, and write $\s^+_L=
(\s_x^+: x\in S_L)$ and $\s^-_L=
(\s_x^-: x\in S_L)$. 

It may be checked from \eqref{redstate} that
$$
\phmb (\s_L^-=\eps^-,\, \s_L^+=\eps^+) \propto
\langle \eps^-| \rho_m^L(\b) |\eps^+\rangle,
\qquad \eps^-,\eps^+\in \Si_L,
$$ 
whence $f(\eps^+,\eps^-)=
\phmb(\s_L^-=\eps^-,\, \s_L^+=\eps^+)$ is the function defined in \eqref{eq:1}.
It is easily seen that $\amb$, given in \eqref{eq:amdef}, may be expressed as
\be\label{eq:amdef2}
\amb=\phmb(\s_L^+=\s_L^-).
\ee
On recalling \eqref{sup}, by \eqref{eq:0},
\be\label{eq:4}
\langle \psi| \rho_m^L(\b)-\rho_n^L(\b) |\psi\rangle
= \frac{\phmb(c(\s_L^+)\ol{c(\s_L^-)})}{\amb}
- \frac{\phnb(c(\s_L^+)\ol{c(\s_L^-)})}{\anb} 
\ee
where $c: \Si_L \to \CC$ and
$$
\psi=\sum_{\s_L\in \Si_L} c(\s_L) \s_L \in \sH_L. 
$$

The reduced ground state $\rho_m^L$ is obtained from $\rho_m^L(\b)$ by taking the limit
as $\b\to\oo$. By the remarks in Section \ref{rc},
there exists a probability measure $\phm$ such that
$$
\phmb \Rightarrow \phm \qquad\mbox{as } \b\to\oo.
$$
Furthermore, the $\s_L^\pm$ are cylinder functions, and therefore,
as $\b\to\oo$,
\be\label{eq:34}
\phmb(c(\s_L^+)\ol{c(\s_L^-)}) \to \phm(c(\s_L^+)\ol{c(\s_L^-)}),
\ee
and
\be\label{eq:35}
\amb \to a_m = \phm(\s_L^+=\s_L^-).
\ee

In order to prove Theorem \ref{mainest2},
we seek the function $c: \Si_L\to\CC$, with 
$$
\|c\| = \sqrt{\sum_{\eps\in\Si_L}|c(\eps)|^2} = 1,
$$
that maximises the modulus of \eqref{eq:4}. 
By splitting \eqref{eq:4} into its real and imaginary parts, and
applying the triangle inequality, we see that it suffices to
consider functions $c$ taking only 
non-negative real values. 

Here is the main estimate of this section, of which Theorem \ref{mainest2}
is an immediate corollary with adapted values of the constants.

\begin{thm}\label{mainest}
Let $\lam, \de\in(0,\oo)$ and write $\th=\lam/\de$. 
If $\th < 1$, there exist $\a,C,M \in(0,\oo)$, depending on
$\th$ only, such that
the following holds. There exists $\g=\g(\th)$ satisfying
$\g >0$ when $\th < 1$ such that, for all $L\ge 1$ and $M\le m\le n<\oo$,
\be\label{eq:36}
\sup_{\|c\|=1} \left| \frac{\phm(c(\s_L^+)c(\s_L^-))}{a_m}
- \frac{\pn(c(\s_L^+)c(\s_L^-))}{a_n} \right|
\le C L^\a  e^{-\g m},
\ee
where the supremum is over all functions $c:\Si_L\to\RR$ with $\|c\|=1$. 
The function $\g$ may be chosen
to satisfy
$\g(\th)\to\oo$ as $\th\downarrow 0$.
\end{thm}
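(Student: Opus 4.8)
The plan is to exploit the fact that the functional in \eqref{eq:36} is localised at the slit $S_L$, and then to compare the two volumes by a ratio-weak-mixing estimate fed by subcritical exponential decay of connectivity. First I would reformulate the quantities involved: for $\om\in\Ommb$, let $\pi(\om)$ be the partition of the $2(L+1)$ slit half-vertices $S_L^+\cup S_L^-$ induced by connectivity in $\Lamb$. Conditionally on $\pi(\om)$, the spin pair $(\s_L^+,\s_L^-)$ is obtained by assigning i.i.d.\ uniform $\pm1$ spins to the blocks, so, writing $M_m$ for the positive semi-definite matrix $\bigl(\phm(\s_L^+=\eps^+,\ \s_L^-=\eps^-)\bigr)_{\eps^+,\eps^-\in\Si_L}$, both $\phm(c(\s_L^+)c(\s_L^-))=c^{\mathrm T}M_mc$ and $a_m=\phm(\s_L^+=\s_L^-)=\tr M_m$ are explicit linear functionals of the $\phm$-law of $\pi$, and the left-hand side of \eqref{eq:36} is exactly $\sup_{\|c\|=1}\bigl|\,c^{\mathrm T}M_mc/\tr M_m - c^{\mathrm T}M_nc/\tr M_n\,\bigr|$. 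Since $\s_L^\pm$ are cylinder functions we already have $\phmb\Rightarrow\phm$, so everything transfers from finite $\b$ to the ground state.

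The essential feature to keep in mind is that $a_m$ can be extremely small: taking $c$ constant in \eqref{eq:0} shows $a_m\ge 2^{-(L+1)}$, with near-equality when the field is strong. Consequently $M_m$ and $M_n$ must be compared \emph{relative to} $a_m$, not in absolute terms, and an ordinary weak-mixing (decoupling) argument --- which would control $M_m-M_n$ only additively --- is not enough; one needs the conditional, multiplicative control provided by ratio weak-mixing. The plan is to develop in Section \ref{rwm}, following the strategy of Alexander \cite{Al1,Al2} but carried out directly in the continuum \rc\ setting of Section \ref{rc}, the statement that, in the regime under consideration, the law of the connectivity pattern inside any fixed box about $S_L$, conditioned on the class of $\pi(\om)$, is insensitive to the left/right boundary $\pdh\Lamb$ up to an error that is exponentially small in the distance from $S_L$ to $\pdh\Lamb$ and \emph{relative} to the probability of the conditioning event. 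As $S_L$ lies at distance $\ge m$ from $\pdh\Lamb$, and as the corresponding region is unchanged on passing from $\Lamb$ to $\Lanb$, this should yield bounds of the form $\|M_m-M_n\|\le CL^\a e^{-\g m}\,a_m$ and $|a_m-a_n|\le CL^\a e^{-\g m}\,a_m$, the polynomial factor $L^\a$ reflecting the spatial extent of the slit (a union bound over the $O(L^2)$ pairs of slit half-vertices). These combine, using $\|c\|=1$ and $c^{\mathrm T}M_nc/\tr M_n\le 1$, to give \eqref{eq:36}.

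The exponential rate, and the divergence $\g(\th)\to\oo$ as $\th\downarrow0$, come from subcritical decay of connectivity. Using the stochastic comparison \eqref{stochcomp} of the $q=2$ \rc\ measure with the $q=1$ continuum percolation measure, together with the fact (see \cite{BG}) that $\th<1$ is the subcritical phase of the latter, I would first record that, uniformly in the volume, the probability that some cluster meeting $S_L$ reaches Euclidean distance $r$ from $S_L$ is at most $O(L)\,e^{-\g_0 r}$ for some $\g_0=\g_0(\th)>0$ with $\g_0(\th)\to\oo$ as $\th\downarrow0$ (the connectivity-decay rate of the sparse continuum percolation model diverges as the bridge intensity vanishes). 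Combined with the spatial Markov property of the continuum \rc\ model, applied across two nested boxes about $S_L$ whose radii are fixed fractions of $m$, this is what powers the ratio-weak-mixing estimate and passes the rate $\g_0$ into $\g$. The remaining steps are bookkeeping: since $n\ge m$ we have $e^{-\g n}\le e^{-\g m}$, so a triangle-inequality comparison of $\phm$ with $\pn$ (directly, or through the limit as $\La\uparrow\ZR$, which exists because $\th<q$) completes the estimate, and one fixes $M=M(\th)$ so that the derivation is valid for $m\ge M$, absorbing constants into $C$.

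I expect the principal obstacle to be the faithful transfer of Alexander's ratio-weak-mixing machinery from the discrete lattice to the continuum \rc\ model, with sufficient uniformity for this application: here the conditioning is on the entire partition of a slit whose size grows with $L$ rather than on a single two-point connection, so the estimates must be uniform in $L$ with only polynomial loss, and the Poisson/Skorohod structure of the continuum model forces one to redo the underlying surgeries (finite-energy and conditional/marginal comparisons) rather than merely cite \cite{Al1,Al2}. A secondary point is to confirm that subcritical exponential decay of connectivity, with a strictly positive rate, is genuinely available throughout $\th<1$ in this continuum model, since one needs both $\g>0$ on the whole subcritical range and $\g\to\oo$ as $\th\downarrow0$. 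Once these inputs are secured, the partition reformulation, the two-box decoupling, and the concluding bookkeeping are routine.
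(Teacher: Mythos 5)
Your proposal tracks the paper's proof closely: both rely on ratio weak-mixing in the spirit of Alexander redone in the continuum, both compare the $q=2$ continuum random-cluster measure to subcritical continuum percolation (giving a rate $\g(\th)>0$ with $\g\to\oo$ as $\th\downarrow 0$, exactly Theorem \ref{contperc}), both invoke the spatial Markov property across nested boxes of side $\Theta(m)$ around $S_L$, and both recognise that because $a_m$ can be exponentially small in $L$ only a \emph{relative} (ratio, not additive) decoupling will do. Your matrix reformulation $c^{\mathrm T}M_mc/\tr M_m$ is a clean repackaging of the paper's working directly with $\phmb(c(\s_L^+)c(\s_L^-))/\amb$; the paper instead couples $\phmb$ and $\phnb$ so that the two configurations agree on a box $B$ unless $B\lra\pdh\Lamb$, and splits the difference over that good event and its complement, but this is the same decoupling expressed differently. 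One detail worth flagging: your attribution of the $L^\a$ factor to ``a union bound over the $O(L^2)$ pairs of slit half-vertices'' is not how the paper gets it, and a naive union bound would give only additive, not multiplicative, control. The polynomial loss in the paper comes from Lemma \ref{lem1}, which sacrifices $K=\lceil\ln L\rceil$ half-vertices at each end of the slit and pays a finite-energy cost of order $(\mathrm{const})^K=L^\a$ for them (Lemma \ref{finite-energy}), and then applies ratio weak-mixing only to the remaining interior half-vertices. You do mention that finite-energy surgeries need to be redone in the continuum, so you are aware of the ingredient; just be careful that the union-bound heuristic is not the actual mechanism.
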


The condition $\th < 1$ is important in that it permits a
comparison of the $q=2$ continuum \rc\ model on $\ZR$ with the continuum
percolation model. The claim of the theorem is
presumably valid for $\theta<\thetac$ where
$\thetac$ is the critical point of the former model. [It
may be shown that $\thetac \ge 2$, and we conjecture
that $\thetac=2$, the self-dual point.] Similarly,
Theorem \ref{mainest} has a counterpart in $d \ge 2$
dimensions.

We shall require for the purposes of comparison the following exponential-decay
theorem for continuum percolation. Let $\La_m$ denote the box $[-m,m]^2$, and let 
$I=\{0\}\times[-\frac12,\frac12]$ be a unit `time-segment'
centred at the origin.

\begin{thm}\label{contperc}
Let $\lam, \de \in(0,\oo)$. There exist $C=C(\lam,\de)\in(0,\oo)$
and $\g=\g(\lam,\de)$ satisfying $\g>0$ when $\lam/\de<1$, such that{\rm:}
$$
\Plb\bigl(I\lra \pd\La_m\bigr) \le Ce^{-\g m},\qquad m\ge 0.
$$
The function $\g(\lam,\de)$ may be chosen to satisfy $\g\to\oo$
as $\de\to \oo$ for fixed $\lam$.
\end{thm}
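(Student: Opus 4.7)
The plan is to estimate the connection probability by a first-moment argument over the Poisson structure. The event $\{I\lra\pd\La_m\}$ forces the existence of a sequence of bridges and death-free time-intervals, alternating between neighbouring time-lines, that starts at $I$ and reaches horizontal distance $m$. Since each bridge moves horizontally by exactly one unit, any such `skeleton' uses at least $m$ bridges; an upper bound on $\Plb(I\lra\pd\La_m)$ is therefore the expected number of skeletons of length $m$, after summing over their combinatorial shape.

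The per-step cost exploits the Poisson structure. Given that an exploration has reached some point $(x,t)$, the expected contribution of the next step (the choice of neighbour, together with a bridge in the enclosing death-free segment of the next time-line) is bounded by a constant multiple of $\th=\lam/\de$: within a death-free segment of expected length $2/\de$, the expected number of bridges to a given neighbour equals $2\lam/\de=2\th$. Multiplying the per-step bound over $m$ steps and accounting for the at most $2^m$ horizontal trajectories on $\ZZ$ of length $m$, one obtains a bound of the form $C(4\th)^m$. This already yields exponential decay whenever $4\th<1$, with a rate $\g$ that tends to infinity as $\th\downarrow 0$, and in particular as $\de\to\oo$ for fixed $\lam$; this handles the second claim of the theorem.

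To extend the decay to the full range $\th<1$ asserted in the statement, I would invoke sharpness of the phase transition, adapted from the classical lattice results of Aizenman--Barsky--Newman and Menshikov. The key input is the fact, established in \cite{BG}, that $\Plb(|C_0|<\oo)=1$ for $\th\le 1$, which together with translation invariance and positive association of the continuum percolation measure promotes almost-sure finiteness of clusters to exponential decay of connectivity throughout the open subcritical regime $\{\th<\thetac\}$; since $\thetac\ge 1$ by \cite{BG}, this gives $\g>0$ for every $\th<1$.

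The main obstacle is the careful continuum transcription of the sharpness argument, which in its classical form is written for discrete lattices. An appealing alternative that avoids this technicality is to discretise time into intervals of length $\eps>0$, so that a continuum connection $I\lra\pd\La_m$ is dominated by a connection event in an anisotropic bond percolation model on $\ZZ^2$ with horizontal edge-density $1-e^{-\lam\eps}$ and vertical edge-density $e^{-\de\eps}$. For $\eps$ small these parameters lie in the subcritical phase of two-dimensional bond percolation whenever $\th<1$, and classical exponential decay on the lattice transfers back to the continuum through the coupling, with an explicit rate that can be tracked to recover the $\g\to\oo$ behaviour as $\de\to\oo$.
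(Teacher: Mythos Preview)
Your skeleton/first-moment argument is close in spirit to the paper's approach for the second claim, which dominates the cluster $C_0$ by an age-dependent branching process (lifetime the sum of two exponentials with parameter $\de$, offspring at rate $2\lam$, mean family size $4\th$). But your version has a genuine gap: the box $\La_m=[-m,m]^2$ has boundary at \emph{vertical} distance $m$ as well, and the event $\{I\lra\pd\La_m\}$ does not force the use of $m$ bridges --- the cluster can escape through the top or bottom via a few bridges and long death-free time-intervals. The paper deals with the two directions separately: horizontal displacement is bounded by the total progeny $M$ of the branching process, vertical displacement by the aggregate lifetime $U$ of all particles, and both $M$ and $U$ have exponentially decaying tails with rates tending to infinity as $\de\to\oo$ for fixed $\lam$. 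Your counting argument captures only the $M$ part.

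For the full range $\th<1$, the paper simply invokes \cite{BG}, where exponential decay for subcritical continuum percolation (equivalently, the graphical representation of the subcritical contact process) is established directly; no separate sharpness argument is needed. Your discretization alternative is attractive but, as written, the domination runs the wrong way: with the natural coupling (vertical edge open iff no death in the $\eps$-interval, horizontal edge open iff at least one bridge), a continuum path using a bridge at time $t\in[k\eps,(k{+}1)\eps]$ need not produce an open discrete path, because the continuum path does not guarantee the extra death-free stretch from $t$ to the nearest lattice height $k\eps$ on either time-line. One can repair this by widening the edge windows (at the cost of introducing dependencies or enlarging the edge densities) and then appealing to Kesten's anisotropic critical curve $p_h+p_v=1$, but that is considerably more work than citing \cite{BG}.
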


\begin{proof}
Consider the continuum percolation process with parameters $\lam$, $\de$.
The existence of such $\g$ is proved in \cite{BG}.
That $\g\to\oo$ as $\de\downarrow 0$ (with $\lam$ fixed)
may be proved by bounding the cluster at the origin
by a branching process. Consider an age-dependent branching process
in which each particle lives for a length of time having the
distribution of the sum of two independent exponentially-distributed
random variables with parameter $\de$. During its lifetime, it has children
in the manner of a Poisson process with parameter $2\lam$, so that a typical family-size
$N$ has generating function 
$$
G_N(s) = E(s^N) = \left(\frac{\de}{\de-2\lam(s-1)}\right)^2, \qquad |s|\le 1.
$$
The process is subcritical if $E(N)<1$, which is to say that
$G'_N(1) = 4\lam/\de < 1$. When this holds, the 
tail of the total number $M$ of particles decays exponentially, 
and similarly the
aggregate lifetime $U$ of the 
particles has an exponentially-decaying tail. 
See \cite{GS,Har} for accounts of the theory of branching processes.

The branching process dominates $C$ in the following sense. Identify the progenitor
of the branching process and the origin $0$ of $\ZR$. The length
of the maximal death-free time-interval containing the origin has the distribution 
of the lifetime of $0$. The number of bridges with an endpoint in this interval has
the distribution of $N$. Each such bridge has endpoints of the form
$(0,s)$ and $(x,s)$ where $x = \pm 1$. When we iterate this, we find that the number
of bridges in the maximal death-free interval containing 
$(x,s)$ is dominated (stochastically)
by $N$. Arguing inductively, the number of bridges in the cluster $C$ is
dominated stochastically by the total size $M$ of the branching process. 

The horizontal displacement of $C$ is thus smaller (in distribution) than the 
total size $M$ of the branching process. 
It is standard that the tail
of $M$ satisfies $P(M > m) \le Ce^{-\nu m}$ for some $C, \nu
> 0$ depending on $\lam$, $\de$, 
and furthermore that $\nu\to\oo$
if $\de\downarrow 0$ with $\lam$ held fixed.
The behaviour of $\nu$
may be calculated exactly by elementary means, as follows.
One may consider a variant of the branching process in which
each particle has a lifetime with the exponential distribution, parameter
$\de$, and has \emph{pairs} of children at rate $2\lam$ while alive.
The probability generating function of the total progeny
may be found in closed form in the usual way (see \cite{GS}, Problem 5.12.11), 
and one obtains thus a sharp estimate for $\nu$
via Markov's inequality.

Similarly, the vertical displacement of $C$ is smaller (in distribution) than
the aggregate lifetime $U$ of the particles in the branching process. Just as above,
$U$ has exponentially-decaying tail when $E(N)<1$, and the constant in the exponent
tends to infinity as $\de\downarrow 0$ for fixed $\lam$.

Now,
$$
\Plb(0 \lra \pd\La_m) \le P(M \ge m) + P(U \ge m).
$$
A little more is needed for the theorem. The interval $I$ is connected to 
a number of bridges having the Poisson distribution with parameter $2\lam$.
The clusters generated by the ends of these bridges have sizes dominated (stochastically)
as above, and the claim follows.
\end{proof}

In the proof of Theorem \ref{mainest},
we make use of the following two lemmas, which
are proved in the next section using the method of `ratio weak-mixing'.

\begin{lem}\label{lem1}
Let $\lam,\de\in(0,\oo)$ satisfy $\lam/\de<1$. 
There exist constants $\a,C_1,C_2\in(0,\oo)$ such that{\rm:}
for all $L\ge 0$, $m\ge 1$, $\b > 2m+L$, and all $\eps^+,\eps^-\in\Si_L$,
$$
C_1 L^{-\a}
\le \frac{\phmb(\s_L^+=\eps^+,\, \s_L^-=\eps^-)}
{\phmb(\s_L^+=\eps^+)\phmb(\s_L^-=\eps^-)}
\le
C_2 L^{\a}.
$$
\end{lem}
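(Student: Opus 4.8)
The plan is to reduce the two-sided bound on this ratio to a statement about how strongly the conditional law of the top-lip spin-vector $\s_L^+$ can be influenced by a boundary condition that screens it from the bottom lip $\s_L^-$, and then to estimate that influence using the spatial Markov property and ratio weak-mixing. First I would observe that, because of the slit, every path in $\Lamb$ from $S_L^+$ to $S_L^-$ must pass through one of the $2m$ ``corner'' points $(z,0)$ with $z\in[-m,-1]\cup[L+1,m+L]$ at which the upper part $H^+=\Lamb\cap\{t\ge 0\}$ meets the lower part $H^-=\Lamb\cap\{t\le 0\}$ (together with the far edge at time $\tfrac12\b$ coming from the top/bottom identification, which lies at distance at least $\tfrac12\b> m+\tfrac12L$ from $S_L$ and contributes only a $1+\o(1)$ factor everywhere below, so I suppress it). By the spatial Markov property of $\phmb$ — equivalently, of the associated continuum Potts measure $\PPP$ — conditionally on the spin-vector $T$ at these corner points the events $\{\s_L^+=\eps^+\}\in\sF_{H^+}$ and $\{\s_L^-=\eps^-\}\in\sF_{H^-}$ are independent. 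Writing $a(\tau)=\phmb(\s_L^+=\eps^+\mid T=\tau)$, $b(\tau)=\phmb(\s_L^-=\eps^-\mid T=\tau)$, $p(\tau)=\phmb(T=\tau)$, the ratio equals $\bigl(\sum_\tau a(\tau)b(\tau)p(\tau)\bigr)\big/\bigl[\bigl(\sum_\tau a(\tau)p(\tau)\bigr)\bigl(\sum_\tau b(\tau)p(\tau)\bigr)\bigr]$; pulling $\max_\tau a(\tau)$, resp. $\min_\tau a(\tau)$, out of the appropriate sum (the factors $\sum_\tau b(\tau)p(\tau)$ then cancel, and $\sum_\tau p(\tau)=1$) gives
\[
\frac{\min_\tau a(\tau)}{\max_\tau a(\tau)}\ \le\ \frac{\phmb(\s_L^+=\eps^+,\ \s_L^-=\eps^-)}{\phmb(\s_L^+=\eps^+)\,\phmb(\s_L^-=\eps^-)}\ \le\ \frac{\max_\tau a(\tau)}{\min_\tau a(\tau)}.
\]
So a single estimate — the oscillation of $a(\cdot)$ over $\tau$ is at most $CL^\a$, uniformly in $\eps^+$, $m$ and $\b$ — yields both sides of the lemma, with $C_2=C$ and $C_1=C^{-1}$.

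Next I would check that $a(\tau)>0$ for every $\tau$ and $\eps^+$, so that these quotients are meaningful. This is a finite-energy statement: with probability bounded below, uniformly in the configuration elsewhere, each column of $S_L$ carries a death immediately above time $0$, so that each $x^+$ lies in its own singleton cluster, which may then be assigned the spin $\eps^+_x$ independently. In particular $a$ is bounded below by a (typically $L$-exponentially small) positive number, which is why the additive estimates of ordinary weak mixing will not suffice and one genuinely needs the \emph{ratio} form.

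The core is therefore to compare $a(\tau)$ and $a(\tau')$ for two corner configurations, and here the dependence of $\s_L^+$ on $\tau$ runs only through clusters inside $H^+$ joining a corner to $S_L^+$. The difficulty is that the two nearest corners, $(-1,0)$ and $(L+1,0)$, are adjacent to the ends of $S_L$, so a ``well-separated sets'' mixing estimate does not apply directly to $S_L^+$ versus the corners. The remedy is to decouple the two ends of the slit first: condition on the configuration of $\phmb$ inside two bounded boxes $B_0$, $B_L$ centred at the lattice ends $x=0$, $x=L$ of $S_L$ (another finite-energy step, costing only a multiplicative constant independent of $L$), which leaves the interior of $S_L^+$ separated from every corner by an intrinsic distance $d(x)$ that grows as one moves into the slit, roughly $d(x)\asymp\min(x,L-x)$ plus the distance to the corner. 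On this reduced configuration I would invoke the ratio weak-mixing inequality for the continuum random-cluster model (proved in Section~\ref{rwm}): it bounds $a(\tau)/a(\tau')$ by $\exp\bigl(c\sum_{x\in S_L}e^{-\g d(x)}\bigr)$ times a prefactor polynomial in $L$, the latter accounting for the $O(L)$ sites constituting the event $\{\s_L^+=\eps^+\}$ over which the telescoping runs. The hypothesis $\th<1$ enters exactly here, via the comparison of the $q=2$ continuum random-cluster model with \emph{subcritical} continuum percolation (Theorem~\ref{contperc}), which makes the exponential decay in $d(x)$ genuine, so that $\sum_{x\in S_L}e^{-\g d(x)}$ is bounded and the whole quantity is at most $CL^\a$.

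The step I expect to be the main obstacle is this last one: since the top and bottom lips of the slit are at Euclidean distance $0$, the entire estimate rests on using the slit itself as a geometric barrier and on a careful treatment of the $O(1)$-distance region near the two ends of $S_L$, where ratio weak-mixing is weakest. It is precisely this end effect, combined with the linear number of sites in the event, that forces a polynomial factor $L^{\pm\a}$ rather than a constant; everything else — the Markov reduction, the finite-energy positivity of $a$, and the passage through subcritical continuum percolation — is routine given the tools already assembled in the paper.
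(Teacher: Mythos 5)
Your overall strategy---screen $S_L^+$ from $S_L^-$ geometrically through the equatorial ``corners,'' treat the $O(1)$-distance region near the slit endpoints by finite-energy conditioning, and apply ratio weak-mixing for the well-separated interior---is exactly the strategy of the paper, and the initial Markov reduction to $\min_\tau a(\tau)/\max_\tau a(\tau)\le(\text{ratio})\le\max_\tau a(\tau)/\min_\tau a(\tau)$ is a clean and correct reformulation that the paper does not spell out. However, there are two concrete gaps in how you would close the argument. First, you invoke a ratio weak-mixing bound of the form $a(\tau)/a(\tau')\le\exp\bigl(c\sum_{x\in S_L}e^{-\g d(x)}\bigr)$ times a polynomial in $L$; nothing of this per-site, multiplicative shape is proved in Section~\ref{rwm}. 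Theorem~\ref{rweakm} (and its slit-box consequence, Lemma~\ref{thm2}) gives an \emph{all-at-once} estimate $|\,\phmb^\tau(\s_\De=\a)/\phmb^{\tau'}(\s_\De=\a)-1|\le 2t$ for the full spin-vector, valid only under the threshold condition $t\le\frac12$, where $t$ involves $t_1=\ol\phi(\De\lra D)$ and $t_2=\sqrt{\ol\phi(D\lra\Ga)}$. Second, your attribution of the polynomial $L^{\pm\a}$ to ``the $O(L)$ sites over which the telescoping runs'' is incorrect: even if one could telescope, $\prod_x(1+e^{-\g d(x)})$ with $d(x)\asymp\min(x,L-x)$ is bounded by a constant, not by a power of $L$.

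What the paper actually does is the following. One buffers the two ends of the slit by $K=\lceil\ln L\rceil$ sites (not fixed boxes): iterating Lemma~\ref{finite-energy} $4K$ times yields
$$\Bigl(\tfrac{\de}{2(2\lam+\de)}\Bigr)^{4K}\,\phmb(\s_{L,K}^+=\eps_K^+,\ \s_{L,K}^-=\eps_K^-)\ \le\ \phmb(\s_L^+=\eps^+,\ \s_L^-=\eps^-)\ \le\ \phmb(\s_{L,K}^+=\eps_K^+,\ \s_{L,K}^-=\eps_K^-),$$
with a similar pair of bounds for the marginals; the factor $(\de/(2(2\lam+\de)))^{4K}\asymp L^{-2\a}$ is the \emph{sole} source of the polynomial. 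Then Lemma~\ref{thm2} (applied to $\De=\{x^+:K\le x\le L-K\}$, $\Ga=\{x^-:K\le x\le L-K\}$, separating set $D$ equal to the equator outside the slit) gives $|\,\phmb(\s_{L,K}^+,\s_{L,K}^-)/(\phmb(\s_{L,K}^+)\phmb(\s_{L,K}^-))-1|\le Ce^{-\g K/2}\le CL^{-\g/2}$, i.e.\ an \emph{additive} $1\pm o(1)$ estimate once $L$ is large, because $t_1,t_2$ are sums of subcritical connection probabilities starting at distance $K$. Combining these two displays yields the lemma. In particular, your plan with fixed-size boxes $B_0,B_L$ and a bounded exponential correction would, if carried through correctly with Theorem~\ref{rweakm}, prove a \emph{constant} two-sided bound rather than a polynomial one, so the polynomial in the statement is a deliberate slack rather than a forced end-effect; but as written your ratio weak-mixing step invokes an estimate that has not been established.
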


In the second lemma we allow a general boundary condition on $\Lamb$. 

\begin{lem}\label{lem2}
Let $\lam,\de\in(0,\oo)$. 
There exist constants $C,\g\in(0,\oo)$ satisfying $0<\g<1$ when
$\lam/\de<1$ such that{\rm:}
for all $L\ge 0$, $m\ge 1$, $\b\ge 4(m+L+1)$, all events $A\subseteq\Si_L\times\Si_L$,
and all admissible \rc\ boundary-conditions $\tau$ and spin boundary-conditions $\eta$
of $\Lamb$,
$$
\left| \frac{\phmb^{\a}((\s^+_L,\s_L^-)\in A)}
{\phmb((\s^+_L,\s_L^-)\in A)}-1\right|
\le Ce^{-\frac27\g m},\qquad\mbox{for } \a = \tau,\eta,
$$
whenever the right side of the inequality is less than or equal to $1$.
The function $\g$ may be taken as that of Theorem \ref{contperc}.
\end{lem}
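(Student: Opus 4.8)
The plan is to deduce the estimate from the exponential decay of Theorem \ref{contperc}, the stochastic comparison \eqref{stochcomp}, and the spatial Markov property of the continuum \rc/Potts model, organised around the ratio weak-mixing method of Section \ref{rwm}. The geometric input is that, since $\b\ge 4(m+L+1)$, the slit $S_L$ lies at \rc-distance at least $m$ from the set $\pdh\Lamb$ carrying any admissible boundary condition $\tau$, and at distance at least $2(m+L+1)>m$ from the identified top/bottom of $\Lamb$; so the whole `influence region' of the boundary is separated from $S_L$ by at least $m$. The appearance of the reference measure $\phmb$ (as opposed, say, to $\pn$) is immaterial: the same comparison will apply to any of the measures obtained by varying $\La$ and the boundary condition.

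First I would reduce to the percolation configuration. The event $A$ depends only on $(\s_L^+,\s_L^-)$, and, given $\om\in\Ommb$, the conditional law of $(\s_L^+,\s_L^-)$ is governed by the partition $\Pi(\om)$ of the $2(L+1)$ slit-vertices $\{x^\pm:x\in S_L\}$ into clusters of $\om$ together with the boundary datum, which only ever \emph{merges} clusters that meet $\pdh\Lamb$: within $\Pi(\om)$ cluster-spins are uniform, and independent across clusters not pinned by a spin boundary condition. Let $W$ be the event $\{S_L^\pm\lra\pdh\Lamb\}$ that some slit-vertex is joined, inside $\Lamb$, to $\pdh\Lamb$ (equivalently to the support of the boundary condition). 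On $W^c$ the partition $\Pi(\om)$, hence the conditional law of $(\s_L^+,\s_L^-)$, is one and the same for $\tau$, for $\eta$, and for free boundary conditions; writing $\psi_A(\Pi)$ for this common conditional probability of $A$, we have for $\a\in\{\tau,\eta\}$
$$
\phmb^\a(A)=\phmb^\a\bigl(\psi_A(\Pi);\,W^c\bigr)+\phmb^\a(A;\,W),
$$
and likewise with $\phmb$ in place of $\phmb^\a$.

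The term carried by $W$ is controlled by comparison with continuum percolation: by \eqref{stochcomp} with $q=2$, connection probabilities under $\phmb^\a$ and $\phmb$ are dominated by those under $\Plb$, and since every slit-vertex is at distance at least $m$ from $\pdh\Lamb$ (and from top/bottom), Theorem \ref{contperc} gives $\phmb^\a(W),\phmb(W)\le C e^{-\frac27\g m}$ with $\g$ the rate of Theorem \ref{contperc}; this is the step that uses $\lam/\de<1$ and yields $\g\to\oo$ as $\th\downarrow 0$. The heart of the matter is then the comparison of the two $W^c$-terms, and the essential difficulty — the main obstacle — is that one needs a \emph{ratio}, not merely a difference: the discrepancy must be bounded by a fixed multiple of $\phmb(A)\,e^{-\frac27\g m}$, uniformly over all events $A\subseteq\Si_L\times\Si_L$ and with no loss in $L$. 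This is exactly what the ratio weak-mixing argument of Section \ref{rwm} supplies. One conditions on the innermost interface $\Ga$ in $\Lamb$ that shields $S_L$ from $\pdh\Lamb$; such a $\Ga$ exists (and, by the percolation comparison, sits at distance at least $\tfrac27 m$ from $\pdh\Lamb$ outside an event of probability $\O(e^{-\frac27\g m})$ — the nested-box bookkeeping behind these two comparisons being responsible for the explicit constant $\tfrac27$). By the spatial Markov property the conditional law of $\om$ inside $\Ga$, and hence of $(\s_L^+,\s_L^-)$, does not depend on the boundary condition, so both $\phmb^\a(A;W^c)$ and $\phmb(A;W^c)$ are integrals of $\phi_{\internal\Ga}(A)$ against the law of $\Ga$, and these two laws of $\Ga$ — laws of a blocking interface at distance at least a fixed fraction of $m$ from the boundary — agree up to a factor $1+\O(e^{-\frac27\g m})$ by the same percolation comparison.

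Assembling the pieces yields $\phmb^\a(A)=\phmb(A)\bigl(1+\O(e^{-\frac27\g m})\bigr)$, uniformly in $A$, in $\a$, and in $L$, from which the stated inequality follows whenever the error term is at most $1$. The one point that requires care, beyond routine verification, is to arrange that the contribution $\phmb^\a(A;\,W)$ is itself bounded by a multiple of $\phmb(A)\,e^{-\frac27\g m}$ rather than merely by $e^{-\frac27\g m}$ — otherwise the division by $\phmb(A)$ would be fatal for very unlikely events $A$. This is handled, following Alexander \cite{Al1,Al2}, by folding $W$ into the interface decomposition and invoking the finite-energy (insertion-tolerance) property of the continuum \rc\ model, so that conditioning $A$ on the location of the shielding interface distorts its probability by at most a bounded factor; it is precisely this device that turns the weak-mixing difference bound into the ratio bound claimed in the lemma.
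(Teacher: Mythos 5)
Your plan has the right ingredients — the percolation comparison via \eqref{stochcomp} and Theorem~\ref{contperc}, a decomposition around the event that the slit is shielded from the boundary, the spatial Markov property, and an explicit awareness that the problem demands a \emph{ratio} bound uniformly in $A$ rather than a difference bound — but the step that is supposed to deliver that ratio bound is precisely the one left unjustified, and I don't believe it can be repaired as written.

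Concretely, after decomposing on $W$ and $W^c$, you assert that the laws of the innermost shielding interface $\Gamma$ under $\phmb^\a$ and under $\phmb$ ``agree up to a factor $1+\O(e^{-\frac27\g m})$ by the same percolation comparison.'' But the percolation comparison (stochastic domination by $\Plb$ together with Theorem~\ref{contperc}) gives only \emph{absolute} bounds on connection probabilities — e.g.\ that $\Gamma$ is unlikely to come within $\frac27 m$ of $\pdh\Lamb$. It does not give a Radon--Nikodym bound on the interface laws under the two measures; such a statement is itself a ratio weak-mixing assertion and is essentially what you are trying to prove, so the argument is circular at this point. Likewise, the finite-energy device you invoke to control $\phmb^\a(A;W)$ by a multiple of $\phmb(A)$ handles conditioning on \emph{local} modifications (and is used in the paper for precisely that, in the proof of Lemma~\ref{lem1}), but it does not bound conditional probabilities of an arbitrary slit-spin event given a \emph{large-scale} event such as $W$; for very unlikely $A$ this bound will simply fail.

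The paper's proof sidesteps all of this by first proving Theorem~\ref{rweakm}, whose argument is a genuinely different device from the interface conditioning you sketch: a three-way monotone coupling of $\ophi$, $\phL^\tau$, $\phL^{\tau'}$ is combined with auxiliary spin configurations $\rho,\rho'$ satisfying $\rho_I=\rho'_I$ on $\{\rho_D=\rho'_D\}$, and a well-chosen event $H\subseteq\{\rho_\De=\rho'_\De\}$ produces the exact cancellation
$\phL^\tau(\s_\De=\a)/\phL^{\tau'}(\s_\De=\a) = \kappa(H\mid\rho'_\De=\a)/\kappa(H\mid\rho_\De=\a)$,
which converts the problem into estimating two conditional probabilities of $\ol H$; those are controlled via Markov's inequality by $t_1=\ol\phi(\De\lra D)$ and $t_2=\sqrt{\ol\phi(D\lra\Ga)}$. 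With that theorem in hand, the proof of Lemma~\ref{lem2} is short: one takes $\De=S_L^+\cup S_L^-$, $\Ga=\pdh\Lamb$, and a fixed separating circuit $D$ (a parallelogram of slope $2$ with horizontal half-width $k=\tfrac37 m$, as in Figure~\ref{fig:4}), applies the pointwise ratio bound of Theorem~\ref{rweakm}, sums over $A$, and uses the geometry of $D$ together with Theorem~\ref{contperc} to obtain $t_1\le C'e^{-\frac27\g m}$ and $t_2^2\le C'e^{-\frac47\g m}$ — the constant $\tfrac27$ comes out of these elementary estimates, not from ``nested-box bookkeeping.'' You would do well to make explicit use of Theorem~\ref{rweakm} rather than re-deriving an interface version of it: the ratio bound is exactly where your sketch leaves the hard work undone.
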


The above two lemmas are stated in terms of the box $\Lamb$ with top/bottom 
periodic boundary conditions. 
Their proofs are valid under other boundary conditions also, including
free boundary conditions. 
We make use of this observation during the proofs that follow.

The supremum in Theorem \ref{mainest} may be handled by way of the next lemma.

\begin{lem}\label{supc}
Let $\mu$ be a probability measure on the finite set $S$. Let $\sC$ be the class of functions
$c: S\to [0,\oo)$ such that $\sum_{s\in S} c(s)^2 = 1$. Then
$$
\sum_{s\in S} c(s)\mu(s) \le \sqrt{\sum_{s\in S} \mu(s)^2},\qq c\in\sC,
$$
with equality if and only if 
\begin{equation*}
c(s) = \frac{\mu(s)}{\sqrt{\sum_{t\in S} \mu(t)^2}},
\qquad s\in S.
\end{equation*}
\end{lem}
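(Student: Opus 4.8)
The plan is to recognise the inequality as an instance of the Cauchy--Schwarz inequality in $\RR^{|S|}$. First I would regard $c=(c(s):s\in S)$ and $\mu=(\mu(s):s\in S)$ as vectors and apply $\sum_{s\in S} c(s)\mu(s)\le \|c\|_2\,\|\mu\|_2$. Since $c\in\sC$ has $\|c\|_2=1$, the right-hand side is exactly $\sqrt{\sum_{s\in S}\mu(s)^2}$, which gives the stated bound. Note that $\mu$ is a probability measure, so $\sum_{s\in S}\mu(s)=1$ and in particular $\sum_{s\in S}\mu(s)^2>0$; hence the normalising constant appearing in the equality case is well defined.

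For the equality statement I would invoke the equality condition in Cauchy--Schwarz: equality holds if and only if $c$ and $\mu$ are linearly dependent, i.e.\ $c(s)=\kappa\mu(s)$ for all $s$ and some scalar $\kappa$. Because $c(s)\ge 0$ and $\mu(s)\ge 0$ with $\mu\not\equiv 0$, necessarily $\kappa\ge 0$, and any such $c$ does lie in $\sC$ after suitable choice of $\kappa$. Imposing $\sum_{s\in S}c(s)^2=\kappa^2\sum_{s\in S}\mu(s)^2=1$ forces $\kappa=1/\sqrt{\sum_{t\in S}\mu(t)^2}$, which is precisely the claimed extremiser; conversely, substituting this $c$ back in shows it attains equality, so the characterisation is both necessary and sufficient.

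As an alternative one could argue by a direct variational computation: the linear functional $c\mapsto\sum_{s\in S}c(s)\mu(s)$ attains its maximum over the compact set $\sC$, and a Lagrange-multiplier (or KKT) analysis on the constraints $\sum_s c(s)^2=1$, $c(s)\ge 0$ shows that at the maximiser $c(s)$ is proportional to $\mu(s)$ on $\{s:\mu(s)>0\}$ and vanishes elsewhere, recovering the same extremiser after normalisation. I do not expect any genuine obstacle here; the only points needing a word of care are that $\sqrt{\sum_{s\in S}\mu(s)^2}>0$, so that the purported extremiser makes sense, and that the proportionality constant in the equality case is genuinely non-negative, so that the extremal $c$ indeed belongs to $\sC$.
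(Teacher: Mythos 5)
Your primary argument via Cauchy--Schwarz is correct and complete. The paper's own proof is a one-line appeal to a Lagrange multiplier, which you also sketch as an alternative; so you cover the paper's route as well, but your main route is genuinely different. The two approaches are close in spirit (both compute the supremum of a linear functional over the unit sphere intersected with the non-negative orthant), but the Cauchy--Schwarz route is tidier: it delivers the inequality and the equality characterisation in a single stroke, with no need to handle the sign constraints $c(s)\ge 0$ separately as a KKT analysis would require. The one point worth noting in either approach --- and you do note it --- is that $\sum_{s}\mu(s)^2>0$ because $\mu$ is a probability measure on a finite set, so the normalising constant in the extremiser is well defined and the proportionality constant is automatically non-negative.
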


\begin{proof}[Proof of Lemma \ref{supc}]
This is easily proved using a Lagrange multiplier.
\end{proof}

\begin{proof}[Proof of Theorem \ref{mainest}]
Let $0<\lam<\de$, and let $\g$ be as in Theorem \ref{contperc}.
Let $2 \le m \le n < \oo$ and take $\b > 4(m+L+1)$. Later we 
shall let $\b\to\oo$.
Since $\phmb\lest\phnb$, 
we may couple $\phmb$ and $\phnb$ via a probability measure
$\nu$ on pairs $(\om_1,\om_2)$ of configurations on $\Lanb$ in such a way that
$\nu(\om_1\le\om_2) = 1$. It is standard
(as in \cite{G-RC,New93}) that we may find $\nu$ such that
$\om_1$ and $\om_2$ are identical configurations within the region of $\Lamb$ that
is not connected to $\pdh\Lamb$ in the upper configuration 
$\om_2$.
Let $D$ be the set of all pairs 
$(\om_1,\om_2)\in\Omnb\times\Omnb$ such that: $\om_2$ contains
no path joining $\pd B$ to $\pdh\Lamb$, where
$B = [-r,r+L]\times[-2(r+L+1),2(r+L+1)]$ and $r$ ($< \frac12 m$) will be chosen later. 
We take free boundary conditions on $B$.
 The relevant regions are illustrated in Figure \ref{fig:0}.

\begin{figure}[ht] 
\begin{center}
\input{gosfig0.pstex_t}
\caption{The boxes $\Lanb$, $\Lamb$, and $B$.}\label{fig:0}
\end{center}
\end{figure}

Having constructed the measure $\nu$ accordingly, 
we may now allocate spins to the clusters of
$\om_1$ and $\om_2$ in the manner described earlier. This may be done in such a way that,
on the event $D$, the spin-configurations associated with $\om_1$ and $\om_2$
within $B$ are identical. We write $\s_1$ (\resp, $\s_2$) for
the spin-configuration on the clusters of $\om_1$ (\resp, $\om_2)$, and
$\s_{i,L}^\pm$ for the spins of $\s_i$ on the slit $S_L$.

For $c:\Si_L\to[0,\oo)$ with $\|c\|=1$, let
\be
S_c = \frac{c(\s_{1,L}^+)c(\s_{1,L}^-)}{\amb}
- \frac{c(\s_{2,L}^+)c(\s_{2,L}^-)}{\anb},
\ee
so that
\be\label{e1}
\frac{\phmb(c(\s_L^+)c(\s_L^-))}{\amb}
- \frac{\phnb(c(\s_L^+)c(\s_L^-))}{\anb}
= \nu (S_c;D) + \nu(S_c;\ol D).
\ee
Here, $\ol D$ is the complement of $D$, and
$\nu(f;D)$ denotes $\nu(f1_D)$.

We consider first the term $\nu(S_c;D)$ in \eqref{e1}. On the event
$D$, we have that $\s_{1,L}^\pm = \s_{2,L}^\pm$, so that
\be\label{e2}
|\nu(S_c;D)| \le \left|1-\frac {\amb}{\anb}\right|
\frac{\phmb( c(\s_{L}^+)c(\s_{L}^-))}{\amb}.
\ee
By Lemmas \ref{lem1} and \ref{supc},
\begin{align}
\phmb(c(\s_{L}^+)c(\s_{L}^-))
&= \sum_{\eps^\pm \in \Si_L} c(\eps^+)c(\eps^-) \phmb( \s_{L}^+=\eps^+,\, \s_{L}^-=\eps^- )
\nonumber\\
&\le C_2 L^\a \phmb(c(\s_L^+))\phmb(c(\s_L^-))\nonumber\\
&= C_2 L^\a \left(\sum_{\eps\in \Si_L}c(\eps)\phmb(\s_L^+=\eps) \right)^2\nonumber\\
&\le C_2 L^\a \sum_{\eps\in\Si_L} \phmb(\s_L^+=\eps)^2,\label{eq:300}
\end{align}
where we have used reflection-symmetry in the horizontal axis at the 
intermediate step.
By Lemma \ref{lem1} and reflection-symmetry again,
\begin{align*}
\amb &= \sum_{\eps\in\Si_L} \phmb(\s_L^+=\s_L^-=\eps)\\
&\ge C_1 L^{-\a} \sum_{\eps\in\Si_L} \phmb(\s_L^+=\eps)^2.
\end{align*}
Therefore,
\be\label{e4}
\frac{\phmb( c(\s_{L}^+)c(\s_{L}^-))}{\amb} \le
C_3 L^{2\a},
\ee
where $C_3=C_2/C_1$.

We set $A=\{\s_L^+=\s_L^-\}$ in Lemma \ref{lem2} to find that, for sufficiently
large $m\ge M'(\lam,\de)$,
$$
\left| \frac{\phmb^\eta(\s^+_L=\s_L^-)}
{\phmb(\s^+_L=\s_L^-)}-1\right|
\le C e^{-\frac27 \g m}<\frac 12.
$$
By averaging over $\eta$, sampled according to $\phnb$, we deduce that
$$
\left| \frac{\phnb(\s^+_L=\s_L^-)}
{\phmb(\s^+_L=\s_L^-)}-1\right|
\le Ce^{-\frac27\g m} < \frac12,
$$
which is to say that
\be\label{e3}
\left|\frac{\anb}{\amb}-1\right| \le C e^{-\frac27\g m} < \frac12.
\ee

We make a note for later use.
By the remark after Lemma \ref{lem2}, inequality \eqref{e4} holds also
with $\phmb$ replaced by the continuum \rc\ measure $\phi_B$
on the box $B$ with free boundary conditions. Similarly, 
we may take $C$ and $M'$ above such that
\be\label{e10}
\left|\frac{\anb}{a_B}-1\right| \le C e^{-\frac27\g r} < \frac12, \qquad r\ge M'(\lam,\de),
\ee
where $a_B = \phi_B(\s_L^+ = \s_L^-)$.

Inequalities \eqref{e4} and \eqref{e3} may be combined as in
\eqref{e2} to obtain
\be\label{e5}
|\nu(S_c;D)| \le  C_4 L^{2\a}e^{-\frac27\g m}
\ee
for an appropriate constant $C_4 = C_4(\lam,\de)$ and all $m\ge M'(\lam,\de)$.

We turn to the term $\nu(S_c; \ol D)$ in \eqref{e1}.
Evidently,
\be\label{e6}
|\nu(S_c;\ol D)| \le A_m + B_n,
\ee
where
$$
A_m = \frac {\nu(c(\s_{1,L}^+)c(\s_{1,L}^-);\ol D)}{\amb},\q
B_n = \frac {\nu(c(\s_{2,L}^+)c(\s_{2,L}^-);\ol D)}{\anb}.
$$
There exist constants $C_5$, $M''$ depending on $\lam$, $\de$,
such that, for $m>r \ge M''$,
\begin{align}
B_n &=
\frac{\nu(\ol D)}{\anb} \nu(c(\s_{2,L}^+)c(\s_{2,L}^-)\mid \ol D )\nonumber\\
&= \frac{\nu(\ol D) }{\anb}
\phnb\bigl( \phi_B^\tau(c(\s_{2,L}^+)c(\s_{2,L}^-))\mid \ol D\bigr)
\nonumber\\ 
&\le \frac{\nu(\ol D)}{a_B} C_5 \phi_B(c(\s_{2,L}^+)c(\s_{2,L}^-))
\label{eq:301}
\end{align} 
by Lemma \ref{lem2} with $\phmb$ replaced by $\phi_B$, and \eqref{e10}.
At the middle step, we have used conditional expectation
given the configuration $\tau$
on $\Lamb\sm B$.
By \eqref{e4} applied to the measure $\phi_B$, there exists $C_6 = C_6(\lam,\de)$ such that
\be
\frac 1{a_B}\phi_B(c(\s_{2,L}^+)c(\s_{2,L}^-))  \le C_6 L^{2\a}.
\label{e7}
\ee
Inequalities \eqref{eq:301}--\eqref{e7} imply an upper bound for $B_n$.

A similar upper bound is valid for $A_m$, on noting that the conditioning
on $\ol D$ imparts certain information about the configuration
$\om_1$ outside $B$ but nothing
further about $\om_1$ within $B$.
Combining this with \eqref{e6}--\eqref{e7}, we find that,
for $r \ge M'''(\lam,\de)$ and some $C_7=C_7(\lam,\de)$,
\be\label{e8}
|\nu(S_c;\ol D)| \le \nu(\ol D) C_7 L^{2\a}.
\ee
Let $r=  M'''$ to obtain by \eqref{stochcomp} and Theorem \ref{contperc}
that
\be\label{e23}
\nu(\ol D) \le C_8 (r+L) e^{-\frac12\gamma m} \le C_9 L e^{-\frac12\gamma m} ,
\qquad m \ge 2M''',
\ee 
for some $C_8$, $C_9$.
We combine \eqref{e5}, \eqref{e8}, \eqref{e23} as in \eqref{e1},
and let $\b\to\oo$
to obtain \eqref{eq:36} from \eqref{eq:34}--\eqref{eq:35}, for $m\ge 
\max\{M',M'',2M'''\}$.
The constants $C$, $\g$
may be amended to obtain the required inequality.

Finally, we remark that $\a$, $C$, and $M$ depend on $\lam$ and $\de$.
The left side of \eqref{eq:36} is invariant under re-scalings of the time-axes,
that is, under the transformations $(\lam,\de) \mapsto (\lam\eta, \de\eta)$ 
for $\eta\in(0,\oo)$. We may therefore work with the new values
$\lam'=\th$, $\de'=1$, with appropriate constants $\a(\th,1)$,
$C(\th,1)$, $M(\th,1)$. 
\end{proof}

\section{Ratio weak-mixing}\label{rwm}
Our proofs of Lemmas \ref{lem1} and \ref{lem2} make use of various
couplings of \rc\ measures. Such couplings are fairly standard
(see \cite{G-RC, New93} for example) and have been utilised in
\cite{Al1,Al2} in a study of ratio weak-mixing for \rc\ and spin
models on discrete lattices. We follow in part the arguments of
\cite{Al1,Al2}, but we are not concerned here with the level of
generality of those papers.

Here is some notation. Let $\La$ be a box in $\ZR$ (we shall later
consider a box $\La$ with a slit $S_L$, for which the same
definitions and results are valid).
A {\em path} $\pi$ of $\La$ is an alternating sequence of disjoint intervals
(contained in $\La$) and unit line-segments 
of the form $[z_0,z_1]$, $b_{12}$, $[z_2,z_3]$, $b_{34}$,
$\dots$, $b_{2k-1,2k}$, $[z_{2k},z_{2k+1}]$, where: each pair $z_{2i}$, $z_{2i+1}$ is on the
same `time-line' of $\La$, and $b_{2i-1,2i}$ is a unit line-segment with endpoints $z_{2i-1}$
and $z_{2i}$, perpendicular to the time-lines. 
Note that the equality $z_{2i}=z_{2i+1}$ is permitted.
The path $\pi$ is said to join $z_0$ and $z_{2k+1}$.
The {\em length} of $\pi$ is its
one-dimensional Lebesgue measure, with $\pi$ viewed as a union
of line-segments of $\RR^2$; note that each bridge of
$\pi$ contributes 1 to its length. A \emph{circuit} $D$ of $\La$ is a path 
except inasmuch as
$z_0=z_{2k+1}$. A set $D$ is called \emph{linear}
if it is a disjoint union of paths and/or 
circuits. Let $\De$, $\Ga$ be disjoint subsets
of $\La$. The linear set $D$ is said to \emph{separate} $\De$ and $\Ga$
if every path of $\La$ from $\De$ to $\Ga$ passes through $D$, and $D$ is minimal with
this property in that no strict subset of $D$ has the property.

Let $\om\in\OmL$. An {\em open path} $\pi$ of $\om$ is a path of $\La$
such that, in the notation above,
the intervals $[z_{2i},z_{2i+1}]$ contain no death of $\om$, and the line-segments
$b_{2i-1,2i}$ are bridges of $\om$.

The (one-dimensional) Lebesgue measure of a measurable subset $S$ of $\ZR$ is
denoted $|S|$. Let $S$ and $T$ be measurable subsets of $\La$. The distance $d(S,T)$ from
$S$ to $T$ is defined to be the infimum of the lengths of paths
having one endpoint in
$S$ and one in $T$. Note that the distance function $d$ depends
on the choice of $\La$ (and, in particular, on the boundary conditions
and the presence/absence of a slit).

Let $\phi_\La$ denote the \rc\ measure on $\OmL$
with parameters $\lam$, $\de$, $q=2$ (with top/bottom
periodic boundary condition).
Let $\Ga$ be a measurable subset and $\De$ a finite subset of $\La$ such that
$\De\cap \Ga=\es$.
We shall prove a `ratio weak-mixing property'
of the spin-configurations in $\De$ and $\Ga$.
In order to introduce the necessary couplings,
we consider next a certain `wired' boundary condition on $\La$.
Let $\ophi$ denote the continuum \rc\ measure on $\La$ with parameters
$\lam$, $\de$, $q=2$, but subject to the difference that the set
of clusters that intersect $\De\cup\Ga$ count only $1$
in all towards
the cluster count $k(\om)$ in \eqref{rcPo}. We call $\ophi$ a
`wired \rc\ measure'. It is standard, just as in the
discrete case, that $\ophi$ may be used to generate a random
spin-configuration on $\La$ corresponding to a continuum Ising
model {\em conditioned} on having the same spin at all points in
$\De\cup\Ga$: let $\om$ be sampled according to $\ophi$,
and allocate a randomly chosen spin from the spin set $\{-1,+1\}$
to each cluster of $\om$, these spins being independent between clusters.

Just as in the lattice case, one may use
$\ophi$ to obtain \rc\ measures with other boundary conditions.
Let $\tau\in\Si_{\Ga}$, and let $T_i=\{x\in\Ga: \tau(x)=i\}$
for $i=\pm 1$. The corresponding \rc\ measure, denoted $\phL^\tau$
(as in Section \ref{rc}), is that obtained by: (i)
the set of clusters intersecting $\Ga$ counts
only $1$ in all towards the cluster count in \eqref{rcPo}, and (ii)
we condition on the event that there exists no path joining $T_1$ and $T_2$.
Since $\ophi \gest \phL^\tau$, there exists a coupling $\kappa$ of the two measures with the
property that $\kappa((\om_1,\om_2):\om_1\ge \om_2) = 1$.
It is natural to allocate spins to the clusters of $\om_1$ and $\om_2$
in such a way that, whenever a cluster $C$ of $\om_2$
is also a cluster of $\om_1$, and $C \cap \Ga =\es$,
then these two clusters have the same spin.

One may carry out the above construction 
simultaneously for two (or more) $\tau$.
Let $\tau,\tau'\in \Si_{\Ga}$. 
We may find a coupling of $\ophi$, $\phL^\tau$, $\phL^{\tau'}$
such that the first component is greater than each of the other two. That is,
there exists a measure $\kappa$ on $\Om_\La^3=\{(\om,\om_1,\om_2)\}$ such that:
$\om$ (\resp, $\om_1$, $\om_2$) has law $\ophi$ (\resp, $\phL^\tau$, $\phL^{\tau'}$),
and $\kappa(\om\ge \om_1,\om_2)=1$.

\begin{thm}[Ratio weak-mixing]\label{rweakm}
Let $\Ga\subseteq\La$ be measurable,  let $\De\subseteq \La$ be
finite such that $\De\cap\Ga=\es$, and let $D$ be
a linear subset of $\La$ that separates $\De$ and $\Ga$.
Let $\lam,\de\in(0,\oo)$.
For $\tau,\tau'\in\Si_{\Ga}$ and $\a\in\Si_\De$,
\be\label{thme1}
\left| \frac{\phL^\tau(\s_\De = \a)}{\phL^{\tau'}(\s_\De = \a)} - 1\right|
\le 2\left(t_1+2t_2 +\frac{t_1+t_2}{1-t_1-2t_2}\right),
\ee
whenever the right side is less than or equal to $1$, and where
\begin{equation}\label{eq:121}
t_1=\ol\phi(\De\lra D), \qquad
t_2=\sqrt{\ol\phi(D \lra \Ga)}.
\end{equation}
\end{thm}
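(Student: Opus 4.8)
The plan is to prove \eqref{thme1} by a coupling argument in the spirit of \cite{Al1,Al2}, using the wired measure $\ophi$ as the common dominating reference. Since $\ophi\gest\phL^\tau$ and $\ophi\gest\phL^{\tau'}$, I would realise all three measures on one probability space carrying a law $\kappa$, with marginal configurations $\om\sim\ophi$, $\om_1\sim\phL^\tau$, $\om_2\sim\phL^{\tau'}$ satisfying $\om\ge\om_1$ and $\om\ge\om_2$ almost surely, and with the associated spin-configurations coupled so that any cluster shared by $\om$ and $\om_i$ which avoids $\Ga$ is given the same spin in both.

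The heart of the matter is an observation about the conditional law of $\s_\De$ that makes the estimate \emph{multiplicative} rather than merely additive --- this is essential, since $\phL^\tau(\s_\De=\a)$ may be exponentially small in $|\De|$. Let $A=A(\om_1)$ be the union of the clusters of $\om_1$ that meet $\De$ (a union of whole clusters), and let $\mathcal E$ be the $\s$-field generated by $A$ together with $\om_1|_A$. Because $D$ separates $\De$ from $\Ga$, on the event $\{A\cap D=\es\}$ one has $A\cap\Ga=\es$, so under $\phL^\tau$ every cluster lying in $A$ receives an independent uniform spin; hence, on this event, the conditional law of $\s_\De$ given $\mathcal E$ is a distribution $\Phi_A$ that depends on the cluster structure of $\om_1|_A$ alone and not on the boundary spins $\tau$. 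The same is true with $\tau$ replaced by $\tau'$, and therefore
\begin{align*}
\phL^\tau(\s_\De=\a) &= \sum_{a:\,a\cap D=\es}\Phi_a(\a)\,\phL^\tau(A=a)
 \;+\;\phL^\tau(\s_\De=\a,\ \De\lra D),\\
\phL^{\tau'}(\s_\De=\a) &= \sum_{a:\,a\cap D=\es}\Phi_a(\a)\,\phL^{\tau'}(A=a)
 \;+\;\phL^{\tau'}(\s_\De=\a,\ \De\lra D),
\end{align*}
where the connection $\{\De\lra D\}$ in $\om_1$ (\resp\ $\om_2$) is dominated by $\{\De\lra D\}$ in $\om$, an event of $\ophi$-probability $t_1$.

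It then remains to compare the two sums. The remainder terms above are each at most $t_1$ by stochastic domination. For the sums, I would show that the restricted law $a\mapsto\phL^\tau(A=a)$ on $\{a\cap D=\es\}$ is close to that under $\phL^{\tau'}$ in the ratio sense $|\phL^\tau(A=a)-\phL^{\tau'}(A=a)|\le(t_1+t_2)\,\phL^\tau(A=a)$, together with $\sum_{a:\,a\cap D=\es}\phL^\tau(A=a)\ge 1-t_1-2t_2$. The tool here is the spatial Markov property of the continuum \rc/Potts measure: since $a$ avoids $D$, the configuration on the $\De$-side of $D$ is, conditionally on the trace of the configuration along $D$, a \rc\ measure that does not see $\Ga$, so the only way the boundary condition on $\Ga$ can be felt is through configurations joining $D$ to $\Ga$. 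The main difficulty is in quantifying that influence, and it is there that the square root appears: a Cauchy--Schwarz (second-moment) estimate bounds the change in the relevant connection probabilities inside $\Ga$ by $\sqrt{\ophi(D\lra\Ga)}=t_2$, exactly as in Alexander's treatment of the discrete case. Feeding these bounds into the two displayed identities and then into
$$
\left|\frac{\phL^\tau(\s_\De=\a)}{\phL^{\tau'}(\s_\De=\a)}-1\right|
\le\frac{\bigl|\phL^\tau(\s_\De=\a)-\phL^{\tau'}(\s_\De=\a)\bigr|}{\phL^{\tau'}(\s_\De=\a)}
$$
yields \eqref{thme1}: the terms $t_1+2t_2$ come from the explicit discrepancies, the term $\frac{t_1+t_2}{1-t_1-2t_2}$ from dividing by the normalisation $\sum_{a:\,a\cap D=\es}\phL^{\tau'}(A=a)\ge 1-t_1-2t_2$, and the overall factor $2$ from symmetrising the roles of $\tau$ and $\tau'$; the hypothesis that the right-hand side of \eqref{thme1} is at most $1$ ensures $1-t_1-2t_2>0$ throughout.
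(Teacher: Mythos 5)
Your setup and intuition are right, and the identification of the weighted-wired measure $\ophi$ as the dominating reference is exactly what's needed. But the argument has a genuine gap at the step where you claim the ratio estimate
$$
\bigl|\phL^\tau(A=a)-\phL^{\tau'}(A=a)\bigr|\le (t_1+t_2)\,\phL^\tau(A=a)
$$
for the law of the cluster $A=A(\om_1)$ of $\De$. This is a ratio weak-mixing statement in its own right, for a far more detailed observable than $\s_\De$ (indeed $\{A=a\}$ specifies the exact continuum configuration near $\De$); you give no mechanism for proving it, and it is at least as hard as the inequality being proved. The spatial Markov property alone does not produce multiplicative closeness of the laws of $A$ under $\phL^\tau$ and $\phL^{\tau'}$: it tells you the conditional law on the $\De$-side of $D$ given the configuration on $D$ is the same, but translating that into a ratio bound on the unconditional law of $A$ requires precisely the kind of estimate you are trying to establish. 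The arithmetic at the end also does not quite close: $\sum_{a:\,a\cap D=\es}\phL^\tau(A=a)=\phL^\tau(\De\nlra D)$ is bounded below by $1-t_1$, not $1-t_1-2t_2$, so the appearance of the denominator $1-t_1-2t_2$ in your accounting is asserted rather than derived.

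The device that actually makes the paper's proof go through, and which your proposal lacks, is the re-randomisation of the interior spins. After building the monotone coupling $\kappa$ of $\ophi$, $\phL^\tau$, $\phL^{\tau'}$ (as you do), one replaces the raw spin-pair $(\s,\s')$ by a pair $(\rho,\rho')$ with the same marginals, defined to agree with $(\s,\s')$ on $D$ and on the exterior region $E$, and then to be \emph{identically re-sampled} in the interior $I$ when $\rho_D=\rho'_D$. This forces $\rho_I=\rho'_I$ on $\{\rho_D=\rho'_D\}$, and hence, for any event $H\subseteq\{\rho_\De=\rho'_\De\}$ of positive probability, the exact identity
$$
\frac{\phL^\tau(\s_\De=\a)}{\phL^{\tau'}(\s_\De=\a)}
=\frac{\kappa(H\mid\rho'_\De=\a)}{\kappa(H\mid\rho_\De=\a)},
$$
which is \eqref{e12} in the text. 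This reduces the target ratio bound to the \emph{additive} estimate $\kappa(\ol H\mid\rho_\De=\a),\ \kappa(\ol H\mid\rho'_\De=\a)\le t$, obviating any pointwise ratio bound on configuration laws. The square root in $t_2$ then arises not from a Cauchy--Schwarz step but from a Markov-inequality/threshold-optimisation: one sets $g=\kappa(\rho_D\ne\rho'_D\mid\sG_D)$, applies $\kappa(g>a)\le t_2^2/a$, separately bounds $\kappa(\rho_D\ne\rho'_D,\,g\le a\mid\cdot)\le a$, and balances the two by choosing $a=t_2$. Your phrase ``second-moment estimate'' gestures at the right shape but misidentifies the mechanism, and without the $(\rho,\rho')$ re-randomisation the reduction to an additive bound never happens.
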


The corresponding conclusion is valid when $\La$ is
taken as the slit box $\Lamb$.
Note in this case that the $t_i$ are given in terms of connection probabilities in the
slit box.

\begin{proof}
We adapt the methods of \cite{Al1}. Let $I$ (\resp, $E$)
be the region of $\La$ reachable from $\De$ (\resp, $\Ga$) along paths of $\La$
not intersecting $D$.

Let $\tau,\tau'\in\Si_{\Ga}$ and $\a\in\Si_\De$.
We construct a coupling as follows, using the approach summarised
prior to the statement of the theorem. Let $\ol\om$ have law $\ophi$.
Let $\om=\om^\tau$ and $\om'=\om^{\tau'}$ have laws $\phL^\tau$ 
and $\phi_\La^{\tau'}$, \resp, and be such that
$\om,\om'\le\ol\om$. Furthermore, we construct $\om$
and $\om'$ in such a way that,
if $\ol\om\in E_2=\{D\nlra \Ga\}$, 
then $\ol\om$, $\om$, and $\om'$ are identical
on $D \cup I$.

To the clusters of $\ol\om$, $\om$, $\om'$ we assign spins in the usual manner,
denoted $\ol\s$, $\s$, $\s'$, \resp, such that: 
on the event $E_2$, the functions $\ol\s$, $\s$, $\s'$
are equal on $D\cup I$. For a reason that will be clearer later, we shall
not work with the pair $\s$, $\s'$ of configurations but instead with a pair $\rho$, 
$\rho'$ defined as follows. 
First, we set
$$
\rho_x = \s_x,\ \rho'_x=\s'_x \qquad \mbox{for } x \in D\cup E.
$$
On the event $F=\{\rho_D=\rho'_D\}$, we sample from
the measure $\phL$ given $F$ to obtain a (random) configuration
$\zeta \in \Si_I$, and we set 
$$
\rho_x=\rho_x' = \zeta_x \qquad\mbox{for } x \in I.
$$
On the complement of $F$, we sample $\rho$ (\resp, $\rho'$)
according to the conditional law $\phi_\La^\tau$  given $(\rho_x: x\in D\cup E)$
(\resp, $\phL^{\tau'}$ given $(\rho_x': x\in D\cup E)$).
By the spatial Markov property of the continuum Ising model
alluded to after \eqref{eq:new}, $\rho$ (\resp, $\rho'$) has law $\phL^\tau$
(\resp, $\phL^{\tau'}$), and furthermore:
\be\label{eq:123}
\rho_I=\rho'_I\quad\text{on the event}\quad\{\rho_D=\rho_D'\},
\ee
and
\be\label{eq:127}
\kappa(\rho_D=\rho'_D) = \kappa(\s_D=\s'_D)  \ge \kappa(E_2) = 1-t_2^2,
\ee
where $\kappa$ is the appropriate probability measure,
and $t_2$ is as in \eqref{eq:121}.

Let $H$ be an event satisfying
\be\label{eq:124}
H \subseteq \{\rho_\De=\rho'_\De\}.
\ee
As in \cite{Al1}, if $\kappa(H)>0$,
\begin{align}
\frac{\phL^\tau(\s_\De = \a)}{\phL^{\tau'}(\s_\De = \a)} &=
\frac{\kappa(\rho_\De=\a)}{\kappa(\rho_\De'=\a)}\nonumber\\
&= \frac{\kappa(H\cap\{\rho_\De=\a\})}{\kappa(H\mid \rho_\De=\a)}
\cdot \frac{\kappa(H\mid \rho_\De'=\a)}{\kappa(H\cap \{\rho_\De'=\a\})}
\nonumber\\
&= \frac{ \kappa(H\mid \rho_\De'=\a)} {\kappa(H\mid \rho_\De=\a)}. \label{e12}
\end{align}
It thus suffices, by an elementary argument, to prove
that
\be\label{eq:125}
\kappa(\ol H\mid \rho_\De=\a),\, \kappa(\ol H\mid \rho'_\De=\a) \le t
\ee
where
\be\label{eq:126}
t= t_1+2t_2 +\frac{t_1+t_2}{1-t_1-2t_2}.
\ee
To see this, assume \eqref{eq:125} with $t\le \frac12$. By
\eqref{e12},
$$
1-t \le \frac{\phL^\tau(\s_\De = \a)}{\phL^{\tau'}(\s_\De = \a)} 
\le \frac 1{1-t}.
$$
Now, $1/(1-t) \le 1+ 2t$ since $t \le \frac12$, and \eqref{thme1} follows.

There are four steps in proving \eqref{eq:125}.
Let $\sG_D$ (\resp, $\sG_D'$)
be the $\s$-field generated by $\rho_D$ (\resp, $\rho_D'$).
Firstly, given that $\ol\om \in E_1=\{\De\nlra D\}$, 
the spin-vector $\s_D$ is (conditionally) independent
of $\s_\De$, whence
$$
\bigl| \kappa(\s_D\in A\mid \s_\De=\a) -\kappa(\s_D\in A\mid \s_\De=\a')\bigr|
\le t_1,\quad A \in \sG_D,\ \a'\in\Si_\De,
$$
with $t_1$ as in \eqref{eq:121}. Averaging over $\a'$, we obtain
$$
\bigl| \kappa(\s_D\in A\mid \s_\De=\a) -\kappa(\s_D\in A)\bigr|
\le t_1,
$$
and hence, by the equidistribution of $\s$ and $\rho$,
\be\label{eq:128}
\bigl| \kappa(\rho_D\in A\mid \rho_\De=\a) -\kappa(\rho_D\in A)\bigr|
\le t_1,\qquad A \in \sG_D.
\ee

Secondly, let
$$
g=\kappa(\rho_D \ne \rho'_D\mid \sG_D), \qquad
g'=\kappa(\rho_D \ne \rho'_D\mid \sG_D'),
$$
and, for $a>0$, let $H = H_a$ be given as 
$$
H_a= \{\rho_D = \rho'_D\} \cap \{g \le a\} \cap \{g'\le a\},
$$
where $a$ will be chosen later. It is easily seen by \eqref{eq:123}
that $H_a$ satisfies \eqref{eq:124}.
By Markov's inequality and  \eqref{eq:127},
$$
\kappa(g > a) \le \frac 1a \kappa(g) \le \frac 1a t_2^2,
$$
and therefore, since $\{g>a\} \in \sG_D$,
\begin{align}
\kappa(g>a\mid \rho_\De=\a) &\le \kappa(g>a) + t_1 \quad\text{by \eqref{eq:128}}
\nonumber\\
&\le \frac 1a t_2^2 + t_1.\label{eq:130}
\end{align}
By a similar argument,
\be\label{eq:131}
\kappa(g'>a\mid \rho_\De'=\a) 
\le \frac 1a t_2^2 + t_1.
\ee

Thirdly,
\begin{align} \kappa(\rho_D \ne \rho_D',\, g\le a \mid \rho_\De=\a)
&\le \esssup\bigl\{\kappa(\rho_D\ne \rho_D'\mid \sG_D)1_{\{g\le a\}}\bigr\}
\nonumber\\
&= \esssup\{g 1_{\{g\le a\}}\}  \le a,
\label{eq:132}
\end{align}
and similarly,
\be\label{eq:133}
\kappa(\rho_D \ne \rho_D',\, g'\le a \mid \rho_\De'=\a)
\le a.
\ee

Finally, by \eqref{eq:123},
\be\label{eq:134}
\{\rho_D=\rho_D'\} \cap \{\rho_\De=\a\} =
\{\rho_D=\rho_D'\} \cap \{\rho_\De'=\a\},
\ee
[this is where we use $\rho$, $\rho'$ in place of $\s$, $\s'$],
and, by \eqref{eq:131} and \eqref{eq:133}--\eqref{eq:134},
\begin{align}\label{eq:135}
\kappa(\rho_D = \rho_D',\, g'>a \mid \rho_\De=\a)
&\le \kappa(g'>a\mid \rho_D=\rho_D',\, \rho_\De'=\a) 
    \nonumber\\
&\le \frac{\kappa(g'>a \mid \rho_\De'=\a)}
          {\kappa(\rho_D=\rho_D' \mid \rho_\De'=\a)}\nonumber\\
&\le \frac{t_1 + t_2^2/a}{1-a-t_1-t_2^2/a}.
\end{align}

On combining \eqref{eq:130}, \eqref{eq:132}, \eqref{eq:135},
and setting $a=t_2$, we obtain the first inequality of
\eqref{eq:125} with $H=H_a$, and the second inequality holds similarly.
\end{proof}

Let $\De$ and $\Ga$ be disjoint finite subsets of $\La$ that are
disjoint from $\pdh\La$. Let $D$ be an linear subset of $\La$ that separates
$\De$ and $\Ga\cup\pdh\La$. Let $\a\in \Si_\De$, $\b,\b'\in\Si_\Ga$, and
$\eta\in\Si_{\pdh\La}$. By \eqref{thme1} applied to the sets $\De$ and $\Ga\cup\pdh\La$,
\be\label{eq:200}
\bigl|\phL^{\b,\eta}(\s_\De=\a) - \phL^{\b',\eta}(\s_\De=\a)\bigr|
\le 2t \phL^{\b',\eta}(\s_\De=\a),
\ee
whenever $t\le \frac12$ where
\be\label{eq:tdef}
t=t_1+2t_2 +\frac{t_1+t_2}{1-t_1-2t_2},
\ee
and
\begin{equation}\label{eq:136}
t_1=\ol\phi(\De\lra D), \qquad
t_2=\sqrt{\ol\phi(D \lra \Ga\cup\pdh\La)}.
\end{equation}
The suffix $\b,\eta$ in \eqref{eq:200} indicates the composite
boundary condition taking the values $\b$ on $\Ga$ and $\eta$ on $\pdh\La$.
We average \eqref{eq:200} over $\b'$ to obtain
\be\label{eq:201}
\bigl|\phL^{\b,\eta}(\s_\De=\a) - \phL^{\eta}(\s_\De=\a)\bigr|
\le 2t \phL^{\eta}(\s_\De=\a).
\ee
Now,
$$
\phL^{\b,\eta}(\s_\De=\a) = \frac {\phL^\eta(\s_\De=\a,\, \s_\Ga=\b)}
{\phL^\eta(\s_\Ga=\b)}, 
$$
Let $A\in \sG_\De$, $B\in \sG_\Ga$ be events with strictly
positive probabilities.
We `multiply up' in  \eqref{eq:201} and sum over $\a\in A$ and $\b\in B$ to
find that
\be\label{thme2}
\left| \frac{\phL^\eta(A\cap B)}{\phL^{\eta}(A)\phL^\eta(B)} - 1\right|
\le
2 t,\qquad \eta\in\Si_{\pdh\La},
\ee
whenever $t\le \frac12$.
Upper bounds on $t$ follow from the observation that
$\ophi$ is stochastically dominated by the continuum
percolation measure with parameters $\lam$, $\de$ (cf.\ \eqref{stochcomp}).
Equation \eqref{thme2} is a general statement of so-called ratio weak-mixing.

By the same argument without the reference to the boundary $\pdh\La$,
\be\label{thme22}
\left| \frac{\phL(A\cap B)}{\phL(A)\phL(B)} - 1\right|
\le
2 t,\qquad A \in \sG_\De,\ B\in\sG_\Ga,
\ee
whenever $t\le \frac12$, where $t$ is in \eqref{eq:tdef} with
\begin{equation}\label{eq:136a}
t_1=\ol\phi(\De\lra D), \qquad
t_2=\sqrt{\ol\phi(D \lra \Ga)},
\end{equation}
and $D$ is a linear set that separates $\De$ and $\Ga$.

The above ideas may be used to prove Lemmas \ref{lem1} and \ref{lem2},
for the first of which we argue as follows. 
Consider the box $\Lamb$ with slit $S_L$. Let $K$ be an integer satisfying
$0< K < \frac12 L$, and let $\De= \{x^+: x\in S_L,\ K\le x\le L-K\}$ and
$\Ga = \{x^-: x\in S_L,\ K\le x\le L-K\}$. 

\begin{lem}\label{thm2}
Let $\lam,\de\in(0,\oo)$. There exists  
$C=C(\lam,\de)\in(0,\oo)$ such that,
if $\b > 2m+L$,
$$
\left| \frac{\phmb(\s_\De=\eps_K^+,\, \s_\Ga=\eps_K^-)}
{\phmb(\s_\De=\eps_K^+)\phmb(\s_\Ga=\eps^-_K)} - 1\right|
\le
Ce^{-\frac12\g K}, \qquad \eps_K^+\in\Si_\De,\ 
\eps_K^-\in\Si_\Ga,
$$
whenever the right side is less
than or equal to $1$. The function $\g(\lam,\de)$ may be taken as that
in Theorem \ref{contperc}.
\end{lem}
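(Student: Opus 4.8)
The plan is to derive the bound from the ratio weak-mixing inequality \eqref{thme22}, in the form valid for the slit box $\Lamb$, applied with $A=\{\s_\De=\eps_K^+\}\in\sG_\De$ and $B=\{\s_\Ga=\eps_K^-\}\in\sG_\Ga$; both events carry strictly positive $\phmb$-probability, since every admissible spin-allocation does. We may assume $\lam/\de<1$, as otherwise $\g$ may be taken to be $0$ and there is nothing to prove. Then \eqref{thme22} yields
$$
\left|\frac{\phmb(\s_\De=\eps_K^+,\,\s_\Ga=\eps_K^-)}{\phmb(\s_\De=\eps_K^+)\,\phmb(\s_\Ga=\eps_K^-)}-1\right|\le 2t,
$$
with $t$ the quantity \eqref{eq:tdef} formed from $t_1=\ophi(\De\lra D)$ and $t_2=\sqrt{\ophi(D\lra\Ga)}$ for any linear subset $D$ of $\Lamb$ separating $\De$ and $\Ga$, provided $t\le\tfrac12$; here $\ophi$ is the associated wired continuum \rc\ measure on $\Lamb$. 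Thus the task reduces to choosing such a $D$ and showing that $t_1$ and $t_2$ are exponentially small in $K$, with constants not depending on $L$.

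For $D$ I would take $D=D^{\mathrm L}\cup D^{\mathrm R}$, where $D^{\mathrm L}$ is the time-line of $\Lamb$ over the site $0$ and $D^{\mathrm R}$ the time-line over the site $L$ --- the two end-sites of the slit $S_L$. As $0,L\in S_L$, each of these time-lines carries no point at time $0$ and consists of its strictly-positive-time and strictly-negative-time portions, so $D$ is a disjoint union of four intervals and is therefore linear. The geometric point is that, over the sites of $S_L$, the slit (together with the free top/bottom condition there) disconnects the positive-time part of $\Lamb$ from the negative-time part: these are joined only through time-lines over sites outside $S_L$, that is, over sites $\le-1$ or $\ge L+1$. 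Hence any path of $\Lamb$ from $\De$ (which lies over $[K,L-K]$ at time $0^+$) to $\Ga$ (over $[K,L-K]$ at time $0^-$) must occupy a site outside $S_L$; since the sequence of sites it visits changes by $\pm1$ along its bridges, it must therefore (by discrete intermediate-value) also occupy the end-site $0$ or the end-site $L$ at some moment, and, there being no point over these sites at time $0$, it meets $D$. So $D$ separates $\De$ and $\Ga$ (if minimality is required, replace $D$ by a minimal separating subset, which only decreases $t_1$ and $t_2$). Moreover, within $\Lamb$ the cluster of a point lying over a site $x$ can reach $D^{\mathrm L}$ only after traversing horizontal distance $\ge x$, and $D^{\mathrm R}$ only after traversing horizontal distance $\ge L-x$; in particular $d(\De,D),d(\Ga,D)\ge K$.

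To estimate $t_1$ and $t_2$ I would combine the stochastic domination of $\ophi$ by the continuum percolation measure $\Plb$ (the remark following \eqref{thme2}) with the exponential-decay estimate of Theorem \ref{contperc}: there are $C'\in(0,\oo)$ and $\g=\g(\lam,\de)>0$, the $\g$ of Theorem \ref{contperc}, such that the $\Plb$-probability that the cluster of a given point reaches horizontal distance $\ge r$ is at most $C'e^{-\g r}$. A union bound over the sites $x\in[K,L-K]$ underlying $\De$ then gives
$$
\ophi(\De\lra D^{\mathrm L})\le\sum_{x=K}^{L-K}C'e^{-\g x}\le\frac{C'}{1-e^{-\g}}\,e^{-\g K},
$$
and the same bound for $\ophi(\De\lra D^{\mathrm R})$ (replace $x$ by $L-x$) and, by the analogous estimate on clusters of the points $y^-$, for $\ophi(\Ga\lra D^{\mathrm L})$ and $\ophi(\Ga\lra D^{\mathrm R})$. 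Hence $t_1\le C_1e^{-\g K}$ and $t_2\le C_1e^{-\g K/2}$ for a suitable $C_1=C_1(\lam,\de)$; feeding these into \eqref{eq:tdef} gives $t\le C_2e^{-\g K/2}$ (and $t\le\tfrac12$) whenever $C_2e^{-\g K/2}\le\tfrac12$. Choosing $C=C(\lam,\de)$ large enough that the right-hand side of the asserted inequality being at most $1$ forces $C_2e^{-\g K/2}\le\tfrac12$, we conclude $2t\le 2C_2e^{-\g K/2}\le Ce^{-\frac12\g K}$, as required.

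I expect the main difficulty to be exactly the geometric step of the second paragraph: one must place $D$ so that it genuinely cuts between the two lips of the slit, while keeping both $d(\De,D)$ and $d(\Ga,D)$ of order $K$ and --- crucially --- so that the union bound over the $\asymp L$ sites underlying $\De$ does not introduce a spurious factor $L$. Siting $D$ at the two ends of the slit is what makes the per-site probabilities $\ophi(x^+\lra D)$ decay geometrically in $x$, so that their sum is bounded uniformly in $L$; everything else is routine.
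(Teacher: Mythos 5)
Your proof proposal is essentially correct and follows the same strategy as the paper: apply the ratio weak-mixing inequality \eqref{thme22} with a linear separating set $D$, then bound $t_1$ and $t_2$ by comparison with continuum percolation (Theorem \ref{contperc}) via a union bound over the $\asymp L$ slit sites, with the geometric decay in the distance to $D$ making the sum $O(e^{-\g K})$ uniformly in $L$.

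The one substantive difference is your choice of $D$. You take $D$ to be the two time-lines at the endpoint sites $0$ and $L$ of the slit, whereas the paper takes $D=\{(x,0): x\in[-m,0)\cup(L,L+m]\}$, the two horizontal segments of the equator outside the slit. Both are separating sets with $d(\De,D),d(\Ga,D)\ge K$ and both give the same exponential estimates (the paper also exploits the reflection symmetry to note $t_2^2=t_1$, which slightly streamlines the bookkeeping but is not essential). So this is a valid alternative geometric choice; the paper's equator set is perhaps slightly more natural in that it extends the slit to a full cut at a single time-slice.

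One small inaccuracy worth flagging: you describe the top/bottom boundary condition on the slit box $\Lamb$ as ``free,'' but the paper specifies a top/bottom \emph{periodic} boundary condition on $\Lamb$ (the partially periodic condition of Section 4, with the slit as the free part). Your separation argument — and indeed the paper's — tacitly treats the separation as occurring in $\RR^2$ (no wraparound through $\pm\frac12\b$); the wraparound paths are of length $\ge \b > 2m+L$, which is why this hypothesis appears in the statement, but the argument does not spell this out. Since the paper commits the same elision, this does not distinguish your proposal from the paper's proof; just correct the description of the boundary condition.
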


The proofs are preceded by a type of `finite-energy' inequality
(see \cite{Al1,G-RC}). 

\begin{lem}\label{finite-energy}
Let $S$ be a finite subset of $\La$.
For $x\in \La \sm S$, $\eps\in\Si_S= \{-1,+1\}^S$, and $\a \in \{-1,+1\}$,
\be
\phL(\s_S=\eps,\, \s_x=\a) \ge \tfrac12 \phL(\s_S=\eps) \PLlb(x \nlra S) .
\ee
\end{lem}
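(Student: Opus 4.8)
The plan is to condition the coupling $\phL$ (of the continuum \rc\ configuration $\om$ and the spin-configuration) on the \rc\ configuration and on the spin-values on $S$, and then exploit the elementary fact that, given $\om$, the spin of any cluster disjoint from $S$ is uniform on $\{-1,+1\}$ and independent of $\s_S$.

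Assume $\phL(\s_S=\eps)>0$, since otherwise both sides of the asserted inequality vanish. Conditioning on $\om$, the conditional probability of $\{\s_S=\eps\}$ is $0$ unless $\eps$ is \emph{compatible} with $\om$, meaning that no cluster of $\om$ (under the top/bottom periodic boundary condition, and in the presence of any slit) meets both $S^+=\{y\in S:\eps_y=+1\}$ and $S^-=\{y\in S:\eps_y=-1\}$; when $\eps$ is compatible this conditional probability equals $(\tfrac12)^{N_S(\om)}$, where $N_S(\om)$ is the number of clusters of $\om$ meeting $S$. Moreover, given $\om$ and $\{\s_S=\eps\}$, the spin $\s_x$ is the (uniform, independent) spin of the cluster $C_x$ whenever $x\nlra S$, so the conditional probability of $\{\s_x=\a\}$ equals $\tfrac12$ on $\{x\nlra S\}$ and is non-negative otherwise. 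Integrating over $\om$ then gives
\[
\phL(\s_S=\eps,\,\s_x=\a)\ \ge\ \tfrac12\,\phL(\s_S=\eps,\,x\nlra S)\ =\ \tfrac12\,\phL(\s_S=\eps)\,\nu_\eps(x\nlra S),
\]
where $\nu_\eps$ denotes the conditional law of $\om$ given $\{\s_S=\eps\}$ and $\{x\nlra S\}$ is treated as an event of $\sF_\La$. It therefore suffices to prove $\nu_\eps(x\nlra S)\ge\PLlb(x\nlra S)$.

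The key step is to identify $\nu_\eps$. From $d\phL(\om)\propto 2^{k(\om)}\,d\PLlb(\om)$ and the computation above, $\nu_\eps$ has density with respect to $\PLlb$ proportional to $\mathbf 1\{\eps\text{ compatible}\}\,2^{k(\om)-N_S(\om)}$. Now $k(\om)-N_S(\om)$ is the number of clusters of $\om$ not meeting $S$, so $2^{k(\om)-N_S(\om)}$ is, up to the global factor $2$, the weight $2^{1+(\text{clusters not meeting }S)}$ defining the wired measure $\ophi$ of Section \ref{rwm} (taken with $S$ wired together), the extra cluster being the single super-cluster into which all clusters meeting $S$ are merged. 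Since compatibility of $\eps$ with $\om$ is exactly the event $\{S^+\nlra S^-\}$, which is decreasing, it follows that $\nu_\eps$ is $\ophi$ conditioned on $\{S^+\nlra S^-\}$, i.e.\ $\nu_\eps=\phL^\eps$ in the notation of Section \ref{rwm}. By the stochastic orderings recorded there, $\phL^\eps\lest\ophi$ (conditioning an FKG measure on a decreasing event decreases it) and $\ophi\lest\PLlb$; as $\{x\nlra S\}$ is decreasing, $\nu_\eps(x\nlra S)=\phL^\eps(x\nlra S)\ge\PLlb(x\nlra S)$. Substituting into the display completes the proof.

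The main obstacle is the bookkeeping in identifying $\nu_\eps$: one must count clusters carefully under the top/bottom periodic boundary condition (and slit), verify that $(\tfrac12)^{N_S(\om)}$ is precisely the probability that the i.i.d.\ cluster-spins are consistent with $\eps$ on $S$ given compatibility, and confirm that the resulting reweighting of $\PLlb$ matches the normalisation of $\ophi$ up to the global factor $2$, which is harmless because $S\ne\es$. The remaining ingredients — the stochastic comparisons $\phL^\eps\lest\ophi$ and $\ophi\lest\PLlb$ — are quoted from Section \ref{rwm}.
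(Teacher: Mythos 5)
Your proof is correct and relies on the same key ingredients as the paper's: the wired continuum random-cluster measure, the FKG inequality (here invoked in the form that conditioning an FKG measure on a decreasing event produces a stochastically smaller measure), stochastic domination by percolation, and a factor-of-$\tfrac12$ bound. The reorganization is mildly different: you condition on $\{\s_S=\eps\}$, identify the conditional random-cluster law $\nu_\eps$ as the spin-boundary measure $\phL^\eps$ --- the wired-on-$S$ measure $\ophi$ conditioned on the decreasing compatibility event $\{S^+\nlra S^-\}$ --- and extract the $\tfrac12$ from the uniformity of the spin of $C_x$ when $x\nlra S$; the paper instead manipulates the weight identities $\phL(\s_S=\eps)=\phL(2^{-k(S)}1_{E(\eps)})$ directly, wires on the larger set $S\cup\{x\}$ (the measure $\wh\phi$), applies FKG within $\wh\phi$ to the two decreasing events $E(\eps)$ and $\{x\nlra S\}$, and obtains the $\tfrac12$ from $k(S\cup\{x\})\le k(S)+1$. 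Your identification of $\nu_\eps$ as a standard boundary-condition measure is a clean clarification, and I note that, exactly as in the paper's own proof, the final step $\ophi\lest\PLlb$ uses the (standard, but not literally stated) extension of \eqref{stochcomp} to wired boundary conditions, which follows since the wired cluster count is a decreasing function of $\om$.
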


\begin{proof}
Let $x \in S$, $\eps\in\Si_S$, and $\a\in\{-1,+1\}$. 
Let $E(\eps)$ be the decreasing event
containing all $\om\in\Om_\La$ such that: for all $s,t\in S$, $s\nlra t$
whenever $\eps_s\ne \eps_t$. Recalling the manner in which spins
are associated with clusters,
\be\label{e34}
\phL(\s_S=\eps) = \phL(2^{-k(S)} 1_{E(\eps)}), \qquad \eps\in\Si_S,
\ee
where $k(S)$ is the number of clusters intersecting $S$.
Similarly,
\be\label{e36}
\phL(\s_S=\eps,\,\s_x=\a) \ge \phL(2^{-k(S^+)} 1_{E(\eps)}1_{x\nlra S}),
\ee
where $S^+ = S\cup\{x\}$. Note that $k(S^+) = k(S)+1$
when $x \nlra S$.

For any event $A$,
\be\label{e35}
\phL(2^{-k(S^+)}1_A) = \phL(2^{-k(S^+)}) \wh\phi(A)  =K\wh\phi(A),
\ee
where $K=\phL(2^{-k(S^+)})$ and $\wh\phi$ 
is the continuum \rc\ measure on $\La$
with a wired boundary condition on $S^+$, that is, all clusters
intersecting $S^+$ are counted as one.
By \eqref{e35} and the FKG
inequality applied to $\wh\phi$,
\begin{align*}
\phL(2^{-k(S^+)}1_{E(\eps)}1_{x\nlra S}) 
&= K\wh\phi(E(\eps)\cap \{x\nlra S\})\\
&\ge K\wh\phi(E(\eps))\wh\phi(x\nlra S)\\
&= \phL(2^{-k(S^+)}1_{E(\eps)}) \wh\phi(x\nlra S).
\end{align*}
Now $k(S)\le k(S^+) \le k(S) + 1$, so that, by \eqref{e34}--\eqref{e36},
$$
\phL(\s_S=\eps,\, \s_x=\a) 
\ge \tfrac12 \phL(\s_S=\eps) \wh\phi(x\nlra S)
$$
and the claim follows by the stochastic inequality \eqref{stochcomp}.
\end{proof}

\begin{proof}[Proof of Lemma \ref{thm2}] 
Take $D=\{(x,0): x\in[-m,0)\cup(L,L+m]\}$, the union of the two horizontal
line-segments that, when taken with the slit $S_L$, 
complete the `equator' of $\Lamb$. Thus, $D$ is a linear subset of $\Lamb$
separating $\De$ and $\Ga$.
Since $\ol\phi\lest \PLlb$, by Theorem
\ref{contperc} there exist constants $C$, $C'$ depending
on $\lam$ and $\de$ alone, such that
$$
t_1 =\ol\phi(\De\lra D) \le 2\sum_{i=K}^{\lfloor L/2\rfloor} Ce^{-\g i}
\le C' e^{-\g K},
$$
and furthermore $t_2^2=t_1$.
The claim now follows by \eqref{thme22}.
\end{proof}

\begin{proof} [Proof of Lemma \ref{lem1}]
Let $\g$ be given as in Theorem \ref{contperc}. With $K=\lceil \ln L\rceil$,
let $\s_{L,K}^\pm=(\s_x^\pm: K \le x \le L-K)$.
We may apply Lemma \ref{finite-energy} as follows
in order to compare the laws of
the spin-vector $\s_L^\pm$ and that of the reduced vector $\s_{L,K}^\pm$.
First, let $x=(L,0)$, and let $\eps^+, \eps^-\in \{-1,+1\}^{L+1}$ be possible
spin-vectors of the sets $S_L^+$ and  $S_L^-$, respectively. By
Lemma \ref{finite-energy} with $S = S_L^+ \cup S_L^- \sm\{x^+\}$,
\begin{multline*}
\phmb(\s_L^+ = \eps^+,\, \s_L^- =\eps^-)\\
\ge \tfrac12 \phmb(\s_y^+=\eps_y^+\ \text{for}\ y\in S_L^+\sm\{x^+\},\, \s_L^-=\eps^-) 
\PLmblb(x^+ \nlra S).
\end{multline*} 
Now, $\PLmblb(x\nlra S)$ is at least as large as the probability that the
first event (death or bridge) encountered on moving northwards
from $x$ is a death. That is,
$$
\PLmblb(x \nlra S) \ge \frac{\de}{2\lam + \de}.
$$
On iterating the above argument, we obtain that
\be
\phmb(\s_L^+=\eps^+,\, \s_L^-=\eps^-) \ge
\left(\frac{\de}{2(2\lam+\de)}\right)^{4K}
\phmb(\s_{L,K}^+ = \eps_K^+,\, \s_{L,K}^- =\eps_K^-),
\ee
where $\eps^\pm_K$ is the vector obtained from $\eps^\pm$
by removing the entries labelled by vertices $x$ satisfying
$0\le x <K$ and $L-K < x \le L$.
In summary, there exist $C,\a\in(0,\oo)$ depending on $\lam$, $\de$
such that, for $\eps^\pm\in\Si_L$,
\begin{align*}
C L^{-2\a} \phmb(\s_{L,K}^+=\eps_K^+,\, \s_{L,K}^-=\eps_K^-)
&\le \phmb(\s_L^+=\eps^+,\ \s_L^-=\eps^-)\\
&\le  \phmb(\s_{L,K}^+=\eps_K^+,\, \s_{L,K}^-=\eps_K^-).
\end{align*}

Set $\De= \{x^+: x\in S_L,\ K\le x\le L-K\}$,
$\Ga = \{x^-: x\in S_L,\ K\le x\le L-K\}$, and apply Lemma \ref{thm2}
to obtain that
there exists $C=C(\lam,\de)<\oo$ such that
$$
\left|\frac{\phmb(\s_{L,K}^+=\eps_K^+,\, \s_{L,K}^-=\eps_K^-)}
{\phmb(\s_{L,K}^+=\eps_K^+)\phmb(\s_{L,K}^-=\eps_K^-)}-1\right|
\le Ce^{-\frac12\g K}
\le C L^{-\frac12 \g},
$$
whenever (say) the right side is less than or equal to $\frac12$, 
say for $L\ge L_0(\lam,\de)$.

By Lemma \ref{finite-energy} again, for suitable
$C'$, $\a$,
$$
C' L^{-\a} \phmb(\s_{L,K}^\pm=\eps_K^\pm)
\le \phmb(\s_L^\pm=\eps^\pm)
\le  \phmb(\s_{L,K}^\pm=\eps_K^\pm).
$$
The claim now follows for $L\ge L_0$, with suitable values of $C_1$, $C_2$, $\a$.
We may adjust the constants to obtain the required inequality for all $L\ge 0$.
\end{proof}

\begin{proof}[Proof of Lemma \ref{lem2}]
Let $\De=S_L^+ \cup S_L^-$ and $\Ga=\pdh\Lamb$. Let $k=\frac37 m$ and
assume for simplicity that $k$ is an integer. [If either $m$ is small or $k$ is
non-integral, the constant $C$ may be adjusted accordingly.]
Let $D$ be the
circuit illustrated in Figure \ref{fig:4}, comprising a path in
the upper half-plane from $(-k,0)$ to $(L+k,0)$ together with its reflection
in the $x$-axis.

\begin{figure}[ht] 
\psfrag{x}[][]{$x_2$}
\psfrag{y}[][]{$x_3$}
\begin{center}
\input{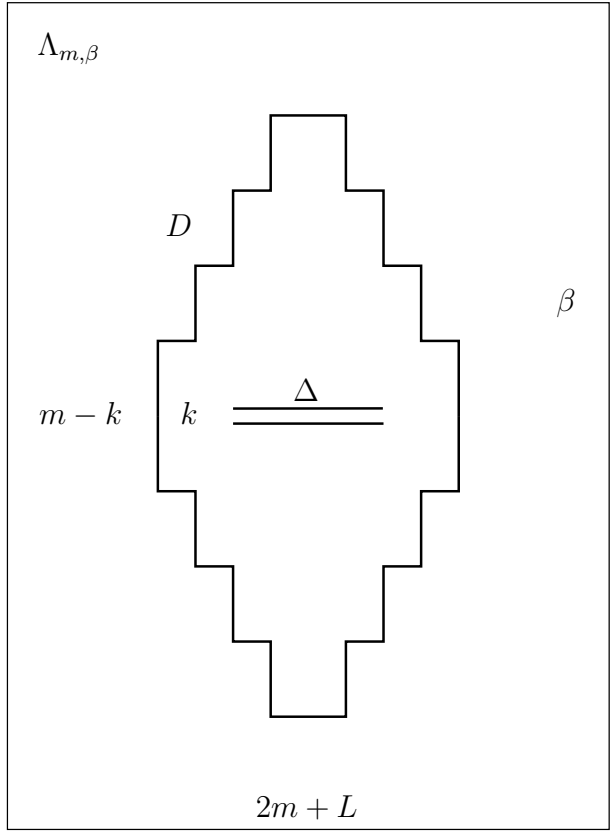}
\caption{The circuit $D$ is approximately a parallelogram with $\De$
at its centre. The sides comprise vertical steps of 
height $2$ followed by horizontal
steps of length 1. The horizontal and vertical diagonals have lengths $2k+L$ and
(order) $4k+2L$ respectively, where $k=\frac37 m$.}\label{fig:4}
\end{center}
\end{figure}

By Theorem \ref{rweakm},
$$
\left| \frac{\phmb^{\a}((\s^+_L,\s_L^-)=(\eps^+,\eps^-))}
{\phmb((\s^+_L,\s_L^-)=(\eps^+,\eps^-))}-1\right|
\le 2t,\qquad \a=\eta,\tau,\ \eps^\pm\in\Si_L,
$$
whenever $t\le \frac12$, with $t$ as in \eqref{eq:tdef}.
We `multiply up' and sum over $(\eps^+,\eps^-)\in A$
to obtain 
\be\label{eq:203}
\left| \frac{\phmb^{\a}(\s_\De\in A)}
{\phmb(\s_\De\in A)}-1\right|
\le 2t,
\ee
whenever $t \le \frac12$.

By \eqref{stochcomp}, $\ol\phi \lest \PLlb$. 
Let $\b \ge 4(m+L+1)$. It is a straightforward 
consequence of Theorem \ref{contperc}
that there exist $C,C',c'>0$, depending on $\lam$, $\de$ only, such that
\be\label{eq:2001}
t_1 \le 4\sum_{i=0}^{\lfloor L/2\rfloor} \Plb((i,0)\lra D)
\le 4\sum_{i=0}^{\lfloor L/2\rfloor}  C e^{-\g\frac23( k +i)} \le C' e^{-\frac27\g m},
\ee
and similarly,
\be\label{eq:2002}
t_2^2 \le 8\sum_{i=0}^{\lceil k+L/2\rceil} C e^{-\g(\frac47m+ c'i)} \le C' e^{-\frac47\g m},
\ee
with $\g$ given as in Theorem \ref{contperc}.
The claim of the lemma follows.
\end{proof}

\section{Disordered interactions}\label{sec:disorder}
We have so far assumed that the spin-couplings
$\lam_{x,x+1}$ and the field-strengths $\de_x$ appearing
in the Hamiltonian \eqref{ham} are constant. The situation
is more complicated if: either the environment of couplings
and strengths vary about the space $\ZZ$, or they are random
(in which case the model is said to be disordered). The
arguments of this paper may be applied in each case, and the outcomes 
are summarised in this section.

Suppose first that the $\lam_{x,x+1}$ and $\de_x$ are non-constant. 
The fundamental bound of Theorem \ref{mainest} depends only on
the ratio $\th=\lam/\de$, and the connection probabilities
of the continuum \rc\ model are increasing in the $\lam_{x,x+1}$
and decreasing in the $\de_x$. One may therefore check that
the conclusions of the paper are valid with $\g=\g(\lam,\de)$ whenever
\be\label{eq:1005}
\lam_{x,y}/\de_x \le \lam/\de,\qquad y=x-1,x+1, \ x \in\ZZ.
\ee
Hence, in the disordered case where \eqref{eq:1005}
holds with probability one, the corresponding conclusion is valid. 

We turn to the situation in which \eqref{eq:1005} does not
hold with probability one. Suppose that the $\lam_{x,x+1}$, $x\in\ZZ$,
are independent, identically distributed random variables, and
similarly the $\de_z$, $z\in\ZZ$, and assume that the $\lam_{x,y}$
are independent of the $\de_z$. We write $P$ for the corresponding probability
measure, viewed as the measure governing the `random environment',
and $\Lambda$, $\De$ for a typical spin-correlation and field-strength, \resp.
Conditional on  $\lam=(\lam_{x,x+1}:x\in\ZZ)$ and $\de=(\de_z:z\in\ZZ)$,
we write $\Plb$ for the probability measure of
the associated continuum percolation process.
In applying the methods of this paper within the random environment, 
one needs to deal with sub-domains of 
$\ZZ$ where the environment is not propitious for the
bound of Theorem \ref{mainest}. As before, we perform a comparison 
of the continuum \rc\ model and continuum percolation in a random
environment, and we shall appeal to the
following theorem of \cite{Klein1} (see also Theorem 1.6
of \cite{AKN}).

For $(x,s), (y,t) \in \ZZ\times \RR$ and $q\ge 1$, let
$$
d_q(x,s;y,t) = \max\bigl\{|x-y|, (\ln^+ |s-t|)^q\bigr\},
$$
where $\ln^+ x = \max\{\ln x, 0\}$.

\begin{thm}\label{thm:klein}{\bf\cite{Klein1}}
Consider continuum percolation on $\ZR$ in a random environment satisfying
$$
\Gamma = \max\left\{P\bigl([\ln(1+\Lambda)]^\b\bigr), 
P\bigl( [\ln(1+\Delta^{-1})]^\b\bigr) \right\} <\oo,
$$
for some 
\be\label{eq:beta}
\b> 5 + \tfrac 72\sqrt{2}.
\ee
There exists $Q=Q(\b)>1$ such that the following holds. For $q\in[1,Q)$ and $\g>0$,
there exists $\eps=\eps(\b,\Gamma,\g,q)>0$ and $\eta=\eta(\b,q)>1$ such that{\rm:} if
\be\label{eq:1010}
P\Bigl(\bigl[\ln(1+(\Lambda/\Delta))\bigr]^\b\Bigr) < \eps,
\ee
there exist identically distributed, positive random variables 
$D_x\in L^\eta(P)$, $x\in\ZZ$, such that
\be\label{eq:1006}
\Plb\bigl((x,s) \lra (y,t)\bigr)\le 
\exp\bigl[-\g d_q(x,s;y,t)\bigr] \quad \text{if } d_q(x,s;y,t)\ge D_x,
\ee
for $(x,s),(y,t) \in \ZZ\times\RR$. 
\end{thm}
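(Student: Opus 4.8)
The plan is to follow the multiscale (renormalisation) analysis of Klein \cite{Klein1}, for the details of which we refer the reader there; here we only sketch the strategy. The idea is to adapt to the random-environment setting the block-decomposition techniques familiar from percolation and localisation in disordered media: one compares the continuum percolation process with a subcritical oriented percolation model on a renormalised lattice of space--time blocks, and runs an inductive bootstrap over a geometric sequence of length scales $L_0<L_1<\cdots$, with $L_{k+1}$ a fixed power of $L_k$.

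First I would establish an initial-scale estimate. In a space--time block whose local environment is \emph{good} --- meaning that the couplings and inverse field-strengths it contains are not too large --- the connection probability decays exponentially; this is essentially Theorem~\ref{contperc}, applied conditionally on the environment, using the branching-process domination from its proof together with the stochastic comparison $\PLlbq\lest\PLlb$ of \eqref{stochcomp}. The hypothesis \eqref{eq:1010}, that $P([\ln(1+\Lambda/\Delta)]^\b)$ is small, guarantees that a block is good with probability close to $1$, so the bad blocks form a highly subcritical configuration; a Peierls/contour argument then shows that a bad region of diameter $r$ about a fixed point has probability decaying faster than any fixed power of $r$.

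The core is the multiscale induction. Suppose the bound \eqref{eq:1006} holds at scale $L_k$ with failure probability at a fixed point summably small in $k$. A connection across a box of side $L_{k+1}$ must either traverse a chain of $\O(L_{k+1}/L_k)$ good sub-boxes at scale $L_k$ --- each contributing an essentially independent exponential factor, whose product is far smaller than the target --- or else meet a bad region of size comparable to $L_k$, an event of super-polynomially small probability by the previous paragraph. Summing over the $\O((L_{k+1}/L_k)^{\O(1)})$ possible locations closes the induction, provided the growth rate of $L_k$ and the exponent $\b$ of \eqref{eq:beta} are matched --- which is exactly what forces $q<Q(\b)$ and fixes $\eta=\eta(\b,q)>1$. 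The stretched metric $d_q$, with its $(\ln^+|s-t|)^q$ in the time direction, enters because a death-free time-interval of length $\tau$ has unconditional probability of order $e^{-\de\tau}$, but in an atypical environment this is only $e^{-\tau/D_x}$ for a random $D_x$ which, by $\Gamma<\oo$, has a finite $\b$-th logarithmic moment; trading linear decay in $\tau$ against the tail of $D_x$ yields decay of order $\exp[-\g(\ln\tau)^q]$, and the scale $D_x$ above which \eqref{eq:1006} holds inherits an $L^\eta(P)$ bound from the summability of the multiscale failure probabilities.

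The main obstacle is the bookkeeping of three coupled constraints: (i) the initial-scale estimate must be strong enough, which needs $\varepsilon$ in \eqref{eq:1010} small depending on $\g$; (ii) the induction step must beat the combinatorial entropy of box locations, which pins down the admissible growth of $L_k$ and hence $Q(\b)$; and (iii) the exceptional scales $D_x$ must retain a moment of order $\eta>1$, requiring the per-scale failure probability to decay at least like a fixed power of $L_k^{-1}$. Making all three hold simultaneously --- rather than any one in isolation --- is the delicate point, and is where the numerology $\b>5+\tfrac72\sqrt2$ originates; the geometric and probabilistic ingredients themselves I would expect to be routine, given the comparison and finite-energy tools assembled in Sections~\ref{be}--\ref{rwm}.
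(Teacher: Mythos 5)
You correctly identify Klein's multiscale renormalisation analysis as the engine behind the theorem, which is exactly what the paper invokes; to that extent the approach is the same. However, the paper does not re-derive the multiscale argument --- it cites Theorem 1.1 and the proof of Theorem 3.3 of \cite{Klein1} directly --- and the only work it actually carries out is the one step your sketch leaves as a heuristic: with $\a = 1+\sqrt{2}$, $p>2\a$, scales $L_r=L^{\a^r}$, and Klein's tail bound $P(K_x>r)\le c\,L_r^{-(p-\a)}$, it sets $D_x=bL_{K_x}$ and verifies $P(D_x^\eta)<\oo$ by comparison with the series $\sum_r L^{\eta\a^{r+1}}L^{-(p-\a)\a^r}$, which converges precisely when $\eta<p/\a-1$ and hence admits some $\eta>1$. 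The threshold $\b>5+\tfrac{7}{2}\sqrt{2}$ is then inherited from Klein's constraint (3.3) together with the choices of $\a$ and $p$; your explanation of its origin in terms of balancing combinatorial entropy against environment-tail moments is the right intuition, but it is all internal to Klein's proof, which the paper treats as a black box rather than rebuilding as you propose.
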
  

The lower bound \eqref{eq:beta} for $\b$ is enough to imply that $P(D_x^\eta)<\oo$
for some $\eta > 1$. The larger $\b$, the larger $\eta$ may be taken.

For the remainder of this section we assume that 
the conditions of the above theorem are
valid, and we shall work with the conclusion \eqref{eq:1006}, with 
$q=1$, $\g>1$, and the $D_x$ given accordingly.
We let $L\ge 8$ and $K=\lceil \ln L\rceil$, and consider
the event
$$
A_L = \bigcap_{x=K}^{L-K} \bigl\{D_x < \min\{x,L-x\}\bigr\},
$$
noting that
$$
P(A_L) \ge 1- 2 \sum_{x=K}^\oo P(D \ge x),
$$
where $D$ has the distribution of the $D_x$. Since $P(D)<\oo$,
\be\label{eq:1007}
P(A_L) \to 1\qquad\text{as } L\to\oo.
\ee
An estimate for the rate of convergence 
may be obtained (here and later)
by the fact that $P(D^\eta)<\oo$ for some $\eta > 1$.

We comment next on the adaptation of our earlier results to the disordered
setting. 
Theorem \ref{rweakm} holds within the random environment, without change.
The conclusion of Lemma \ref{thm2} is valid with $K=\lceil\ln L\rceil$
whenever the event $A_L$ occurs.
Lemma \ref{finite-energy} holds unconditionally.  The conclusion
of Lemma \ref{lem1} holds on $A_L$ with the lower bound $C_1L^{-\a}$ replaced 
by $C X_{L}$ and the upper bound $C_2L^\a$ replaced by $(CX_{L})^{-1}$,
with $C$ a constant and
$$
X_{L} = \prod_{x\in \Theta} \frac{\de_x}{\de_x + \lam_{x,x-1} + \lam_{x,x+1}},
$$
where, in the notation of the proof of Lemma \ref{lem1},
$\Theta = (S_L^+\sm \De)\cup(S_L^-\sm\Ga)$. 
Now, 
\be\label{eq:1009}
\ln X_{L} = 
-2 \sum_{x=0}^{K-1}  Z_x
-2 \sum_{x=L-K+1}^{L}  Z_x
\ee
where
$$
Z_x
= \ln\left(1+\frac{\lam_{x,x-1} + \lam_{x,x+1}}{\de_x}\right).
$$ 
The two summations in \eqref{eq:1009} are independent of one another, 
and each is the sum
of a $1$-dependent sequence of random variables. Also,
$$
Z_x \le \ln  \left(1+\frac{\lam_{x,x-1}}{\de_x}\right)
+ \ln\left(1+\frac{\lam_{x,x+1}}{\de_x}\right),
$$
so that, by \eqref{eq:1010} and the Minkowski inequality,
$$
\sqrt{P(Z_x^2)} \le 2\sqrt{P\Bigl(\bigl[\ln(1+(\Lambda/\Delta))\bigr]^2\Bigr)}
< \oo.
$$
By the central limit theorem for $1$-dependent
sequences (see, for example, Theorem 19.2.1 of \cite{IL}),
\be\label{eq:1012}
P(B_L^\rho) \to 1\qquad\text{ as } L \to\oo,
\ee
where $B_L^\rho = \{X_{L} \ge L^{-\rho}\}$ and $\rho\in(0,\oo)$ satisfies
\be\label{eq:1011}
\rho> 4P(Z_0).
\ee

Some changes
are necessary to the proof of Lemma
\ref{lem2}, reflecting the fact that the decay in \eqref{eq:1006}
is sub-exponential in time.  The circuit illustrated
in Figure \ref{fig:4} is generated by translation, 
discretisation, and reflection
of the Cartesian line $y=2x$. 
In the disordered setting, we work instead with the curve 
$y=e^{x}$, and we assume $\b > 5e^{m+\frac12 L}$.
We define two further events that depend on the environment.
Assume for simplicity that $m$ is even, write $k=\frac12 m$, and let
\begin{align*}
C_{L,m} &= 
   \bigcap_{x=0}^L \bigl\{D_x < \tfrac 12\min\{k+x,L+k-x\}\bigr\},\\
D_{L,m} &=
   \bigcap_{x=-k}^{L+k} \bigl\{D_x < \min\{m+x,L+m-x\}\bigr\}.
\end{align*}
In the current setting, \eqref{eq:2001}
becomes
$$
t_1 \le C_1 e^{-\frac 14 \g m}\quad\text{on the event} \quad C_{L,m},
$$
for some constant $C_1$ depending on $\g$. 
Similarly, \eqref{eq:2002} is replaced by
$$
t_2^2 \le C_2e^{-\frac 12 \g m} \quad\text{on the event} \quad D_{L,m}.
$$
An amended version of Lemma \ref{lem2} thus holds, so long as the event
$C_{L,m} \cap D_{L,m}$ occurs.

We estimate $P(C_{L,m}\cap D_{L,m})$ as follows.
First, since $P(D)<\oo$,
\be\label{eq:2003}
P(C_{L,m}) \ge 1- 2 \sum_{x=0}^{\lfloor\frac12 L\rfloor}
P(D_x \ge \tfrac12 (k+x)) 
\to 1\qquad\text{as } m\to\oo.
\ee
Similarly,
\be\label{eq:2004}
P(D_{L,m}) \ge 1- 2\sum_{x=-k}^{\lfloor\frac12 L\rfloor}
P(D_x \ge m+x)
\to 1\qquad\text{as } m\to\oo.
\ee

Suppose that $A_L\cap B_L^\rho \cap C_{L,m}\cap D_{L,m}$ occurs for 
some $\rho$ satisfying \eqref{eq:1011}.
The principal estimate \eqref{eq:rcbound} follows 
with $CL^{\a}$ replaced by $C L^{\rho}$ as above.
On the above event, the proof of Theorem \ref{entest} may be followed to
obtain the logarithmic decay of entanglement.
Note from \eqref{eq:1007} and \eqref{eq:1012}
that $P(A_L\cap B_L^\rho) \to 1$ as $L\to\oo$, and by
\eqref{eq:2003}--\eqref{eq:2004} that $P(C_{L,m}\cap D_{L,m})
\to 1$ as $m\to\oo$.

\begin{proof}[Proof of Theorem \ref{thm:klein}]
This is essentially Theorem 1.1 of \cite{Klein1} 
with $d=1$, subject to two differences:
the right side of \eqref{eq:1006} is expressed differently in \cite{Klein1},
and the condition on $\b$ is different.
The present statement is obtained as follows from the proof of \cite{Klein1}, using the
notation of that proof. With $\b$ satisfying \eqref{eq:beta} and $\a=1+\sqrt 2$, we pick
$p>2\a$ and $\nu=q^{-1}$ satisfying (3.3) of \cite{Klein1}.
Let $K_x$ denote the minimal $k_1$ in the
second paragraph of the proof of Theorem 3.3 of \cite{Klein1}. As there,
$$
P(K_x > r) \le \frac c{L_r^{p-\a}},\qquad r \ge 1,
$$
where  $c$ is a constant, and $(L_r:r\ge 1)$ is
a sequence of positive reals given by $L_r=L^{\a^r}$ for some 
large $L$. Let $D_x = bL_{K_x}$. Inequality
\eqref{eq:1006} holds by the argument of \cite{Klein1}.
Furthermore, for $\eta>1$,
$P(D_x^{\eta})$ has the same order as
\be
b^{\eta} \sum_{a = 0}^\oo a^{\eta-1} P(L_{K_x} > a)
\le b^{\eta} \sum_{r= 0}^\oo L^{\eta\a^{r+1}} \cdot\frac 1{L^{(p-\a)\a^r}},
\ee
which is finite whenever $\eta-1$ is small and positive. 
\end{proof}

\bibliography{qent}
\bibliographystyle{plain}

\end{document}